        \renewcommand\contentsname{\centerline{\bf Table of Contents}}
\newtheorem{Def}{Definition}
\newtheorem{theorem}{Theorem}
\newtheorem{Cor}{Corollary}
\newtheorem{Prop}{Proposition}
\newtheorem{lemma}{Lemma}
\title{ENTANGLEMENT-ASSISTED CODING THEORY.}
\author{Min-Hsiu Hsieh}
\date{August 2008}
\def\cC{{\cal C}}
\def\cD{{\cal D}}
\def\cE{{\cal E}}
\def\cG{{\cal G}}
\def\cH{{\cal H}}
\def\cL{{\cal L}}
\def\cN{{\cal N}}
\def\cS{{\cal S}}
\def\cU{{\cal U}}
\def\cV{{\cal V}}
\def\cZ{{\cal Z}}
\newcommand{\bA}{{\mathbf A}}
\newcommand{\bE}{{\mathbf E}}
\newcommand{\bH}{{\mathbf H}}
\newcommand{\bS}{{\mathbf S}}
\newcommand{\bT}{{\mathbf T}}
\newcommand{\ba}{{\mathbf a}}
\newcommand{\bb}{{\mathbf b}}
\newcommand{\bc}{{\mathbf c}}
\newcommand{\bd}{{\mathbf d}}
\newcommand{\be}{{\mathbf e}}
\newcommand{\bg}{{\mathbf g}}
\newcommand{\bh}{{\mathbf h}}
\newcommand{\bn}{{\mathbf n}}
\newcommand{\br}{{\mathbf r}}
\newcommand{\bs}{{\mathbf s}}
\newcommand{\bu}{{\mathbf u}}
\newcommand{\bv}{{\mathbf v}}
\newcommand{\bw}{{\mathbf w}}
\newcommand{\bx}{{\mathbf x}}
\newcommand{\by}{{\mathbf y}}
\newcommand{\bz}{{\mathbf z}}
\def\bbC{\mathbb{C}}
\def\bbF{\mathbb{F}}
\def\bbZ{\mathbb{Z}}
\def\fM{\mathfrak{M}}
\def\fN{\mathfrak{N}}
\def\yM{\textswab{M}}
\def\yN{\textswab{N}}
\DeclareMathOperator{\id}{id}
 \DeclareMathOperator{\tr}{Tr}
 \DeclareMathOperator{\wt}{wt}
\newcommand{\bra}[1]{\langle #1 |}
\newcommand{\ket}[1]{| #1 \rangle}
\newcommand{\proj}[1]{| #1 \rangle\!\langle #1 |}
\def\T{{{T}}}
\def\b0{\mathbf{0}}
\newcommand{\inner}[2]{\langle{#1},{#2}\rangle}
\def\Csp{C_{\text{sp}}}
\def\Hsp{H_{\text{sp}}}
\def\Cea{\cC^{\text{EA}}}
\def\Cop{\cC^{\text{OP}}}
\def\Ceao{\cC^{\text{EAO}}}
\newcommand{\iso}[1] {{\rm iso} ( #1)}
\newcommand{\symp}[1] {{\rm symp} ( #1)}
\newcommand{\spann}[1] {{\rm span} \{ #1 \}}
\let\l@figureOLD \l@figure
\renewcommand{\l@figure}{\vspace{\baselineskip}\l@figureOLD}
\let\l@tableOLD \l@table
\renewcommand{\l@table}{\vspace{\baselineskip}\l@tableOLD}
\begin{document}

\renewcommand\contentsname{Table of Contents}

\maketitle

\begin{preface}

\begin{singlespace}

\newcommand*\oldhss{}
\let\oldhss\hss
\renewcommand*\hss{\oldhss\normalfont}
\tableofcontents
\let\hss\oldhss

\newpage

\addcontentsline{toc}{chapter}{List of Figures}

\addtocontents{lof}{\vspace*{-\baselineskip}}

\listoffigures
\newpage

\addcontentsline{toc}{chapter}{List of Tables}

\addtocontents{lot}{\vspace*{-\baselineskip}}

\listoftables
\newpage

\end{singlespace}

\end{preface}

\chapter*{Chapter 1:  \hspace{1pt} Overview}%
\addcontentsline{toc}{chapter}{Chapter 1:\hspace{0.15cm} Overview}%

The theory of \emph{quantum mechanics}, founded in the early 1920s,
ended the turmoil caused by the \emph{classical physics} that
predicted various absurd results such as electrons spiraling
inexorably into the atom nucleus. Though the mathematical framework
of quantum mechanics is simple, even geniuses like Albert Einstein
found it counter-intuitive. Generations of physicists since put a
lot of effort to sharpen our intuition about quantum mechanics, and
make it more transparent to normal human minds. Several fundamental
results discovered later on, such as the famous \emph{no-cloning
theorem} \cite{WZ82} that denies the possibility of using quantum
effects to signal faster than light, help us better understand
quantum mechanics.

Research on quantum mechanics evolved into a interdisciplinary
science due to several successful applications of quantum effects on
classical computation and communication problems in 1990s. Among
them, Shor proposed a quantum algorithm for the enormously important
problem \cite{Shor97} --- the problem of finding the prime factors
of an integer --- showing exponential speed-up over the best known
classical algorithm. This result not only attracted broad interest
because this problem is believed to have no efficient solution on
classical computers, but also provided strong evidence that quantum
computers are more powerful than classical computers.

However, the power of quantum computation and communication over
classical computation and communication comes from implementing
\emph{entangled} quantum states that are easily spoilt by their
vulnerability to errors. Namely, the destructive interference of the
omnipresent environment leads to an exponential loss of the
probability that the computation runs in the desired way. Up to that
point, there was a widespread belief that decoherence
--- environmental noise --- would doom any chance of building large scale
quantum computers or quantum communication protocols. The equally
widespread belief that any analogue of classical error correction
was impossible in quantum mechanics due to the famous no cloning
theorem produced an even stronger pessimistic atmosphere in
developing quantum computers.

Luckily, the pessimistic atmosphere did not last long. One of the
most important discoveries in quantum information science, the
existence of quantum error-correcting codes (QECCs), defied those
expectations. The first quantum error-correcting code, considered as
a quantum analogue of the classical repetition code, was proposed by
Shor in 1995 \cite{Shor95}. The theory of quantum error correction
quickly became a popular research topic. The quantum
error-correcting conditions were proved independently by Bennett,
DiVincenzo, Smolin and Wootters \cite{BDSW96}, and by Knill and
Laflamme \cite{KL97}. The best quantum code that encodes one-qubit
information, the five-qubit code, was discovered by Laflamme,
Miquel, Paz, and Zurek \cite{LMPZ96}, and independently by
\cite{BDSW96}.

The development of quantum error-correcting theory then became
systematic. A construction of Calderbank, Shor, and Steane
\cite{CS96,Ste96} showed that it was possible to construct quantum
codes from classical linear codes --- the CSS codes --- thereby
drawing on the well-studied theory of classical error correction.
Furthermore, Gottesman invented the stabilizer formalism
\cite{Got96}, and used it to define stabilizer codes. In this view,
quantum error-correcting codes are simultaneous eigenspaces of a
group of commuting operators, the stabilizer. Independently,
Calderbank, Rain, Shor, and Sloane \cite{CRSS97} proposed a similar
idea to define quantum codes based on orthogonal geometry in
classical coding theory. This result connected quantum codes to
classical quaternary codes \cite{CRSS98}. The theory of quantum
error correction developed so far is called \emph{standard} quantum
error correction.

Important as these results were, they fell short of doing everything
that one might wish.  The connection between classical codes and
quantum codes was not universal.  Rather, only classical codes which
satisfied a \emph{self-orthogonality constraint} could be used to
construct quantum codes.  While this constraint was not too
difficult to satisfy for relatively small codes, it is a substantial
barrier to the use of highly efficient modern codes, such as Turbo
and Low-Density Parity Check (LDPC) codes, in quantum information
theory.  These codes are capable of achieving the classical
capacity; but the difficulty of constructing self-orthogonal
versions of them has made progress toward finding quantum versions
very slow.

These problems can be overcome with pre-existing entanglement.
Entanglement plays a central role in almost every quantum
computation and communication task. It enables the teleportation of
quantum states without physically sending quantum
systems\cite{BBCJPW93}; it doubles the capacity of quantum channels
for sending classical information\cite{BW92}; it is known to be
necessary for the power of quantum computation\cite{BCD02,JL03}.
Furthermore, descriptions in quantum information theory are often
simplified by the assumption that pre-existing entanglement is
available.

In the thesis, we show how shared entanglement provides a simpler
and more fundamental theory of quantum error correction, and at the
same time greatly generalize the existing theory of quantum error
correction. If the CSS construction for quantum codes is applied to
a classical code which is not self-orthogonal, the resulting
``stabilizer" group is not commuting, and thus has no code space. We
are able to avoid this problem by making use of pre-existing
entanglement. This noncommuting stabilizer group can be embedded in
a larger space, which makes the group commute, and allows a code
space to be defined.  Moreover, this construction can be applied to
\emph{any} classical quaternary code, not just self-orthogonal ones.
The existing theory of quantum error-correcting codes thus becomes a
special case of our theory: self-orthogonal classical codes give
rise to standard quantum codes, while non-self-orthogonal classical
codes give rise to entanglement-assisted codes.

Besides the entanglement-assisted formalism \cite{BDH06,BDH06IEEE}
we proposed in this thesis, there has been one other major
breakthrough in quantum error correction theory: the discovery of
operator quantum error-correcting codes (OQECCs)
\cite{AKS06OQECC,Bacon05a,bacon-2006,KS06a,KLP05,KS06,NP05,DP05}, or
subsystem codes. Instead of encoding quantum information into a
subspace, OQECCs encode quantum information into a subsystem of the
subspace. These provide a general theory which combines passive
error-avoiding schemes, such as decoherence-free subspaces
\cite{ZR97,LCW98} and noiseless subsystems
\cite{KLV00,KBLW01,Zan00,ZL03}, with conventional (active) quantum
error correction. OQECCs do not lead to new codes, but instead
provide a new kind of decoding procedure: it is not necessary to
actively correct all errors, but rather only to perform correction
modulo the subsystem structure. One potential benefit of the new
decoding procedure is to improve the threshold of fault-tolerant
quantum computation \cite{Bacon05a}.

The other major contribution of this thesis is the development of
the unifying formalism that unifies these two extensions of standard
QECCs: the operator quantum error-correcting codes (OQECCs), and the
entanglement-assisted quantum error-correcting codes (EAQECCs).
Furthermore, our formalism retains the advantages of both
entanglement-assisted and operator quantum error correction. On one
hand, OQECCs provide a general theory which combines passive
error-avoiding schemes with standard quantum error correction. On
the other hand, EAQECCs provide a general theory which links any
classical quaternary code, not just self-orthogonal ones, to a
quantum code. In addition to presenting our formal theory, we have
given several examples of code construction. These examples
demonstrate that our formalism can be used to construct quantum
codes tailored to the needs of particular applications.

Because classical LDPC codes have such high performance ---
approaching the channel capacity in the limit of large block size
--- there has been considerable interest in finding quantum versions
of these codes. However, quantum low-density parity-check codes
\cite{HI07QLDPC,MMM04QLDPC,COT05QLDPC,PC08QLDPC} are far less
studied than their classical counterparts. The main obstacle comes
from the \emph{dual-containing} constraint of the classical codes
that are used to construct the corresponding quantum codes. The
second obstacle comes from the bad performance of the iterative
decoding algorithm such as the famous sum-product algorithm (SPA).
Though the SPA decoding can be directly used to decode the quantum
errors, its performance is severely limited by the many 4-cycles,
which are usually the by-product of the dual-containing property, in
the standard quantum LDPC codes \cite{MMM04QLDPC}.

In the last part of the thesis, we will show that, with the
entanglement-assisted formalism \cite{BDH06,BDH06IEEE}, these two
obstacles of standard quantum LDPC codes can be overcome. By
allowing the use of pre-shared entanglement between senders and
receivers, the dual-containing constraint can be removed.
Constructing quantum LDPC codes from classical LDPC codes becomes
transparent. That is, arbitrary classical quaternary codes can be
used to construct quantum codes via the \emph{generalized CSS
construction} \cite{BDH06}. Furthermore, we can easily construct
quantum LDPC codes from classical LDPC codes with girth at least 6.
We make use of classical quasi-cyclic LDPC codes in our
construction, and show that given similar \emph{net yield} these
quantum LDPC codes perform better than the standard quantum LDPC
codes by numerical simulation.

This thesis is organized as follows. We give various background
materials in chapter 2. In chapter 3, we introduce standard QECCs
using the canonical code method and stabilizer formalism. In chapter
4, we present our first result: the entanglement-assisted formalism.
In chapter 5, we introduce operator quantum error-correcting codes.
In chapter 6, we present our second result: the
entanglement-assisted operator formalism. Finally, we show how to
use the entanglement-assisted formalism to construct quantum LDPC
codes with better performance in chapter 7. Notice that we
explicitly assume a communication scenario throughout the thesis.
That is, noise is modeled as a quantum channel, and it only happens
in the channel. Two parties involved in the information processing
are called sender and receiver, respectively, and their operations
on the quantum states are assumed to be noise-free.

\chapter*{Chapter 2: \hspace{1pt} Background knowledge}%

\addcontentsline{toc}{chapter}{Chapter 2:\hspace{0.15cm} Background
knowledge}%

\section*{2.1 \hspace{2pt} Single qubit Pauli group}

\addcontentsline{toc}{section}{2.1 \hspace{0.15cm} Single qubit
Pauli group}

The set of \emph{Pauli matrices} over a two-dimensional Hilbert
space $\cH_2$ is defined as
$$
I = \left[\begin{array}{cc} 1 & 0 \\ 0 & 1 \end{array}\right], \quad
X = \left[\begin{array}{cc} 0 & 1 \\ 1 & 0 \end{array}\right], \quad
$$
$$
Y = \left[\begin{array}{cc} 0 & -i \\ i & 0
\end{array}\right],  \quad
Z = \left[\begin{array}{cc} 1 & 0 \\ 0 & -1 \end{array}\right].
$$
The Pauli matrices are Hermitian unitary matrices with eigenvalues
belonging to the set $\{1, -1 \}$. The multiplication table of these
matrices is given by:
$$
\begin{array}{|c|cccc|}  \hline
\times  & I & X & Y & Z \\  \hline
I & I & X & Y & Z \\
X & X & I & iZ & -iY \\
Y & Y & -iZ & I & iX \\
Z & Z & iY & -iX & I \\ \hline
\end{array}
$$
Observe that the Pauli matrices either commute or anticommute. Let
$[S] = \{\beta S \mid  \beta \in \bbC, |\beta| = 1\}$ be the
equivalence class of matrices equal to $S$ up to a phase
factor.\footnote{It makes good physical sense to neglect this
overall phase, which has no observable consequence.} Let $\cG$ be
the group generated by the set of Pauli matrices $\{I,X,Y,Z\}$ with
all possible phases, then the set $[\cG] = \{ [I], [X], [Y], [Z] \}$
is readily seen to form a commutative group under the multiplication
operation defined by $[S] [T] = [ST]$. We called $[\cG]$ the Pauli
group.

We are interested in relating the Pauli group to the additive group
$(\bbZ_2)^2 = \{00, 01, 10, 11\}$ of binary words of length $2$
described by the table:
$$
\begin{array}{|c|cccc|} \hline
+  & 00 & 01 & 11 & 10 \\  \hline
00 & 00 & 01 & 11 & 10 \\
01 & 01 & 00 & 10 & 11 \\
11 & 11 & 10 & 00 & 01 \\
10 & 10 & 11 & 01 & 00 \\ \hline
\end{array}
$$
This group is also a two-dimensional vector space over the field
$\bbZ_2$. A bilinear form can be defined over this vector space,
called the \emph{symplectic form} or \emph{symplectic
product}\footnote{Strictly speaking it is not an inner product.}
$\odot: (\bbZ_2)^2 \times (\bbZ_2)^2 \rightarrow \bbZ_2$, given by
the table
$$
\begin{array}{|c|cccc|} \hline
\odot  & 00 & 01 & 11 & 10 \\  \hline
00 & 0 & 0 & 0 & 0 \\
01 & 0 & 0 & 1 & 1 \\
11 & 0 & 1 & 0 & 1 \\
10 & 0 & 1 & 1 & 0 \\ \hline
\end{array}
$$
In what follows we will often write elements of $(\bbZ_2)^2$ as $u =
(z|x)$, with $z,x \in \bbZ_2$. For instance, $01$ becomes $(0|1)$.
For $u = (z|x), v = (z'|x') \in (\bbZ_2)^2$ the symplectic product
is equivalently defined by
$$
u \odot v = z x' + z' x.
$$
Define the map $N: (\bbZ_2)^2  \to  \cG$ by the following table:
$$
\begin{array}{|c|c|} \hline
(\bbZ_2)^2 & \cG\\  \hline
00 & I \\
01 & X \\
11 & Y \\
10 & Z \\ \hline
\end{array}
$$
This map is defined in such a way that $N_{(z|x)}$ and $Z^{z} X^{x}$
are equal up to a phase factor, i.e.
$$[N_{(z|x)}] = [ Z^{z} X^{x} ].$$
We make two key observations
\begin{enumerate}
\item  The map $[N] :  (\bbZ_2)^2  \to  [\cG]$ induced by $N$
is an isomorphism:
$$
[N_u] [N_v] = [N_{u + v}].
$$
\item The commutation relations of the Pauli matrices are captured by the
symplectic product
$$
N_u N_v = (-1)^{u \odot v} N_v N_u.
$$
\end{enumerate}
Both properties are readily verified from the tables.

\section*{2.2 \hspace{2pt} Multi-qubit Pauli group}
\addcontentsline{toc}{section}{2.2 \hspace{0.15cm} Multi-qubit Pauli
group}

Consider an $n$-qubit system corresponding to the tensor product
Hilbert space $\cH_2^{\otimes n}$. Define an $n$-qubit Pauli matrix
$\bS$ to be of the form $\bS = S_1\otimes S_2\otimes \cdots\otimes
S_n$, where $S_j \in \cG$. Let $\cG^n$ be the group of all $4^n$
$n$-qubit Pauli matrices with all possible phases. Define as before
the equivalence class $[\bS] = \{\beta \bS \mid \beta \in \bbC,
|\beta| = 1\}$. Then
$$
[\bS] [\bT] = [S_1 T_1]\otimes[S_2 T_2]\otimes \cdots \otimes[S_n
T_n] = [\bS \bT].
$$
Thus the set $[\cG^n] = \{ [\bS]: \bS \in \cG^n \}$ is a commutative
multiplicative group, and is called the $n$-fold Pauli Group.

Now consider the group/vector space $(\bbZ_2)^{2n}$ of binary
vectors of length $2n$. Its elements may be written as $\bu = (\bz|
\bx)$, $\bz = z_1 \dots z_n \in (\bbZ_2)^n$, $\bx = x_1 \dots x_n
\in (\bbZ_2)^n$. We shall think of $\bu$, $\bz$ and $\bx$ as row
vectors. The symplectic product of $\bu = (\bz|\bx)$ and $\bv =
(\bz'|\bx')$ is given by
$$
\bu \odot \bv^{\T} = \bz \, {\bx'}^{\T} + \bz' \, \bx^{\T}.
$$
The right hand side are binary inner products and the superscript
$\T$ denotes the transpose. This should be thought of as a kind of
matrix multiplication of a row vector and a column vector. We use
$\bu \odot \bv^{\T}$ rather than the more standard $\bu \, \bv^\T$
to emphasize that the symplectic form is used rather than the binary
inner product. Equivalently,
$$
\bu \odot \bv^\T = \sum_i u_i \odot v_i
$$
where $u_i = (z_i|x_i), v_i = (z'_i|x'_i)$ and this sum represents
Boolean addition. Observe that if $ {\bu} \odot {\bv}^\T = 0$, these
two vectors are ``orthogonal'' to each other with respect to the
symplectic inner product.

The map $N: (\bbZ_2)^{2n} \rightarrow \cG^n $ is now defined as
$$
N_\bu = N_{u_1} \otimes \dots \otimes  N_{u_n}.
$$
Writing
$$
X^{\bx} = X^{x_1} \otimes  \dots  \otimes  X^{x_n},
$$
$$
Z^{\bz} = Z^{z_1} \otimes  \dots  \otimes  Z^{z_n},
$$
as in the single qubit case, we have
$$[N_{(\bz|\bx)}] = [ Z^{\bz} X^{\bx} ].$$
The two observations made for the single qubit case also hold:
\begin{enumerate}
\item  The map $[N] :  (\bbZ_2)^{2n}  \to  [\cG^n]$ induced by $N$
is an isomorphism:
\begin{equation}%
\label{eq:primi}%
[N_\bu] [N_\bv] = [N_{\bu + \bv}].
\end{equation}%
Consequently, if $\{ \bu_1, \dots, \bu_m \}$ is a linearly
independent set then the elements of the Pauli group subset $\{
[N_{\bu_1}], \dots, [N_{\bu_m}] \}$ are independent in the sense
that no element can be written as a product of others.

\item The commutation relations of the $n$-qubit Pauli matrices are captured by the
symplectic product
\begin{equation}%
\label{eq:seconda}%
N_\bu N_\bv = (-1)^{\bu \odot \bv^\T} N_\bv N_\bu.
\end{equation}%
\end{enumerate}
\bigskip
We define the \emph{weight} of a Pauli operator $N_\bu$,
$\wt(N_\bu)$, to be the number of single-qubit Pauli matrices in
$N_\bu$ not equal to the identity $I$. Define the \emph{weight} of a
vector $\bu  = (\bz| \bx) \in (\bbZ_2)^{2n}$ by $\wt_{\rm sp}(\bu) =
\wt_2(\bz \vee \bx)$. Here $\vee$ denotes the bitwise logical
``or'', and $\wt_2(\by)$ is the number of non-zero bits in $\by \in
(\bbZ_2)^{n}$. It is easy to verify that
$$\wt(N_{\bu})=\wt_{\rm sp}(\bu).$$

\section*{2.3 \hspace{2pt} Properties of the symplectic form}
\addcontentsline{toc}{section}{2.3 \hspace{0.15cm} Properties of the
symplectic form} \label{sympo}


A subspace $V$ of $(\bbZ_2)^{2n}$ is called \emph{symplectic}
\cite{Silva01} if there is no $\bv \in V$ such that
\begin{equation}%
\bv \odot \bu^\T = 0,  \,\,\, \forall \bu \in V. \label{eq:degen}
\end{equation}%
$(\bbZ_2)^{2n}$ is itself a symplectic subspace. Consider the
standard basis for $(\bbZ_2)^{2n}$, consisting of $\bg_i = (\be_i
|\b0)$ and $\bh_i = (\b0|\be_i)$ for $i = 1, \dots ,n$, where $\be_i
= (0,\dots, 0,1,0, \dots, 0)$ [$1$ in the $i$th position] are the
standard basis vectors of $(\bbZ_2)^n$. Observe that
\begin{eqnarray}
\bg_i \odot \bg_j^\T = 0, &  {\rm for \,\,  all \,\, } i,j \\
\bh_i \odot \bh_j^\T = 0, &  {\rm for \,\, all \,\, } i,j \\
\bg_i \odot \bh_j^\T = 0, &  {\rm for \,\,  all \,\, } i\neq j \\
\bg_i \odot \bh_i^\T = 1, &  {\rm for \,\,  all \,\, } i.
\end{eqnarray}
Thus, the basis vectors come in $n$ \emph{hyperbolic pairs} $(\bg_i,
\bh_i )$ such that only the symplectic product between hyperbolic
partners is nonzero. The matrix $J = [\bg_i \odot \bh_j^\T]$
defining the symplectic product with respect to this basis is given
by
\begin{equation}%
\label{eq:syprma}%
J = \left(\begin{array}{cc}
0_{n \times n} & I_{n \times n} \\
 I_{n \times n} & 0_{n \times n}
\end{array} \right),
\end{equation}%
where $I_{n \times n}$ and $0_{n \times n}$ are the $n \times n$
identity and zero matrices, respectively. A basis for
$(\bbZ_2)^{2n}$ whose symplectic product matrix $J$ is given by
(\ref{eq:syprma}) is called a \emph{symplectic basis}. In the Pauli
picture, the hyperbolic pairs $(\bg_i, \bh_i)$ correspond to
$(Z^{\be_i},X^{\be_i})$, and are sometimes expressed as $(Z_i,X_i)$,
-- the anticommuting $Z$ and $X$ Pauli matrices acting on the $i$th
qubit.

In contrast, a subspace $V$ of $(\bbZ_2)^{2n}$ is called
\emph{isotropic} if
(\ref{eq:degen}) holds for \emph{all} $\bv \in V$.
The largest isotropic subspace of $(\bbZ_2)^{2n}$ is
$n$-dimensional. The span of  the $\bg_i$, $i = 1, \dots ,n$, is an
example of a subspace saturating this bound.

A general subspace of $(\bbZ_2)^{2n}$ is neither symplectic nor
isotropic. The following theorem, stated in \cite{Silva01} and
rediscovered in Pauli language in \cite{FCY04}, says that an
arbitrary subspace $V$ can be decomposed as a direct sum of a
symplectic part and an isotropic part. Here, we prove this theorem
constructively, using a version of the Gram-Schmidt procedure.

\begin{theorem}%
\label{GSsymp}%
Let $V$ be an $m$-dimensional subspace of $(\bbZ_2)^{2n}$. Then
there exists a symplectic basis of $(\bbZ_2)^{2n}$ consisting of
hyperbolic pairs $(\bu_i, \bv_i)$, $i = 1, \dots ,n$, such that
$\{\bu_1, \dots, \bu_{c+ \ell}, \bv_1, \dots, \bv_{c}\}$ is a basis
for $V$, for some $c, \ell \geq 0$ with $2c + \ell = m$.

Equivalently,
$$
V =  \symp V \oplus \iso V
$$
where $\symp V  = \spann { \bu_1, \dots,  \bu_c, \bv_1, \dots,
\bv_{c} }$ is symplectic and $\iso V  = \spann { \bu_{c + 1}, \dots,
\bu_{c + \ell} }$ is isotropic.
\end{theorem}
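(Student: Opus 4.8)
The plan is to prove the decomposition by a symplectic version of Gram--Schmidt, set up as an induction on $n$, the half-dimension of the ambient space $(\bbZ_2)^{2n}$; the base case $n=0$ is vacuous. For the inductive step the argument splits on whether or not $V$ is isotropic.

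\emph{Case 1: $V$ is not isotropic.} Then by the very definition of isotropic there exist $\bu_1,\bv_1\in V$ with $\bu_1\odot\bv_1^\T=1$, so $(\bu_1,\bv_1)$ is a hyperbolic pair lying inside $V$. Let $P=\spann{\bu_1,\bv_1}$. The form restricted to $P$ has matrix $\left(\begin{smallmatrix}0&1\\1&0\end{smallmatrix}\right)$ and is therefore nondegenerate, so I would first establish the standard splitting $V=P\oplus(V\cap P^\perp)$: the radical of the form on $P$ is $0$, and the linear map $\bw\mapsto(\bw\odot\bu_1^\T,\ \bw\odot\bv_1^\T)$ from $V$ to $(\bbZ_2)^2$ is onto (it sends $\bv_1\mapsto(1,0)$ and $\bu_1\mapsto(0,1)$) with kernel $V\cap P^\perp$, so $\dim(V\cap P^\perp)=m-2$. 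Now $W:=V\cap P^\perp$ is an $(m-2)$-dimensional subspace of the $(2n-2)$-dimensional symplectic space $P^\perp$; by the inductive hypothesis $P^\perp$ has a symplectic basis $(\bu_i,\bv_i)$, $i=2,\dots,n$, whose first $2c'+\ell'=m-2$ members form the prescribed basis of $W$. Adjoining $(\bu_1,\bv_1)$ yields a symplectic basis of $(\bbZ_2)^{2n}$, since $P\oplus P^\perp=(\bbZ_2)^{2n}$ and $P\perp P^\perp$. After permuting hyperbolic pairs --- which preserves the symplectic-basis property --- so that the pairs contained entirely in $V$ come first, put $c=c'+1$ and $\ell=\ell'$; then $\{\bu_1,\dots,\bu_{c+\ell},\bv_1,\dots,\bv_c\}=\{\bu_1,\bv_1\}\cup(\text{basis of }W)$ is a basis of $V$ with $2c+\ell=m$.

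\emph{Case 2: $V$ is isotropic.} Choose any nonzero $\bu_1\in V$; by nondegeneracy of the ambient form there is $\bv_1$ with $\bu_1\odot\bv_1^\T=1$, and necessarily $\bv_1\notin V$. With $P=\spann{\bu_1,\bv_1}$, the map $\bw\mapsto\bw+(\bw\odot\bv_1^\T)\bu_1$ sends $V$ into $V\cap P^\perp$ (using $\bw\odot\bu_1^\T=0$), has kernel $\spann{\bu_1}$, and exhibits $V=\spann{\bu_1}\oplus\tilde V$ with $\tilde V:=V\cap P^\perp$ of dimension $m-1$. Since $\tilde V$ is isotropic in the symplectic space $P^\perp$, induction (with $c'=0$) gives a symplectic basis $(\bu_i,\bv_i)$, $i=2,\dots,n$, of $P^\perp$ with $\{\bu_2,\dots,\bu_m\}$ a basis of $\tilde V$; then $\{\bu_1,\dots,\bu_m\}$ is a basis of $V$, and we take $c=0$, $\ell=m$.

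Finally, for the ``equivalently'' clause: $\symp V=\spann{\bu_1,\dots,\bu_c,\bv_1,\dots,\bv_c}$ has symplectic-product matrix $J$ of size $2c$, hence is symplectic; $\iso V=\spann{\bu_{c+1},\dots,\bu_{c+\ell}}$ consists of vectors from a symplectic basis that are pairwise $\odot$-orthogonal, hence is isotropic; and the two subspaces are mutually $\odot$-orthogonal with jointly independent bases, so $V=\symp V\oplus\iso V$. I expect the main obstacle to be bookkeeping rather than deep content: giving a clean argument for the splitting $V=P\oplus(V\cap P^\perp)$ over $\bbZ_2$, and arranging the inductive output so that the hyperbolic pairs sitting entirely inside $V$, then the ``half-inside'' vectors $\bu_{c+1},\dots,\bu_{c+\ell}$, then the remaining pairs appear in exactly the order the statement demands. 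Characteristic $2$ causes no trouble here, since $\odot$ is automatically alternating and no auxiliary quadratic form is needed.
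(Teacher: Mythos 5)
Your argument is correct, and at its core it is the same symplectic Gram--Schmidt as the paper's: the map $\bw\mapsto\bw-(\bv_1\odot\bw^\T)\bu_1-(\bu_1\odot\bw^\T)\bv_1$ you use to project onto $P^\perp$ is literally the paper's round-by-round update. The difference is organizational, and it is a real one. The paper fixes a basis of $V$, extends it to a basis of $(\bbZ_2)^{2n}$, and runs an explicit $n$-round algorithm, maintaining loop invariants (its conditions (a)--(c)) and index sets recording which half of each hyperbolic pair lands in $V$; it never isolates a splitting lemma. You instead prove the hyperbolic-plane splitting $V=P\oplus(V\cap P^\perp)$ (respectively $V=\spann{\bu_1}\oplus(V\cap P^\perp)$ when $V$ is isotropic) and induct on the half-dimension, with a case split on whether $V$ is isotropic. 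Your version is shorter to verify and makes the structure of the answer (first exhaust the symplectic part of $V$, then its radical) conceptually transparent; the paper's version is directly implementable as pseudocode and produces the symplectic basis explicitly from whatever generators of $V$ one starts with, which is what its later code constructions actually need. Two small points you should make explicit to close the bookkeeping you flagged: the inductive hypothesis is invoked for $P^\perp$, which is a $(2n-2)$-dimensional nondegenerate alternating space but not literally $(\bbZ_2)^{2(n-1)}$ with the standard form, so the induction must be phrased for arbitrary symplectic $\bbZ_2$-spaces (and you should note that $\odot$ restricted to $P^\perp$ is nondegenerate, which follows from $(\bbZ_2)^{2n}=P\oplus P^\perp$); and in Case 2 the assertion $c'=0$ deserves its one-line justification, namely that a hyperbolic pair inside the isotropic space $\tilde V$ would contradict isotropy.
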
%

\begin{proof}
Pick an arbitrary basis $\{\bw_1, \dots, \bw_m \}$ for $V$ and
extend it to a basis $\{\bw_1, \dots, \bw_{2n} \}$ for
$(\bbZ_2)^{2n}$. The procedure consists of $n$ rounds. In each round
a new hyperbolic pair $(\bu_i, \bv_i)$ is generated; the index $i$
is added to the set $\cU$ (respectively, $\cV$) if  $\bu_i \in V$
($\bv_i \in V$).

Initially set $i =1$, $m' = m$, and $\cU = \cV = \emptyset$.
 The $i$th round reads as follows.
\begin{enumerate}
\item We start with vectors $\bw_1, \dots, \bw_{2(n-i+1)}$,
and $\bu_1, \dots \bu_{i-1},\bv_1, \dots \bv_{i-1}$,
such that
\begin{enumerate}
\item $\bw_1, \dots, \bw_{2(n-i+1)},$
$\bu_1, \dots \bu_{i-1},$$\bv_1, \dots \bv_{i-1}$ is a basis for
$(\bbZ_2)^{2n}$,
\item each of $\bu_1, \dots \bu_{i-1},\bv_1, \dots \bv_{i-1}$
has vanishing symplectic product with each of $\bw_1, \dots,
\bw_{2(n-i+1)}$,
\item $V = {\rm span}\{ \bw_j: 1 \leq j \leq m'  \}
\oplus {\rm span}\{ \bu_j: j \in \cU \} \oplus {\rm span}\{ \bv_j: j
\in \cV \}$.
\end{enumerate}
These conditions are satisfied for $i=1$ where we begin with vectors
$\bw_1, \dots, \bw_{2n}$. In this case, we implicitly assume that
($\bu_0$, $\bv_0$) is the empty set.

\item Define $\bu_i = \bw_1$.  If $m' \geq 1$ then and add $i$ to $\cU$.
Let $j\geq 2$ be the smallest index for which $\bw_1 \odot \bw_j^\T
= 1$. Such a $j$ exists because of (a), (b) and the fact that there
exists a $\bw \in (\bbZ_2)^{2n}$ such that ${\bu_i} \odot {\bw}^\T =
1$.

Set $\bv_i = \bw_j$.
\item If $j \leq m'$:

This means that there is a  hyperbolic  partner of $\bu_i$ in $V$.
 Add $i$ to $\cV$;
swap $\bw_j$ with $\bw_2$;
for $k = 3, \dots, 2(n-i+1)$ 
perform
$$
\bw'_{k-2} := \bw_k - (\bv_i \odot \bw_k^\T) \bu_i -  (\bu_i \odot
\bw_k^\T) \bv_i,
$$
so that
\begin{equation}%
\bw'_{k-2} \odot  \bu_i^\T = \bw'_{k-2} \odot \bv_i^\T = 0 ;
\label{upravo1}%
\end{equation}%
set $m' := m' - 2$.

If $j > m'$:

This means that there is no hyperbolic partner of $\bu_i$ in $V$.
Swap $\bw_j$ with $\bw_{2(n-i+1)}$; for $k = 2, \dots,2(n-i)+1$
perform
$$
\bw'_{k-1} := \bw_k - (\bv_i \odot \bw_k^\T) \bu_i -  (\bu_i \odot
\bw_k^\T) \bv_i,
$$
so that
\begin{equation}%
\bw'_{k-1} \odot  \bu_i^\T = \bw'_{k-1} \odot \bv_i^\T = 0 ;
\label{upravo2}%
\end{equation}%
if $m' \geq 1$ then set $m' := m' - 1$.
\item Let $\bw_k := \bw'_k$ for $1 \leq k \leq 2(n - i)$.
We need to show that the conditions from item 1 are satisfied for
the next round ($i: = i + 1$). Condition (a) holds because
 $\{ \bu_i, \bv_i, \bw'_1, \dots \bw'_{2(n-i)}\}$
are related to the old $\{\bw_1, \dots \bw_{2(n-i+1)}\}$ by an
invertible linear transformation. Condition (b) follows from
(\ref{upravo1}) and (\ref{upravo2}). Regarding condition (c), if $m'
= 0$ then it holds because $\cU$ and $\cV$ did not change from the
previous round. Otherwise, consider the two cases in item 3. If $j
\leq m'$ then $\{ \bu_i, \bv_i, \bw'_1, \dots \bw'_{m'-2}\}$ are
related to the old $\{\bw_1, \dots \bw_{m'}\}$ by an invertible
linear transformation. If $j > m'$ then $\{ \bu_i, \bw'_1, \dots
\bw'_{m'-1}\}$ are related to the old $\{\bw_1, \dots \bw_{m'}\}$ by
an invertible linear transformation (the  $(\bu_i \odot \bw_k^\T)
\bv_i$ terms vanish for $1 \leq k \leq m'$ because there is no
hyperbolic partner of $\bu_i$ in $V$).

\end{enumerate}
At the end of the $i$th round, $0 \leq m' \leq 2(n-i)$. Thus $m' =
0$ after $n$ rounds and hence $V =  {\rm span}\{ \bu_j: j \in \cU \}
\oplus {\rm span}\{ \bv_j: j \in \cV \}$. The theorem follows by
suitably reordering the $(\bu_j,\bv_j)$.

\end{proof}

\noindent {\bf Remark}\,\, It is readily seen that the space
$\iso{V}$ is unique, given $V$. In contrast,  $\symp V$ is not. For
instance, replacing $\bv_1$ by $\bv_1' = \bv_1 +  \bu_{c + 1}$ in
the above definition of $\symp V$ does not change its symplectic
property.

A \emph{symplectomorphism} $\Upsilon:(\bbZ_2)^{2n} \rightarrow
(\bbZ_2)^{2n} $  is a linear isomorphism which preserves the
symplectic form, namely
\begin{equation}%
\Upsilon (\bu) \odot \Upsilon (\bv)^\T = \bu \odot \bv^\T.
\label{smor}%
\end{equation}%
The following theorem relates symplectomorphisms on $(\bbZ_2)^{2n}$
to unitary maps on $\cH_2^{\otimes n}$. It appears, for instance, in
\cite{BFG05}. For completeness, we give an independent proof here.

\begin{theorem} \label{thm2}
For any symplectomorphism  $\Upsilon$ on $(\bbZ_2)^{2n}$ there
exists a unitary map $U_\Upsilon$ on $\cH_2^{\otimes n}$ such that
for all $\bu \in (\bbZ_2)^{2n}$,
$$
[N_{\Upsilon (\bu)}] = [ U_\Upsilon N_{\bu} U_\Upsilon^{-1} ].
$$
\end{theorem}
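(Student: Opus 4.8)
The plan is to build $U_\Upsilon$ explicitly from the images under $\Upsilon$ of the standard symplectic basis $(\bg_i,\bh_i)$ of $(\bbZ_2)^{2n}$. Put $\bu_i:=\Upsilon(\bg_i)$ and $\bv_i:=\Upsilon(\bh_i)$ for $i=1,\dots,n$. Since $\Upsilon$ preserves the symplectic form, (\ref{smor}) gives $\bu_i\odot\bu_j^\T=\bv_i\odot\bv_j^\T=0$ and $\bu_i\odot\bv_j^\T=\delta_{ij}$, so $(\bu_i,\bv_i)$ is again a symplectic basis. By (\ref{eq:seconda}) the Pauli operators $\bar Z_i:=N_{\bu_i}$ and $\bar X_i:=N_{\bv_i}$ obey exactly the same commutation relations as $Z_i:=N_{\bg_i}$ and $X_i:=N_{\bh_i}$: the $\bar Z_i$ mutually commute, the $\bar X_i$ mutually commute, and $\bar Z_i\bar X_j=(-1)^{\delta_{ij}}\bar X_j\bar Z_i$; moreover all of these operators are Hermitian and square to $I$. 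It then suffices to find a unitary intertwining $(Z_i,X_i)$ with $(\bar Z_i,\bar X_i)$.

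Next I would fix the target ``vacuum''. Because $\{\bu_1,\dots,\bu_n\}$ is linearly independent and the $\bar Z_i$ commute, a trace computation of the projector $\prod_i\tfrac{I+\bar Z_i}{2}$ (its trace is $2^{-n}\sum_{S}\tr\prod_{i\in S}\bar Z_i=1$, since $\prod_{i\in S}\bar Z_i$ is a nonidentity, hence traceless, Pauli operator for $S\neq\emptyset$) shows the joint $(+1)$-eigenspace of the $\bar Z_i$ is one-dimensional; let $\ket{\psi_{\b0}}$ span it. Writing $\bar X^{\bx}:=\bar X_1^{x_1}\cdots\bar X_n^{x_n}$ for $\bx\in(\bbZ_2)^n$ (unambiguous since the $\bar X_i$ commute), I would define $U_\Upsilon\ket{\bx}:=\bar X^{\bx}\ket{\psi_{\b0}}$ on the computational basis. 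The crucial point is unitarity: each $\bar X^{\bx}\ket{\psi_{\b0}}$ is a unit vector, and for $\bx\neq\bx'$ the Hermitian operator $\bar X^{\bx}\bar X^{\bx'}=\bar X^{\bx+\bx'}$ anticommutes with $\bar Z_i$ whenever $x_i\neq x_i'$, hence carries $\ket{\psi_{\b0}}$ into the $(-1)$-eigenspace of $\bar Z_i$; therefore $\langle\psi_{\b0}|\bar X^{\bx}\bar X^{\bx'}|\psi_{\b0}\rangle=0$, so the $2^n$ vectors $\bar X^{\bx}\ket{\psi_{\b0}}$ are orthonormal and $U_\Upsilon$ is unitary.

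Then I would verify the conjugation identity on the generators and extend it by linearity. Evaluating on the basis $\{\bar X^{\bx}\ket{\psi_{\b0}}\}$: from $X_i\ket{\bx}=\ket{\bx+\be_i}$ and $\bar X_i\bar X^{\bx}=\bar X^{\bx+\be_i}$ one gets $U_\Upsilon X_i U_\Upsilon^{-1}=\bar X_i$, and from $Z_i\ket{\bx}=(-1)^{x_i}\ket{\bx}$ together with $\bar Z_i\bar X^{\bx}\ket{\psi_{\b0}}=(-1)^{x_i}\bar X^{\bx}\ket{\psi_{\b0}}$ one gets $U_\Upsilon Z_i U_\Upsilon^{-1}=\bar Z_i$. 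Hence $[U_\Upsilon Z_i U_\Upsilon^{-1}]=[N_{\Upsilon(\bg_i)}]$ and $[U_\Upsilon X_i U_\Upsilon^{-1}]=[N_{\Upsilon(\bh_i)}]$. For general $\bu=(\bz|\bx)$, write $\bu=\sum_i z_i\bg_i+\sum_i x_i\bh_i$, so $\Upsilon(\bu)=\sum_i z_i\bu_i+\sum_i x_i\bv_i$; applying the isomorphism (\ref{eq:primi}) termwise and then the relations just established yields $[N_{\Upsilon(\bu)}]=\prod_i[\bar Z_i]^{z_i}\prod_i[\bar X_i]^{x_i}=[U_\Upsilon\,Z^{\bz}X^{\bx}\,U_\Upsilon^{-1}]=[U_\Upsilon N_{\bu}U_\Upsilon^{-1}]$, using $[N_{\bu}]=[Z^{\bz}X^{\bx}]$.

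I expect the main obstacle to be the unitarity step: one must be sure that the joint $(+1)$-eigenspace of the $\bar Z_i$ is \emph{exactly} one-dimensional, which is precisely where linear independence of $\{\bu_i\}$ and the symplectic commutation relations (\ref{eq:seconda}) are both essential, and one must arrange the phase ambiguities so that the conclusion is clean modulo overall phase (they disappear here because $N_{\bv_i}$ is a genuine tensor product of Paulis). An alternative, more algebraic route is to use that the symplectic group of $(\bbZ_2)^{2n}$ is generated by the elementary maps realized by the Hadamard, phase $\diag(1,i)$, and CNOT gates, check the identity for each generator by a short Pauli calculation, and compose via $[U_1U_2\,N_{\bu}\,(U_1U_2)^{-1}]=[U_1(U_2N_{\bu}U_2^{-1})U_1^{-1}]$; this, however, shifts the burden onto proving a generating set for the symplectic group, whereas the construction above is self-contained.
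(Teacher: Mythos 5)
Your construction is correct and is essentially the paper's own proof: both build $U_\Upsilon$ as the change of basis $\sum_{\bx}\ket{\tilde{\bx}}\bra{\bx}$ between the stabilizer basis generated by $(N_{\bg_i},N_{\bh_i})$ and the one generated by their images under $\Upsilon$. The only differences are that you supply the trace argument showing the joint $+1$ eigenspace of the $N_{\Upsilon(\bg_i)}$ is exactly one-dimensional (a fact the paper asserts without proof when it declares $\ket{\tilde{\b0}}$ ``unique up to a phase''), and that you verify the intertwining relation exactly on the generators $Z_i,X_i$ and then extend multiplicatively via (\ref{eq:primi}), whereas the paper computes the action of a general $N_{\bu}$ directly with explicit phase bookkeeping --- both are harmless refinements of the same argument.
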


\noindent {\bf Remark.}\, The unitary map $U_\Upsilon$ may be viewed
as a map on $[\cG_n]$ given by  $[\bS] \mapsto [U_\Upsilon \bS
U_\Upsilon^{-1}]$. The theorem says that the following diagram
commutes
$$
\begin{CD}
(\bbZ_2)^{2n}   @>{{\Upsilon}}>> (\bbZ_2)^{2n}   \\
@V{{[N]}}VV      @VV{{[N]}}V \\
[\cG_n]   @>U_\Upsilon>> [\cG_n]
\end{CD}
$$

\begin{proof}
Consider the standard basis $\bg_i = (\be_i |\b0)$, $\bh_i =
(\b0|\be_i)$. Define the unique (up to a phase factor) state
$\ket{\b0}$ on $\cH_2^{\otimes n}$ to be the simultaneous $+1$
eigenstate of the commuting operators $N_{\bg_j}$, $j = 1, \dots,
n$. Define an orthonormal basis $\{ \ket{\bb}: \bb = b_1 \dots b_n
\in (\bbZ_2)^{n} \}$ for $\cH_2^{\otimes n}$ by
$$
\ket{\bb} = N_{\sum_i b_i \bh_i} \ket{\b0}.
$$
The orthonormality follows from the observation that $\ket{\bb}$ is
a simultaneous eigenstate of $N_{\bg_j}$, $j = 1, \dots, n$ with
respective eigenvalues $(-1)^{b_j}$:
\begin{equation}%
\begin{split}
N_{\bg_j} \ket{\bb} & = N_{\bg_j}  N_{\sum_i b_i \bh_i} \ket{\b0}\\
& = (-1)^{b_j}  N_{\sum_i b_i \bh_i}  N_{\bg_j}  \ket{\b0} \\
& = (-1)^{b_j}  N_{\sum_i b_i \bh_i}  \ket{\b0} \\
& = (-1)^{b_j} \ket{\bb}.
\end{split}
\end{equation}%
The second line is an application of (\ref{eq:seconda}).

Define $\tilde{\bg}_i: =  \Upsilon (\bg_i)$. 
We repeat the above construction for this new basis. Define the
unique (up to a phase factor) state ${\ket{\tilde{\b0}}}$ to be the
simultaneous $+1$ eigenstate of the commuting operators
$N_{\tilde{\bg}_i}$, $i = 1, \dots, n$. Define an orthonormal basis
$\{ {\ket{\tilde{\bb}}} \}$ by
\begin{equation}%
\ket{\tilde{\bb}} = N_{ \sum_i b_i \tilde{\bh}_i} \ket{\tilde{\b0}}.
\label{ono}%
\end{equation}%
Defining $\bu =  \sum_i z_i \bg_i + x_i \bh_i$, $\tilde{\bu} =
\sum_i z_i \tilde{\bg}_i + x_i \tilde{\bh}_i $ and $\bx = x_1 \dots
x_n$, we have
\begin{equation}%
\label{eq:ludak1}%
\begin{split}
N_{\tilde{\bu}} \ket{\tilde{\bb}}
& =  N_{\tilde{\bu}} N_{ \sum_i  b_i \tilde{\bh}_i} {\ket{\tilde{\b0}}} \\
& = (-1)^{ \tilde{\bu} \odot  (\sum_i  b_i \tilde{\bh}_i)^\T }
 N_{ \sum_i  b_i \tilde{\bh}_i} N_{\tilde{\bu}} {\ket{\tilde{\b0}}} \\
& = (-1)^{ \tilde{\bu} \odot  (\sum_i  b_i \tilde{\bh}_i)^\T } e^{i
\theta(\tilde{\bu})}
 N_{ \sum_i  b_i \tilde{\bh}_i}
 N_{\sum_i  x_i \tilde{\bh}_i}  N_{\sum_i  z_i \tilde{\bg}_i}
 {\ket{\tilde{\b0}}} \\
&= (-1)^{ \tilde{\bu} \odot  (\sum_i  b_i \tilde{\bh}_i)^\T } e^{i
\theta(\tilde{\bu})}
 N_{ \sum_i ( b_i + x_i) \tilde{\bh}_i}
 {\ket{\tilde{\b0}}} \\
&= (-1)^{ \tilde{\bu} \odot  (\sum_i  b_i \tilde{\bh}_i)^\T } e^{i
\theta(\tilde{\bu})}
 {\ket{\widetilde{\bb + \bx }}} \\
&= (-1)^{ {\bu} \odot  (\sum_i  b_i {\bh}_i)^\T } e^{i
\theta(\tilde{\bu})}
 {\ket{\widetilde{\bb + \bx }}},
\end{split}
\end{equation}%
where $\theta(\tilde{\bu})$ is a phase factor which is independent
of $\bb$. The first equality follows from (\ref{ono}), the second
from (\ref{eq:seconda}), the third from (\ref{eq:primi}), the fourth
from the definition of $\ket{\tilde{\b0}}$ and the fact that
$X^{\bb} X^{\bx} = X^{\bb +\bx}$, the fifth from (\ref{ono}), and
the sixth from (\ref{smor}). Similarly
\begin{equation}%
\label{eq:ludak2}%
 N_\bu \ket{\bb} = (-1)^{{\bu} \odot  (\sum_i  b_i {\bh}_i)^\T }
 e^{i \varphi({\bu})}  {\ket{{\bb + \bx }}},
\end{equation}%
where $\varphi(\bu)$ is a is a phase factor which is independent of
$\bb$.

Define $U_\Upsilon$ by the change of basis
$$
U_\Upsilon = \sum_\bb \ket{\tilde{\bb}}\bra{{\bb}}.
$$
Combining (\ref{eq:ludak1}) and (\ref{eq:ludak2}) gives for all
$\ket{\bb}$
\begin{equation}%
\begin{split}
N_{ \Upsilon (\bu)} U_\Upsilon \ket{\bb} &=
 (-1)^{ {\bu} \odot  (\sum_i  b_i {\bh}_i)^\T }
e^{i \theta(\tilde{\bu})}   U_\Upsilon  \ket{{\bb + \bx }} \\
&= e^{i [\theta(\tilde{\bu}) -   \varphi({\bu})]}
 U_\Upsilon N_\bu \ket{\bb}.
\end{split}
\end{equation}%
Therefore $[N_{\Upsilon (\bu)}] = [ U_\Upsilon N_{\bu}
U_\Upsilon^{-1} ]$.

\end{proof}

\section*{2.4 \hspace{2pt} Symplectic codes}

\addcontentsline{toc}{section}{2.4 \hspace{0.15cm} Symplectic codes}

An $[n,k]$ symplectic code $\Csp$ defined by an $(n-k)\times 2n$
parity check matrix $\Hsp$ is given by
$$ \Csp = \text{rowspace}({\Hsp})^\perp $$
where
$$ V^\perp = \{\bw: \bw \odot \bu^\T = 0, \,\, \forall \bu \in V\}. $$
The subscript $\rm sp$ emphasizes that the code is defined with
respect to the symplectic product. Note that $(V^\perp)^\perp = V$.
We say that $\Csp$ is \emph{dual-containing} if
\begin{equation}%
(\Csp)^\perp=\text{rowspace}(\Hsp)\subset \Csp;
\end{equation}%
this is true if $\Hsp$ is \emph{self-orthogonal} under the
symplectic product. For simplicity, the term ``self-orthogonal
code'' is often referred to a code with a self-orthogonal
parity-check matrix.

The notion of \emph{distance} provides a convenient way to
characterize the error-correcting properties of a code. An $[n,k]$
symplectic code $\Csp$ with a parity check matrix $\Hsp$ is said to
have distance $d$ if for each nonzero $\bu$ of weight $< d$, $\bu
\not\in \Csp$, or equivalently, $\Hsp \odot \bu^\T \neq \b0^\T$.

\section*{2.5 \hspace{2pt} Classical quaternary codes}

\addcontentsline{toc}{section}{2.5 \hspace{0.15cm} Classical
quaternary codes}\label{sec:quat}%

Following the presentation of Forney \emph{et al}. \cite{FGG07QCC},
the addition table of the additive group of the quaternary field
$\bbF_4 = \{0, 1, \omega, \overline{\omega}\}$ is given by
$$
\begin{array}{|c|cccc|}  \hline
+  & 0 & \overline{\omega} & 1 & \omega \\  \hline
0 & 0 &  \overline{\omega} & 1 & {\omega} \\
\overline{\omega} & \overline{\omega} & 0 & \omega & 1\\
1 & 1 & {\omega} & 0 &  \overline{\omega} \\
{\omega}& \omega & 1 & \overline{\omega} & 0   \\ \hline
\end{array}
$$
Comparing the above to  the addition table of $(\bbZ_2)^2$
establishes the isomorphism $\gamma: \bbF_4 \rightarrow (\bbZ_2)^2$,
given by the table
$$
\begin{array}{|c|c|} \hline
\bbF_4 & (\bbZ_2)^2 \\  \hline
 0 & 00\\
 \overline{\omega}  & 01\\
 1 & 11\\
 \omega  & 10\\ \hline
\end{array}
$$
The multiplication table for $\bbF_4$ is defined as
$$
\begin{array}{|c|cccc|}  \hline
\times  & 0 & \overline{\omega} & 1 & \omega  \\  \hline
0 & 0 & 0 & 0 & 0 \\
\overline{\omega} & 0 & \omega & \overline{\omega} & 1 \\
1 & 0 & \overline{\omega} & 1 & \omega \\
\omega & 0  & 1 & \omega & \overline{\omega} \\ \hline
\end{array}
$$
Define the \emph{traces} ($\tr$) of the elements $\{0, 1, \omega,
\overline{\omega}\}$ of $\bbF_4$ as $\{0, 0, 1, 1\}$, and their
\emph{conjugates} (``$^\dagger$'') as $\{0, 1, \overline{\omega},
\omega\}$. Intuitively, $\tr a$ measures the ``$\omega$-ness'' of $a
\in \bbF_4$. Observe that $a  = 0$  if and only if both  $\tr \omega
a = 0$ and $\tr  \overline{\omega} a = 0$. The \emph{Hermitian inner
product} of two elements $a, b \in \bbF_4$ is defined as
$\inner{a}{b} = a^\dagger b \in \bbF_4$. The \emph{trace product} is
defined as $\tr \inner{a}{b} \in \bbF_2$.  The trace  product table
is readily found to be
$$
\begin{array}{|c|cccc|}  \hline
\tr\inner{\,}{}  & 0 & \overline{\omega}& 1 & \omega  \\  \hline
0 & 0 & 0 & 0 & 0 \\
\overline{\omega} & 0& 0 & 1 & 1  \\
1 & 0& 1 & 0 & 1  \\
\omega & 0& 1 & 1 & 0  \\
\hline
\end{array}
$$
Comparing the above to  the $\odot$ table of $(\bbZ_2)^2$
establishes the identity
$$
\tr\inner{a}{b} = {\gamma(a) \odot \gamma(b)}.
$$
These notions can be generalized to $n$-dimensional vector spaces
over $\bbF_4$. Thus, for $\ba, \bb \in (\bbF_4)^n$,
\begin{equation}%
\tr\inner{\ba}{\bb} = {\gamma(\ba) \odot \gamma(\bb)^\T},
\label{unx}
\end{equation}%
where the Hermitian inner product over $(\bbF_4)^n$ is defined by
the componentwise sum $\inner{\ba}{\bb}=\sum_{i}a^\dagger b.$ Let
$\wt_4(\ba)$ be the number of non-zero bits in $\ba \in
(\bbF_4)^{n}$, then we have another identity
\begin{equation}%
\wt_{\rm sp}(\gamma(\ba)) = \wt_4(\ba), \label{deux}
\end{equation}%
where $\gamma(\ba) \in (\bbZ_2)^{2n}$.

An $[n,k]$ code $C_4$ (the subscript $4$ emphasizes that the code is
over $\bbF_4$) with the parity check matrix $H_4$ is said to have
distance $d$ if for each vector $\ba\in(\bbF_4)^n$ with
$\wt_4(\ba)<d$, $\ba\not\in C_4$, or equivalently, $\inner{H_4}
{\ba} \neq \b0^\T$.

\begin{Prop}%
\label{prop_QtoSP}%
Given an $[n,k,d]$ code $C_4$ with parity check matrix $H_4$, there
exists a corresponding $[n,2k-n,d]$ symplectic code $\Csp$.
\begin{proof}
Consider a classical $[n,k,d]_4$ code  with an $(n - k) \times n$
quaternary parity check matrix $H_4$. By definition, for each
nonzero $\ba \in (\bbF_4)^n$ such that $\wt_4(\ba) < d$,
$$
\inner{H_4}{\ba} \neq \b0^\T.
$$
This  is equivalent to the logical statement
$$
\tr \inner{\omega H_4}{\ba} \neq \b0^\T 
\,\,\vee\,\, \tr \inner{\bar{\omega} H_4}{\ba} \neq \b0^\T.
$$
This is further equivalent to
$$
\tr \inner{\tilde{H}_4}{\ba} \neq \b0^\T,
$$
where
\begin{equation}%
\label{unos} \tilde{H} = \left(\begin{array}{c}
 \omega H_4 \\
\bar{\omega} H_4
\end{array} \right).
\end{equation}%
Define the $(2n - 2k) \times 2n$ symplectic matrix $\Hsp =
\gamma(\tilde{H}_4)$. By the correspondences (\ref{unx}) and
(\ref{deux}),
$$
\Hsp \odot \bu^\T \neq \b0^\T,
$$
holds for each nonzero $\bu \in (\bbZ_2)^{2n}$ with $\wt(\bu) < d$.
Thus $\Csp$ is an $[n,2k-n,d]$ symplectic code defined by $\Hsp$.
\end{proof}
\end{Prop}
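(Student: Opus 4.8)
The plan is to transport the defining distance condition of the quaternary code $C_4$ into the symplectic setting through the dictionary of Section~2.5: the coordinatewise additive isomorphism $\gamma:(\bbF_4)^n\to(\bbZ_2)^{2n}$, the identity $\tr\inner{\ba}{\bb}=\gamma(\ba)\odot\gamma(\bb)^\T$ of \eqref{unx}, and the weight identity $\wt_{\rm sp}(\gamma(\ba))=\wt_4(\ba)$ of \eqref{deux}. The image under $\gamma$ of a suitably enlarged version of $H_4$ will serve directly as the symplectic parity check matrix $\Hsp$.

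First I would unpack the hypothesis: $C_4$ has distance $d$ precisely when, for every nonzero $\ba\in(\bbF_4)^n$ with $\wt_4(\ba)<d$, the syndrome $\inner{H_4}{\ba}$ is a nonzero vector over $\bbF_4$. The only real subtlety is that this is an $\bbF_4$-valued nonvanishing statement whereas the symplectic form is $\bbF_2$-valued, so it must be recast over $\bbF_2$; the trace does exactly this. Using that $a=0$ iff $\tr(\omega a)=0$ and $\tr(\bar\omega a)=0$, applied componentwise, $\inner{H_4}{\ba}\neq\b0$ is equivalent to the disjunction $\tr\inner{\omega H_4}{\ba}\neq\b0$ or $\tr\inner{\bar\omega H_4}{\ba}\neq\b0$. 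I would then merge the two alternatives by stacking: let $\tilde{H}$ be the $2(n-k)\times n$ matrix over $\bbF_4$ whose rows are those of $\omega H_4$ followed by those of $\bar\omega H_4$, so the disjunction becomes $\tr\inner{\tilde{H}}{\ba}\neq\b0$. Setting $\Hsp:=\gamma(\tilde{H})$, a $2(n-k)\times 2n$ binary matrix, and applying \eqref{unx} row by row together with \eqref{deux} for the weight, one gets $\Hsp\odot\bu^\T\neq\b0^\T$ for every nonzero $\bu\in(\bbZ_2)^{2n}$ with $\wt_{\rm sp}(\bu)<d$; by the definition of symplectic distance this says $\Csp=\text{rowspace}(\Hsp)^\perp$ has distance at least $d$. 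Equality follows because $\gamma$ is a weight-preserving bijection: a weight-$d$ codeword of $C_4$ has zero $\bbF_4$-syndrome, hence zero $\tilde H$-syndrome, hence maps into $\Csp$ with $\wt_{\rm sp}=d$.

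It remains to read off the parameters. The length is $n$ by construction. Since $\gamma$ is an $\bbF_2$-linear isomorphism applied coordinatewise, it carries the $\bbF_2$-row span of $\tilde{H}$ isomorphically onto $\text{rowspace}(\Hsp)$; and because $\omega+\bar\omega=1$ in $\bbF_4$, that $\bbF_2$-row span is all of $\bbF_4\cdot\text{rowspace}(H_4)$, which has $\bbF_2$-dimension $2(n-k)$ as $\text{rowspace}(H_4)$ is $(n-k)$-dimensional over $\bbF_4$. Hence $\dim_{\bbF_2}\Csp=2n-2(n-k)=2k$, which in the $[n,k']$-convention for symplectic codes ($\dim=n+k'$) is $k'=2k-n$, so $\Csp$ is an $[n,2k-n,d]$ symplectic code. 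I expect the main obstacle to be only careful bookkeeping: keeping the conjugation straight inside the trace identities, and pinning down this rank count so the parameter $2k-n$ comes out correctly; everything else is a direct translation along the Section~2.5 correspondences.
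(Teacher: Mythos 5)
Your proposal is correct and follows essentially the same route as the paper: recast the $\bbF_4$ syndrome condition over $\bbF_2$ via the trace identities, stack $\omega H_4$ and $\bar\omega H_4$ into $\tilde H$, set $\Hsp=\gamma(\tilde H)$, and transport the distance condition using (\ref{unx}) and (\ref{deux}). The only additions are your explicit rank count for the parameter $2k-n$ and the check that the distance is exactly $d$, both of which the paper leaves implicit but which are consistent with its conventions.
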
%

\section*{2.6 \hspace{2pt} Encoding classical information into quantum states}

\addcontentsline{toc}{section}{2.6 \hspace{0.15cm} Encoding
classical information into quantum states}
\label{eci}%
In this section we review two schemes for sending classical
information over quantum channels: elementary coding and superdense
coding. These will be used later in the context of quantum error
correction to convey information to the decoder about which error
happened.

\subsection*{2.6.1 \hspace{2pt} Elementary coding}
\addcontentsline{toc}{subsection}{2.6.1 \hspace{0.15cm} Elementary
coding}
\label{sec_eltcoding}%
In the first scheme, Alice and Bob are connected by a perfect qubit
channel. Alice can send an arbitrary bit $a \in \bbZ_2$ over the
qubit channel in the following way:
\begin{itemize}
\item Alice locally prepares a state $\ket{0}$ in $\cH_2$.
This state is the $+1$ eigenstate of the $Z$ operator. Based on her
message $a$, she performs the encoding operation $X^{a}$, producing
the state $\ket{a}=X^a\ket{0}$.
\item Alice sends the encoded state to Bob
through the  qubit channel.
\item Bob decodes by performing the von Neumann  measurement in the
$\{\ket{0}, \ket{1} \}$ basis. As this is the unique eigenbasis of
the $Z$ operator, this is equivalently called ``measuring the $Z$
observable''.
\end{itemize}
We call this protocol ``elementary coding'' and write it
symbolically as a \emph{resource inequality} \cite{DHW03,DHW05RI,
DW03a} \footnote{In \cite{DHW05RI} resource inequalities were used
in the asymptotic sense. Here they refer to finite protocols, and
are thus slightly abusing their original intent.}
$$
[q \rightarrow q] \geq [c \rightarrow c].
$$
Here $[q \rightarrow q]$ represents a perfect qubit channel and $[c
\rightarrow c]$ represents a perfect classical bit channel. The
inequality $\geq$ signifies that the resource on the left hand side
can be used in a protocol to simulate the resource on the right hand
side.

Elementary coding immediately extends to $m$ qubits. Alice prepares
the simultaneous $+1$ eigenstate of the $Z^{\be_1}, \dots,
Z^{\be_m}$ operators $\ket{\b0}$, and encodes the message $\ba \in
(\bbZ_2)^m$ by applying $X^{\ba}$, producing the encoded state
$\ket{\ba}=X^{\ba}\ket{\b0}$. Bob decodes by simultaneously
measuring the $Z^{\be_1}, \dots, Z^{\be_m}$ observables. We could
symbolically represent this protocol by
$$
m \,[q \rightarrow q] \geq m \,[c \rightarrow c].
$$


\subsection*{2.6.2 \hspace{2pt} Superdense coding}
\addcontentsline{toc}{subsection}{2.6.1 \hspace{0.15cm} Superdense
coding}
\label{sec_sdcoding}%

In the second scheme, Alice and Bob share the ebit state
\begin{equation}%
\label{eq:ebit} \ket{\Phi} = \frac{1}{\sqrt{2}} (\ket{0} \otimes
\ket{0} + \ket{1} \otimes \ket{1})
\end{equation}%
in addition to being connected by the qubit channel. In
(\ref{eq:ebit}) Alice's state is to the left and Bob's is to the
right of the $\otimes$ symbol.

The state $\ket{\Phi}$ is the simultaneous $(+1, +1)$ eigenstate of
the commuting operators  $Z \otimes Z$ and  $X \otimes X$. Again,
the operator to the left of the $\otimes$ symbol acts on Alice's
system and the operator to  the right of the $\otimes$ symbol acts
on Bob's system. Alice can send a two-bit message $(a_1,a_2) \in
(\bbZ_2)^2$ to Bob using ``superdense coding'' \cite{BW92}:
\begin{itemize}
\item Based on her message $(a_1, a_2)$, Alice performs the encoding
operation $Z^{a_1} X^{a_2}$ on her part of the state $\ket{\Phi}$,
producing the state $\ket{a_1,a_2}=(Z^{a_1} X^{a_2} \otimes I^B
)\ket{\Phi}$.
\item Alice sends her part of the encoded state to Bob
through the perfect qubit channel.
\item Bob decodes by performing the von Neumann measurement in the
$\{(Z^{a_1} X^{a_2} \otimes I )\ket{\Phi}: (a_1, a_2) \in (\bbZ_2)^2
\}$ basis, i.e., by simultaneously measuring the $Z \otimes Z$ and
$X \otimes X$ observables.
\end{itemize}
The protocol is represented by the resource inequality
\begin{equation}%
[q \rightarrow q] + [q \, q] \geq 2 \, [c \rightarrow c],
\label{eq:sd}
\end{equation}%
where $[q \,q]$ now represents the shared ebit. It can also  be
extended to $m$ copies. Alice and Bob share the state
$\ket{\Phi}^{\otimes m}$ which is the simultaneous $+1$ eigenstate
of the $Z^{\be_1} \otimes Z^{\be_1}, \dots, Z^{\be_m} \otimes
Z^{\be_m}$ and $X^{\be_1} \otimes X^{\be_1}, \dots, X^{\be_m}
\otimes X^{\be_m}$ operators. Alice  encodes the message $(\ba_1,
\ba_2) \in (\bbZ_2)^{2m}$ by applying $Z^{\ba_1} X^{\ba_2}$,
producing the encoded state $\ket{\ba_1,\ba_2}=(Z^{\ba_1}
X^{\ba_2}\otimes I)\ket{\Phi}$. Bob decodes by simultaneously
measuring the
 $Z^{\be_1} \otimes Z^{\be_1}, \dots, Z^{\be_m} \otimes Z^{\be_m}$
and $X^{\be_1} \otimes X^{\be_1}, \dots, X^{\be_m} \otimes
X^{\be_m}$ observables. The corresponding resource inequality is
$$
m \, [q \rightarrow q] + m \,[q \, q] \geq 2m \, [c \rightarrow c].
$$
Superdense coding provides the simplest illustration of how
entanglement can increase the power of information processing.

\section*{2.7 \hspace{2pt} Useful lemmas}

\addcontentsline{toc}{section}{2.7 \hspace{0.15cm} Useful lemmas}

\begin{lemma}
\label{basis}%
Let $\cV$ be an arbitrary subgroup of $\cG_n$ with size $2^m$. Then
there exists a set of generators
$\{\bar{Z}_1,\cdots,\bar{Z}_{p+q},\bar{X}_{p+1},\cdots,\bar{X}_{p+q}\}$
that generates $\cV$ such that the $\bar{Z}$'s and $\bar{X}$'s obey
the same commutation relations as in (\ref{comm}), for some $p,q\geq
0$ and $p+2q=m$.
\begin{equation}\label{comm}
\begin{split}
[\bar{Z}_i,\bar{Z}_j]&=0 \ \ \ \ \forall i,j  \\
[\bar{X}_i,\bar{X}_j]&=0 \ \ \ \ \forall i,j  \\
[\bar{X}_i,\bar{Z}_j]&=0 \ \ \ \ \forall i\neq j \\
\{\bar{X}_i,\bar{Z}_i\}&= 0 \ \ \ \ \forall i.
\end{split}
\end{equation}
\end{lemma}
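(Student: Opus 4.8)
The plan is to transfer the statement about an arbitrary subgroup $\cV \leq \cG_n$ into the symplectic vector space $(\bbZ_2)^{2n}$ via the isomorphism $[N]$, apply Theorem~\ref{GSsymp}, and then pull the resulting structure back to Pauli operators using the commutation dictionary~(\ref{eq:seconda}). First I would pass to the image $[\cV] \leq [\cG_n]$; since $\cV$ has size $2^m$ (here I read the hypothesis as saying $\cV$, modulo phases, has $2^m$ elements, i.e.\ $[\cV]$ is an $m$-dimensional subspace, which is the only case in which a generating set of the claimed type can exist), the preimage $W := [N]^{-1}([\cV])$ is an $m$-dimensional subspace of $(\bbZ_2)^{2n}$. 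Then I apply Theorem~\ref{GSsymp} to $W$: there is a symplectic basis consisting of hyperbolic pairs $(\bu_i,\bv_i)$, $i=1,\dots,n$, such that $\{\bu_1,\dots,\bu_{c+\ell},\bv_1,\dots,\bv_c\}$ is a basis of $W$, with $2c+\ell=m$.

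Next I would translate back. Set $\bar{X}_i := N_{\bu_i}$ and $\bar{Z}_i := N_{\bv_i}$ for the hyperbolic pairs that both lie in $W$ (the indices $i=1,\dots,c$), and set $\bar{Z}_{c+j} := N_{\bu_{c+j}}$ for $j=1,\dots,\ell$ (the isotropic generators, which have no partner in $W$). Relabeling with $q=c$ and $p=\ell$ so that $p+2q=m$, I claim $\{\bar{Z}_1,\dots,\bar{Z}_{p+q},\bar{X}_{p+1},\dots,\bar{X}_{p+q}\}$ — after a harmless reindexing so the isotropic ones are $\bar Z_1,\dots,\bar Z_p$ and the hyperbolic ones are $\bar Z_{p+1},\dots,\bar Z_{p+q}$, $\bar X_{p+1},\dots,\bar X_{p+q}$ — does the job. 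That this set generates $\cV$ (up to phases, hence $\cV$ itself once phases are restored, since the given generators can be chosen with any convenient phase and $\cV$ is the group they generate) is immediate because the corresponding vectors form a basis of $W$ and $[N]$ is a group isomorphism by~(\ref{eq:primi}). The commutation relations~(\ref{comm}) follow from~(\ref{eq:seconda}): two Pauli operators $N_\ba, N_\bb$ commute iff $\ba \odot \bb^\T = 0$ and anticommute iff it equals $1$. By the symplectic basis property~(\ref{eq:syprma}), $\bu_i \odot \bu_j^\T = \bv_i \odot \bv_j^\T = 0$ for all $i,j$, $\bu_i \odot \bv_j^\T = 0$ for $i \neq j$, and $\bu_i \odot \bv_i^\T = 1$; restricting these to the index sets above yields exactly the four relations in~(\ref{comm}).

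The only genuine subtlety — and the step I would treat most carefully — is the interplay between the vector-space picture (which ignores overall phases) and the operator picture. Concretely: the lemma asks for actual Pauli operators in $\cG_n$ generating $\cV$ with the stated commutation relations, whereas Theorem~\ref{GSsymp} and the map $[N]$ only control things modulo phase. I would argue that this causes no problem: the commutation relations~(\ref{eq:seconda}) hold for the operators $N_\ba$ on the nose (not just modulo phase), so the relations~(\ref{comm}) are genuine operator identities for the choices $\bar Z_\bullet = N_{(\cdot)}$, $\bar X_\bullet = N_{(\cdot)}$. For the generation claim, note $[N](W) = [\cV]$ means the operators $N_{\bw}$, $\bw$ ranging over a basis of $W$, generate a subgroup of $\cG_n$ whose image in $[\cG_n]$ is $[\cV]$; since any subgroup of $\cG_n$ mapping onto $[\cV]$ and containing an element of each needed coset generates a group containing $\cV$ up to the phase subgroup, and we are free to absorb phases into the generators (the hypothesis only fixes $\cV$, not a particular generating set), the statement follows. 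I would also note explicitly that $p$ and $q$ are uniquely determined — $p=\dim \iso W$ by the Remark after Theorem~\ref{GSsymp} — which is worth recording even though the lemma only asserts existence.
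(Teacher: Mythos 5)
Your proposal is correct and follows exactly the route the paper itself takes: its proof of Lemma~\ref{basis} is the one-line remark that the result follows by combining Theorem~\ref{GSsymp} with the isomorphism $[N]:(\bbZ_2)^{2n}\to[\cG_n]$, which is precisely the argument you have spelled out (including the phase bookkeeping the paper leaves implicit). No substantive difference to report.
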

\begin{proof}%
Though the proof can be found in \cite{FCY04}; however, a new proof
can be easily obtained by combining Theorem \ref{GSsymp} and the
isomorphic map $[N]:(\bbZ_2)^{2n}\to[\cG_n]$.
\end{proof}%


%
The following lemma is a simply result from group theory, and a new
proof can be obtained from Theorem \ref{thm2} and
$[N]:(\bbZ_2)^{2n}\to[\cG_n]$.
\begin{lemma}%
\label{sim}%
If there is a one-to-one map between $\cV$ and $\cS$ which preserves
their commutation relations, which we denote $\cV\sim\cS$, then
there exists a unitary $U$ such that for each $V_i\in\cV$, there is
a corresponding $S_i\in\cS$ such that $V_i=U S_i U^{-1}$, up to a
phase which can differ for each generator.
\end{lemma}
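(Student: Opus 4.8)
The plan is to transport the statement to the binary symplectic picture through the isomorphism $[N]:(\bbZ_2)^{2n}\to[\cG_n]$, use the symplectic Gram--Schmidt procedure of Theorem \ref{GSsymp} to extend a ``partial'' symplectomorphism between subspaces to the whole space, and then pull the resulting unitary back with Theorem \ref{thm2}.

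First I would pass to equivalence classes: $[\cV]$ and $[\cS]$ are subgroups of $[\cG_n]$, hence, via $[N]^{-1}$, subspaces $V,S\subseteq(\bbZ_2)^{2n}$. Fixing a generating set of $\cV$, the correspondence $\cV\sim\cS$ gives a matching generating set of $\cS$, and extending it multiplicatively yields a group isomorphism $[\cV]\to[\cS]$, i.e. a linear bijection $\phi:V\to S$ with $[N_{\phi(\bu)}]$ equal to the image of $[N_\bu]$. By the commutation law (\ref{eq:seconda}), two Pauli operators commute exactly when the associated binary vectors satisfy $\bu\odot\bw^\T=0$, so ``$\phi$ preserves commutation relations'' is precisely the statement that $\phi$ preserves the symplectic form on $V$: $\phi(\bu)\odot\phi(\bw)^\T=\bu\odot\bw^\T$ for all $\bu,\bw\in V$.

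The heart of the argument is to extend this $\phi$ to a symplectomorphism $\Xi$ of all of $(\bbZ_2)^{2n}$. I would run the Gram--Schmidt construction of Theorem \ref{GSsymp} in parallel on $V$ and on $S$, transporting the chosen vectors from the $V$-side to the $S$-side by $\phi$: because $\phi$ preserves symplectic products, the hyperbolic pairs produced inside $V$ correspond under $\phi$ to those produced inside $S$, the isotropic parts $\iso V$ and $\iso S$ (which are intrinsic to $V$ and $S$) correspond, and the integers $c,\ell$ of Theorem \ref{GSsymp} agree on the two sides. Completing both adapted symplectic bases of $(\bbZ_2)^{2n}$ arbitrarily outside $V$ and $S$ and declaring $\Xi$ to send the one basis to the other produces a map that carries a symplectic basis to a symplectic basis, hence preserves the form $J$ of (\ref{eq:syprma}) and is a symplectomorphism, and by construction $\Xi|_V=\phi$.

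Finally I would apply Theorem \ref{thm2} to $\Upsilon:=\Xi^{-1}$, obtaining a unitary $U:=U_\Upsilon$ on $\cH_2^{\otimes n}$ with $[N_{\Xi^{-1}(\bs)}]=[UN_\bs U^{-1}]$ for every $\bs$. For $\bs\in S$ corresponding to a generator $S_i$ of $\cS$, the vector $\Xi^{-1}(\bs)=\phi^{-1}(\bs)$ corresponds to the matching generator $V_i$ of $\cV$, so $[V_i]=[US_iU^{-1}]$; since both the correspondence $\cV\sim\cS$ and conjugation by $U$ are multiplicative, this extends from generators to all elements, giving $V_i=US_iU^{-1}$ up to a phase that may depend on $i$. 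The one real obstacle is the extension step: one must verify that $\phi$ respects the (unique) isotropic/symplectic splitting and the hyperbolic-pair structure so that the two adapted symplectic bases of the ambient space can genuinely be lined up; once that is in hand, everything else is bookkeeping through the dictionary supplied by $[N]$ together with Theorems \ref{GSsymp} and \ref{thm2}.
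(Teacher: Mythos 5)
Your argument is correct and is exactly the route the paper intends: its ``proof'' of Lemma \ref{sim} is just the one-line remark that the result follows from Theorem \ref{thm2} together with the isomorphism $[N]:(\bbZ_2)^{2n}\to[\cG_n]$, and you have simply filled in the details, in particular the extension of the form-preserving map $\phi:V\to S$ to a global symplectomorphism via the Gram--Schmidt machinery of Theorem \ref{GSsymp} (a Witt-type extension that the paper leaves implicit). No gap; the step you flag as the ``one real obstacle'' is indeed the only nontrivial point, and your parallel Gram--Schmidt construction handles it.
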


\begin{lemma} \label{simi3}
If $\cC_0$ is a simultaneous eigenspace of Pauli operators from the
set $\cS'_0$, then $\cC = U^{-1} (\cC_0)$ is a simultaneous
eigenspace of Pauli operators  from the set $\cS = \{ U^{-1} \bA U :
\bA \in \cS'_0 \}$.
\end{lemma}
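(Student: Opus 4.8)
The plan is to unwind the definition of ``simultaneous eigenspace'' and push it through conjugation by $U$, checking both inclusions so that we land on a specific eigenspace rather than merely ``some'' eigenspace. Concretely, the hypothesis that $\cC_0$ is a simultaneous eigenspace of the operators in $\cS'_0$ means there is a fixed tuple of eigenvalues $(\lambda_\bA)_{\bA \in \cS'_0}$ with
$$
\cC_0 = \bigcap_{\bA \in \cS'_0} \{ \ket{\psi} : \bA \ket{\psi} = \lambda_\bA \ket{\psi} \}.
$$
I would then prove that, with the same tuple $(\lambda_\bA)$,
$$
\cC = U^{-1}(\cC_0) = \bigcap_{\bA \in \cS'_0} \{ \ket{\phi} : (U^{-1}\bA U)\ket{\phi} = \lambda_\bA \ket{\phi} \},
$$
which is exactly the assertion that $\cC$ is a simultaneous eigenspace of $\cS = \{ U^{-1}\bA U : \bA \in \cS'_0 \}$.

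The core step is a one-line calculation. Fix $\bA \in \cS'_0$ and a vector $\ket{\phi}$, and set $\ket{\psi} = U\ket{\phi}$. Since $U U^{-1} = \id$,
$$
(U^{-1}\bA U)\ket{\phi} = U^{-1}\bA\ket{\psi},
$$
so $(U^{-1}\bA U)\ket{\phi} = \lambda_\bA\ket{\phi}$ holds if and only if $U^{-1}\bA\ket{\psi} = \lambda_\bA U^{-1}\ket{\psi}$, and, applying the invertible $U$ to both sides, if and only if $\bA\ket{\psi} = \lambda_\bA\ket{\psi}$. As $\ket{\phi}$ ranges over all vectors, $\ket{\psi} = U\ket{\phi}$ ranges over all vectors, so the $\lambda_\bA$-eigenspace of $U^{-1}\bA U$ is exactly $U^{-1}$ applied to the $\lambda_\bA$-eigenspace of $\bA$. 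Because $U^{-1}$ is a bijection it commutes with intersections of subspaces, so intersecting over $\bA \in \cS'_0$ turns the right-hand side into $U^{-1}(\cC_0) = \cC$ and the left-hand side into the simultaneous $(\lambda_\bA)$-eigenspace of $\cS$; this is the displayed equality above, which is the lemma.

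There is essentially no obstacle here: the only thing one must resist is the temptation to argue merely ``eigenvectors of $\bA$ map to eigenvectors of $U^{-1}\bA U$'', which would only give that $\cC$ is contained in \emph{some} simultaneous eigenspace of $\cS$; keeping the eigenvalue tuple $(\lambda_\bA)$ fixed throughout, and using invertibility of $U$ for the reverse inclusion, is what makes the statement sharp. I would close with the remark that conjugation $\bA \mapsto U^{-1}\bA U$ is an algebra automorphism, hence preserves all commutation and anticommutation relations, so $\cS$ is a legitimate (commuting, if $\cS'_0$ was) family of operators; in the applications where this lemma is invoked, $U$ is moreover chosen so that these conjugates are again Pauli operators.
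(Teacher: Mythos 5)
Your proof is correct and uses the same core calculation as the paper: conjugating the eigenvalue equation $\bA\ket{\psi}=\lambda\ket{\psi}$ by $U$ to get $(U^{-1}\bA U)U^{-1}\ket{\psi}=\lambda U^{-1}\ket{\psi}$. The paper's proof consists of exactly that one observation (i.e.\ only the forward inclusion), so your additional care in fixing the eigenvalue tuple and checking the reverse inclusion via invertibility of $U$ makes your version slightly more complete, but it is not a different approach.
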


\begin{proof}
Observe that if
$$
\bA \ket{\psi} = \alpha  \ket{\psi},
$$
then
$$
(U^{-1} \bA U) U^{-1} \ket{\psi} = \alpha  U^{-1} \ket{\psi}.
$$
\end{proof}

\begin{lemma} \label{simi2}%
Performing $U$ followed by  measuring the operator ${\bf A}$ is
equivalent to measuring the operator $ U^{-1}{\bf A} U$ followed by
performing $U$.
\end{lemma}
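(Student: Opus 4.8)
The plan is to reduce both procedures to the standard projective-measurement formalism and to check, outcome by outcome, that they induce the same measurement statistics and the same post-measurement state on every input.

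First I would fix the spectral decomposition $\bA = \sum_a a\, P_a$, where the $P_a$ are the orthogonal projectors onto the eigenspaces of $\bA$ and $\sum_a P_a = I$. Since $U$ is unitary, the operators $Q_a := U^{-1} P_a U$ are again mutually orthogonal projectors with $\sum_a Q_a = I$, and $U^{-1}\bA U = \sum_a a\, Q_a$ is the spectral decomposition of $U^{-1}\bA U$; thus ``measuring $U^{-1}\bA U$'' is precisely the projective measurement $\{Q_a\}$. The single algebraic identity behind the whole argument is $U Q_a = P_a U$.

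Next I would evaluate both procedures on an arbitrary pure input $\ket{\psi}$ (the mixed-state case being identical, with $\proj{\psi}$ replaced by $\rho$ and squared norms by traces). For ``perform $U$, then measure $\bA$'': the state becomes $U\ket{\psi}$, outcome $a$ occurs with probability $\|P_a U\ket{\psi}\|^2 = \bra{\psi}U^{-1}P_a U\ket{\psi}$, and the state collapses to $P_a U\ket{\psi}/\|P_a U\ket{\psi}\|$. For ``measure $U^{-1}\bA U$, then perform $U$'': outcome $a$ occurs with probability $\|Q_a\ket{\psi}\|^2 = \bra{\psi}Q_a\ket{\psi} = \bra{\psi}U^{-1}P_a U\ket{\psi}$, the state collapses to $Q_a\ket{\psi}/\|Q_a\ket{\psi}\|$, and applying $U$ sends this to $U Q_a\ket{\psi}/\|Q_a\ket{\psi}\| = P_a U\ket{\psi}/\|P_a U\ket{\psi}\|$, using $U Q_a = P_a U$ together with $\|Q_a\ket{\psi}\| = \|U^{-1}P_a U\ket{\psi}\| = \|P_a U\ket{\psi}\|$. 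Hence the outcome distributions coincide and the final states coincide for every $a$, which is the claimed equivalence.

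There is no genuine obstacle here; the proof is essentially a one-line computation once the picture is set up. The only points needing a little care are to verify \emph{both} the statistics and the post-measurement states (not just one of them), and to note that the equivalence is in fact an identity of quantum instruments: on outcome $a$, ``perform $U$ then measure $\bA$'' sends $\rho$ to the unnormalized branch $P_a U\rho U^{-1}P_a$, while ``measure $U^{-1}\bA U$ then perform $U$'' sends $\rho$ to $U Q_a \rho Q_a U^{-1} = P_a U\rho U^{-1}P_a$, the same map. In this form the equivalence survives tensoring with an untouched spectator system, which is the way the lemma is invoked in the error-correction arguments that follow.
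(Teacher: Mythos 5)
Your proposal is correct and follows essentially the same route as the paper: the paper identifies ``perform $U$ then measure $\bA$'' with the instrument $\{\Pi_i U\}$ and ``measure $U^{-1}\bA U$ then perform $U$'' with $\{U(U^{-1}\Pi_i U)\} = \{\Pi_i U\}$, which is exactly your identity $UQ_a = P_aU$. Your version merely spells out the outcome probabilities and post-measurement states that the paper's instrument-level equality already encodes.
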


\begin{proof}
Let $\Pi_i$ be a projector onto the eigenspace corresponding to
eigenvalue $\lambda_i$ of ${\bf A}$. Performing $U$ followed by
measuring the operator ${\bf A}$ is equivalent to the instrument
(generalized measurement) given by the set of operators $\{ \Pi_i U
\}$. The operator $ U^{-1}{\bf A} U$ has the same eigenvalues as
$\bf A$, and the projector onto the eigenspace corresponding to
eigenvalue $\lambda_i$ is $U^{-1} \Pi_i U$. Measuring the operator $
U^{-1}{\bf A} U$ followed by performing $U$ is equivalent to the
instrument $\{U (U^{-1} \Pi_i U) \} = \{ \Pi_i U \}$.
\end{proof}

\chapter*{Chapter 3: \hspace{1pt} Standard quantum error-correcting codes}
\label{cp_III}

\addcontentsline{toc}{chapter}{Chapter 3:\hspace{0.15cm} Standard
quantum error-correcting codes}

\section*{3.1 \hspace{2pt} Discretization of errors}
\label{sec_QECC_channel}%
\addcontentsline{toc}{section}{3.1 \hspace{0.15cm} Discretization of
errors}

It is well known that for standard quantum error correction (i.e.,
that unassisted by entanglement) it suffices to consider errors from
the Pauli group (see e.g. \cite{NC00}.) We will review this result
here.

Denote by $\cL$ the space of linear operators defined on the qubit
Hilbert space $\cH_2$. In general, a noisy channel is defined by a
completely positive, trace preserving (CPTP) map $\cN: \cL^{\otimes
n} \rightarrow \cL^{\otimes n}$ taking $n$-qubit density operators
on Alice's system to density operators on Bob's system. We will
often encounter isometric operators $U: \cH_2^{\otimes n_1}
\rightarrow  \cH_2^{\otimes n_2}$. The corresponding
\emph{superoperator}, or CPTP map, is marked by a hat $\hat{U}:
\cL^{\otimes n_1} \rightarrow \cL^{\otimes n_2}$ and defined by
$$
\hat{U}(\rho) = U \rho U^\dagger.
$$
Observe that $\hat{U}$ is independent of any phases factors
multiplying $U$. Thus, for a  Pauli operator $N_\bu$, $\hat{N}_\bu$
only depends on the equivalence class $[N_\bu]$.

Our communication scenario involves two spatially separated parties,
Alice and Bob, connected by a noise channel $\cN$. Alice wishes to
send $k$ qubits \emph{perfectly} to Bob through $\cN$. An $[[n,k]]$
QECC consists of
\begin{itemize}
\item An encoding isometry $\cE = \hat{U}_{\rm enc}:
 \cL^{\otimes k}  \rightarrow \cL^{\otimes n}$
\item A decoding CPTP map
$\cD: \cL^{\otimes n} \rightarrow \cL^{\otimes k}$
\end{itemize}
such that
$$
\cD \circ \cN \circ \hat{U}_{\rm enc} = \id^{\otimes k},
$$
where $\id: \cL \rightarrow \cL$ is the identity map on a single
qubit.

To make contact with classical error correction it is necessary to
discretize the errors introduced by $\cN$. This is done in two
steps. First, the CPTP map $\cN$ may be (non-uniquely) written in
terms of its Kraus representation
$$
\cN(\rho) = \sum_i A_i \rho A_i^\dagger.
$$
Second, each $A_i$ may be expanded in the Pauli operators
$$
A_i = \sum_{\bu \in  (\bbZ_2)^{2n}} \alpha_{i, \bu} N_\bu.
$$
Define the support of $\cN$ by ${\rm supp}(\cN) = \{\bu \in
(\bbZ_2)^{2n}: \exists i, \alpha_{i, \bu} \neq 0 \}$. The following
theorem allows us to replace the continuous map $\cN$ by the error
set $S = {\rm supp}(\cN)$.

\begin{theorem}%
\label{QECC_discretization}%
If $\cD \circ \hat{N}_\bu \circ \hat{U}_{\rm enc} = \id^{\otimes k}$
for all $\bu \in {\rm supp}(\cN)$, then
 $\cD \circ \cN \circ \hat{U}_{\rm enc} = \id^{\otimes k}$.
\end{theorem}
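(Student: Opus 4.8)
The plan is to show that once the decoder correctly handles every individual Pauli error in $\mathrm{supp}(\mathcal{N})$, linearity and the sandwich structure $A\rho A^\dagger$ force it to handle the whole channel $\mathcal{N}$. First I would unpack the hypothesis: for each $\bu \in \mathrm{supp}(\mathcal{N})$ we have $\mathcal{D}\circ\hat N_\bu\circ\hat U_{\rm enc} = \id^{\otimes k}$, and since $\mathcal{D}$ and $\hat U_{\rm enc}$ are CPTP (indeed $\hat U_{\rm enc}$ is an isometric superoperator), the composition $\mathcal{D}\circ\hat N_\bu\circ\hat U_{\rm enc}$ is a CPTP map from $\mathcal{L}^{\otimes k}$ to itself equal to the identity map. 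The key sub-fact I would invoke here is that if a CPTP map equals the identity channel, then \emph{any} Kraus operator of it is a scalar multiple of the identity operator; more usefully, I will want the stronger bilinear statement below.

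Next, combine the two expansions given just before the theorem. Writing $\mathcal{N}(\rho) = \sum_i A_i\rho A_i^\dagger$ with $A_i = \sum_{\bu} \alpha_{i,\bu} N_\bu$ (sum over $\bu\in S := \mathrm{supp}(\mathcal{N})$), the map $\mathcal{N}\circ\hat U_{\rm enc}$ sends an encoded input $\sigma := U_{\rm enc}\rho U_{\rm enc}^\dagger$ to $\sum_i\sum_{\bu,\bv}\alpha_{i,\bu}\overline{\alpha_{i,\bv}}\, N_\bu\sigma N_\bv^\dagger$. Applying $\mathcal{D}$ and using linearity, I need to understand the cross terms $\mathcal{D}(N_\bu\,U_{\rm enc}\rho U_{\rm enc}^\dagger\,N_\bv^\dagger)$ for $\bu\neq\bv$. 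The diagonal terms $\bu=\bv$ are handled by hypothesis: they contribute $\sum_i\sum_\bu |\alpha_{i,\bu}|^2\,\rho$, and $\sum_i\sum_\bu|\alpha_{i,\bu}|^2 = 1$ by trace preservation of $\mathcal{N}$. So the theorem reduces to showing every off-diagonal term vanishes, i.e. $\sum_i \alpha_{i,\bu}\overline{\alpha_{i,\bv}}\,\mathcal{D}(N_\bu U_{\rm enc}\rho U_{\rm enc}^\dagger N_\bv^\dagger) = 0$ whenever summed against the channel structure.

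The cleanest route — and the step I expect to be the main obstacle — is to upgrade the hypothesis from the single-index statement $\mathcal{D}(\hat N_\bu(\hat U_{\rm enc}(\rho))) = \rho$ to the bilinear statement
\begin{equation}
\mathcal{D}\bigl(N_\bu\, U_{\rm enc}\rho U_{\rm enc}^\dagger\, N_\bv^\dagger\bigr) = 0 \quad\text{for all } \bu\neq\bv \in S,\ \text{all }\rho. \label{eq:bilin}
\end{equation}
This is a standard polarization-type argument: one considers the Pauli error $N_\bu + N_\bv$ (or rather the channel with a single Kraus operator proportional to it, suitably normalized), notes that the hypothesis of the theorem, applied to the enlarged error set, or applied directly via the Knill–Laflamme-style reasoning, forces the $\bu,\bv$ and $\bv,\bu$ cross terms to cancel against each other, and since $\mathcal{D}$ is trace-preserving and completely positive each such term must individually be traceless and in fact zero. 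Concretely I would take three error operators $N_\bu$, $N_\bv$, $N_\bu + N_\bv$ (and possibly $N_\bu + iN_\bv$), apply the single-error hypothesis to each, expand by bilinearity of $(\bS,\bT)\mapsto \mathcal{D}(\bS U_{\rm enc}\rho U_{\rm enc}^\dagger \bT^\dagger)$, and solve the resulting linear system to conclude \eqref{eq:bilin}. Once \eqref{eq:bilin} is in hand, substituting back gives
\begin{equation}
\mathcal{D}\circ\mathcal{N}\circ\hat U_{\rm enc}(\rho) = \sum_i\sum_{\bu,\bv\in S}\alpha_{i,\bu}\overline{\alpha_{i,\bv}}\,\mathcal{D}(N_\bu U_{\rm enc}\rho U_{\rm enc}^\dagger N_\bv^\dagger) = \sum_i\sum_{\bu\in S}|\alpha_{i,\bu}|^2\,\rho = \rho, \nonumber
\end{equation}
which is $\id^{\otimes k}$, completing the proof. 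The one subtlety to watch is that the hypothesis is stated only for $\bu\in\mathrm{supp}(\mathcal{N})$, so when forming $N_\bu + N_\bv$ I should either assume the correctability hypothesis is really about the error set $S$ closed under such combinations, or — cleaner — observe that the decoder's action on the finite-dimensional span of $\{N_\bu U_{\rm enc}\rho U_{\rm enc}^\dagger N_\bv^\dagger : \bu,\bv\in S\}$ is entirely determined by its action on encoded inputs, so the polarization can be carried out internally without enlarging $S$.
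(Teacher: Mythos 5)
There is a genuine gap, and in fact the central claim of your proposal is false. Your displayed bilinear statement
$\cD\bigl(N_\bu\, U_{\rm enc}\rho U_{\rm enc}^\dagger\, N_\bv^\dagger\bigr) = 0$ for $\bu\neq\bv$
fails for degenerate codes. Take the Shor code with $N_\bu=Z_1$ and $N_\bv=Z_2$: the same decoder $\cD$ corrects both errors exactly (they have the same syndrome, and the recovery differs from the actual error by the stabilizer element $Z_1Z_2$), yet for a codeword $\sigma=U_{\rm enc}\rho U_{\rm enc}^\dagger$ one has $\sigma Z_2=\sigma Z_1$ (because $Z_1Z_2$ fixes the code space), hence $\cD(Z_1\sigma Z_2)=\cD(Z_1\sigma Z_1)=\rho\neq 0$. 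The correct statement is only that $\cD(N_\bu\sigma N_\bv^\dagger)=c_{\bu\bv}\,\rho$ for a positive semidefinite Gram matrix $(c_{\bu\bv})$ with $c_{\bu\bu}=1$; the theorem then follows because $\sum_i\sum_{\bu,\bv}\alpha_{i,\bu}\overline{\alpha_{i,\bv}}\,c_{\bu\bv}=1$ by trace preservation, not because each cross term vanishes. Your proposed derivation also cannot be executed as written: the polarization identity expresses $N_\bu\sigma N_\bv^\dagger$ in terms of sandwiches by $N_\bu\pm N_\bv$ and $N_\bu\pm iN_\bv$, but the hypothesis of the theorem says nothing about $\cD$ applied to such sandwiches --- it constrains $\cD$ only on states of the form $N_\bw\sigma N_\bw^\dagger$ with $\bw\in{\rm supp}(\cN)$. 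Your suggested escape (``the decoder's action on the span is determined by its action on encoded inputs'') is exactly what is not true; and if you instead \emph{assume} that the suitably normalized channel with Kraus operator $N_\bu+N_\bv$ is corrected to the identity, you are assuming something false in the degenerate case above.

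The paper avoids all of this by working at the level of a Stinespring dilation of $\cD$: writing $\cD=\tr_{\cL'}\circ\,\hat U_{\rm dec}$, the hypothesis $\cD(N_\bu\sigma N_\bu^\dagger)=\rho$ forces $U_{\rm dec}N_\bu U_{\rm enc}\ket{\psi}=\ket{\psi}\otimes\ket{\bu}$ for a fixed environment state $\ket{\bu}$ independent of $\psi$ (purity of the output plus linearity in $\ket{\psi}$). Linearity of the \emph{isometry} --- rather than of the channel --- then gives $U_{\rm dec}A_iU_{\rm enc}\ket{\psi}=\ket{\psi}\otimes\sum_\bu\alpha_{i,\bu}\ket{\bu}$, the cross terms automatically come out as $\langle\bv\ket{\bu}\proj{\psi}$, and tracing out $\cL'$ finishes the proof. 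If you want to keep your diagonal/off-diagonal bookkeeping, you must first establish the proportionality $\cD(N_\bu\sigma N_\bv^\dagger)=c_{\bu\bv}\rho$, and the dilation argument is the natural way to do that.
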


\begin{proof}
We may extend the map $\cD$ to its Stinespring dilation \cite{Sti55}
-- an isometric map $\hat{U}_{\rm dec}$  with a larger target
Hilbert space $\cL^{\otimes n} \otimes \cL'$, such that
$$
\cD (\rho)= \tr_{\cL'} \hat{U}_{\rm dec} (\rho).
$$
If for all $\bu \in {\rm supp}(\cN)$ and all pure states
$\ket{\psi}$ in $\cH_2^{\otimes n}$, the following equation holds
$$
U_{\rm dec}  N_\bu {U}_{\rm enc} \ket{\psi} =  \ket{\psi} \otimes
\ket{\bu}
$$
for some pure state $\proj{\bu}$ on $\cL'$, then by linearity, we
have
$$
U_{\rm dec} A_i {U}_{\rm enc} \ket{\psi} = \ket{\psi} \otimes
\ket{i},
$$
with the unnormalized state $\ket{i} =  \sum_\bu \alpha_{i, \bu}
\ket{\bu}$. Furthermore,
\begin{equation}%
\begin{split}
(\hat{U}_{\rm dec} \circ \cN \circ
 \hat{U}_{\rm enc}) (\proj{\psi})
&= U_{\rm dec} \left( \sum_i A_i {U}_{\rm enc} \proj{\psi} {U}_{\rm
enc}^\dagger A_i^\dagger \right)
 U_{\rm dec}^\dagger\\
& =  \proj {\psi} \otimes \sum_i \proj{i},
\end{split}
\end{equation}%
where the second subsystem corresponds to $\cL'$. Tracing out the
latter gives
$$
(\cD \circ \cN \circ \hat{U}_{\rm enc}) (\proj{\psi}) = \proj
{\psi},
$$
concluding the proof.
\end{proof}

\section*{3.2 \hspace{2pt} Canonical codes}

\addcontentsline{toc}{section}{3.2 \hspace{0.15cm} Canonical codes}

We first introduce the simplest form of standard quantum
error-correcting codes (QECCs), the canonical codes. The canonical
code $\cC_0$ is defined by the following trivial encoding operation
$\cE_{0}=\hat{U}_0$, where
\begin{equation}%
\label{QECC_en}%
U_0: \ket{\varphi} \mapsto \ket{\b0} \ket{\varphi}.
\end{equation}%
In other words, the register containing $\ket{\b0}$ (of size $s =
n-k$ qubits) is appended to the registers containing $\ket{\varphi}$
(of size $k$ qubits). We call the encoded state in (\ref{QECC_en}) a
\emph{codeword} of $\cC_0$. What errors can this canonical code
$\cC_0$ correct with such a simple-minded encoding?

\begin{Prop}
\label{QECC_code1}%
The encoding given by $\cE_0$ and a suitably-defined decoding map
$\cD_0$ can correct the error set
\begin{equation}%
\bE_0 = \{X^\ba Z^\bb \otimes X^{\alpha(\ba)} Z^{\beta(\ba)} :
 \ba,\bb \in (\bbZ_2)^s \} ,
\end{equation}%
for any fixed functions $\alpha,\beta:(\bbZ_2)^s  \to (\bbZ_2)^k$.
\end{Prop}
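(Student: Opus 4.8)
The plan is to exhibit an explicit decoding map $\cD_0$ and verify that it corrects every error in $\bE_0$. The key observation is that a codeword of $\cC_0$ has the form $\ket{\b0}\ket{\varphi}$, where the first register of $s = n-k$ qubits is in the all-zeros state, and that the $Z$-operators on the first register act trivially on it. So I would first compute the action of a generic error $E = X^\ba Z^\bb \otimes X^{\alpha(\ba)} Z^{\beta(\ba)}$ on a codeword: the $Z^\bb$ piece on the first register is a phase (equal to $+1$ on $\ket{\b0}$, since $Z\ket{0}=\ket{0}$), so $E\ket{\b0}\ket{\varphi} = \ket{\ba} \otimes X^{\alpha(\ba)} Z^{\beta(\ba)}\ket{\varphi}$. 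Thus the first register now contains the ``syndrome'' $\ba \in (\bbZ_2)^s$, cleanly recording which $X$-type error on the first block occurred, while the second register has suffered the data-dependent corruption $X^{\alpha(\ba)} Z^{\beta(\ba)}$.

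Next I would define the decoding CPTP map $\cD_0$ as the following composite: (i) measure the first register in the computational basis (equivalently, measure the observables $Z^{\be_1}, \dots, Z^{\be_s}$ on the first block), obtaining an outcome $\ba$; (ii) discard the first register; (iii) apply the correcting unitary $Z^{\beta(\ba)} X^{\alpha(\ba)}$ to the remaining $k$-qubit register (the inverse of $X^{\alpha(\ba)} Z^{\beta(\ba)}$ up to a harmless phase). Applied to $E\ket{\b0}\ket{\varphi}$, step (i) deterministically yields $\ba$ and collapses nothing further, step (ii) leaves $X^{\alpha(\ba)} Z^{\beta(\ba)}\ket{\varphi}$, and step (iii) returns $\ket{\varphi}$ up to global phase. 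By linearity over superpositions of $\ket{\varphi}$ and by the fact that $\cD_0$ as a CPTP map is insensitive to global phases, this shows $\cD_0 \circ \hat{N}_\bu \circ \hat{U}_0 = \id^{\otimes k}$ for each $\bu$ corresponding to an element of $\bE_0$. One should also check $\cD_0$ is genuinely trace preserving and well-defined independent of the input (it is: the measurement-then-conditional-unitary structure is manifestly CPTP), and that it does not depend on $\bb$ at all — which is exactly why the whole set $\bE_0$, ranging over all $\bb \in (\bbZ_2)^s$, is correctable by a single $\cD_0$.

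The only subtlety — and the part I would be most careful about — is bookkeeping of the phases and making sure the argument is stated for the full error \emph{set} acting on arbitrary (possibly entangled-with-reference) inputs, not just for product states. The clean way is to verify the isometric ``conditional'' identity $U_{\rm dec} N_\bu U_0 \ket{\varphi} = \ket{\varphi}\otimes\ket{\bu}$ for a Stinespring dilation $U_{\rm dec}$ of $\cD_0$, where $\ket{\bu}$ records the syndrome $\ba$; then one invokes exactly the linearity argument already used in the proof of Theorem \ref{QECC_discretization}. In fact, since $\bE_0$ is already a discrete set, Theorem \ref{QECC_discretization} is not strictly needed, but its proof technique (extend $\cD_0$ to its Stinespring dilation, check the correction identity on pure states, conclude by linearity) is precisely the template to follow. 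The computations involved — the commutation of $Z^\bb$ with $\ket{\b0}$, and $X^{\alpha(\ba)}Z^{\beta(\ba)}$ being inverted up to phase — are routine given equations (\ref{eq:primi}) and (\ref{eq:seconda}), so there is no real obstacle beyond organizing the argument carefully.
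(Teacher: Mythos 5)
Your proposal is correct and follows essentially the same route as the paper: compute that $E(\ket{\b0}\otimes\ket{\varphi}) = \ket{\ba}\otimes X^{\alpha(\ba)}Z^{\beta(\ba)}\ket{\varphi}$ up to phase, read off the reduced syndrome $\ba$ via the $Z^{\be_1},\dots,Z^{\be_s}$ observables (elementary coding), and undo the residual Pauli; the paper likewise packages this as the controlled unitary $U_{0,\rm dec}=\sum_{\ba}\proj{\ba}\otimes X^{-\alpha(\ba)}Z^{-\beta(\ba)}$ followed by discarding the syndrome register, which is exactly the coherent version you describe at the end.
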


\begin{proof}
\begin{figure}
\centering
\includegraphics[width=5.0in]{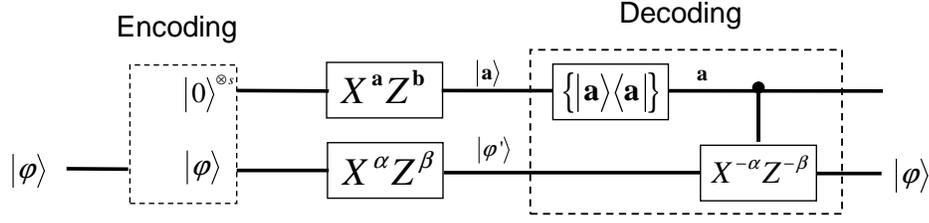}
\caption{A canonical quantum error-correcting code.} \label{QECC_CC}
\end{figure}
The protocol is shown in Figure \ref{QECC_CC}. After applying an
error $E \in \bE_0$, the channel output becomes (up to a phase
factor):
\begin{equation}%
E\left(\ket{\b0}\otimes\ket{\varphi}\right)=(X^\ba Z^\bb)\ket{\b0}
\otimes (X^{\alpha(\ba)} Z^{\beta(\ba)}) \ket{\varphi} = \ket{\ba}
\otimes \ket{\varphi'}
\end{equation}%
where $\ket{\ba}=X^{\ba}\ket{\b0}$, and
$\ket{\varphi'}=(X^{\alpha(\ba)} Z^{\beta(\ba)}) \ket{\varphi}$.

As the vector $(\ba,\bb)$ completely specifies the error operator
$E$, it is called the \emph{error syndrome}. However, in order to
correct this error, only the \emph{reduced syndrome}, $\ba$,
matters. In effect, $\ba$ has been encoded using elementary coding
(see section 2.6.1), and the receiver Bob can identify the reduced
syndrome by simultaneously measuring the
$Z^{\be_1},\cdots,Z^{\be_s}$ observables. He then performs
$X^{-\alpha(\ba)} Z^{-\beta(\ba)}$ on the remaining $k$-qubit system
$\ket{\varphi'}$, returning it to the original state
$\ket{\varphi}$.

Since the goal is the transmission of quantum information, no actual
measurement is necessary. Instead, Bob can perform the CPTP decoding
operation $\cD_0$ consisting of the controlled unitary
\begin{equation}%
\label{QECC_decode}%
U_{0,\rm dec} = \sum_{\ba}\proj{\ba}\otimes
X^{-\alpha(\ba)}Z^{-\beta(\ba)},
\end{equation}%
which is constructed based on the reduced syndrome, and is also
known as \emph{collective measurement}, followed by discarding the
unwanted systems.
\end{proof}

We can rephrase the above error-correcting procedure in terms of the
stabilizer formalism. Let $\cS_0=\langle Z_1 , \cdots , Z_s \rangle$
be an Abelian group of size $2^s$. Group $\cS_0$ is called the
stabilizer for $\cC_0$, since every element of $\cS_0$ fixes the
codewords of $\cC_0$. Notice that we have used $Z_i$ to represent
$Z^{\be_i}$ here for simplicity.

\begin{Prop}
\label{QECC_dumb}%
The QECC $\cC_0$ defined by $\cS_0$ can correct an error set $\bE_0$
if for all $E_1,E_2\in \bE_0$, $E_2^\dagger E_1 \in \cS_0 \bigcup
(\cG_n-\cZ(\cS_0))$, where $\cZ(\cS)$ is the normalizer of group
$\cS$.
\end{Prop}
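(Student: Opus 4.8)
The plan is to reduce the statement to Proposition~\ref{QECC_code1}, by showing that the hypothesis forces $\bE_0$ to lie inside a canonical error set of the type treated there. Throughout one works modulo overall phases, which is harmless since $\hat{N}_\bu$ depends only on $[N_\bu]$. Split the $n$ qubits into the first $s = n-k$ ``syndrome'' qubits, on which $\cS_0 = \langle Z_1,\dots,Z_s\rangle$ acts, and the last $k$ ``logical'' qubits; write each $E \in \bE_0$ (up to phase) as $N_\bu = N_{(\bz|\bx)}$ and split $\bx = (\ba|\bc)$, $\bz = (\bb|\bd)$ with $\ba,\bb \in (\bbZ_2)^s$, $\bc,\bd \in (\bbZ_2)^k$. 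Using~(\ref{eq:seconda}), and the fact that for a subgroup of $\cG_n$ the normalizer equals the centralizer, one records two bookkeeping facts: $N_\bw \in \cZ(\cS_0)$ exactly when the $X$-part of $\bw$ vanishes on the first $s$ qubits; and $N_\bw \in \cS_0$ exactly when the $X$-part of $\bw$ vanishes everywhere \emph{and} the $Z$-part of $\bw$ vanishes on the last $k$ qubits. Since $\cS_0$ is abelian, $\cS_0 \subseteq \cZ(\cS_0)$, and the hypothesis ``$E_2^\dagger E_1 \in \cS_0 \cup (\cG_n \setminus \cZ(\cS_0))$'' is just the implication ``$E_2^\dagger E_1 \in \cZ(\cS_0) \Rightarrow E_2^\dagger E_1 \in \cS_0$''.

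The core step is to manufacture well-defined functions $\alpha,\beta : (\bbZ_2)^s \to (\bbZ_2)^k$ describing the logical action of the errors. Call $\ba(E) \in (\bbZ_2)^s$ the reduced syndrome of $E$ (the $X$-part of its label on the first $s$ qubits). If $E_1,E_2 \in \bE_0$ have $\ba(E_1) = \ba(E_2)$, then $E_2^\dagger E_1 \propto N_{\bu_1+\bu_2}$ has vanishing $X$-part on the first $s$ qubits, hence lies in $\cZ(\cS_0)$, hence (by the hypothesis) in $\cS_0$; by the second bookkeeping fact this forces $\bc_1 = \bc_2$ and $\bd_1 = \bd_2$, i.e.\ $E_1$ and $E_2$ apply the same Pauli (up to phase) to the logical register and differ only by a $Z$-type element supported on the syndrome register. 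So, picking one representative $E$ for each reduced syndrome value $\ba$ realized in $\bE_0$ and setting $\alpha(\ba) = \bc(E)$, $\beta(\ba) = \bd(E)$ (and defining $\alpha,\beta$ arbitrarily on unrealized values), every $E \in \bE_0$ equals, up to phase, $X^{\ba(E)}Z^\bb \otimes X^{\alpha(\ba(E))}Z^{\beta(\ba(E))}$ for some $\bb \in (\bbZ_2)^s$ --- because $E$ differs from its representative by an element of $\cS_0$, and multiplying by such an element only alters the $Z$-exponent on the first $s$ qubits.

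Consequently $\bE_0 \subseteq \{X^\ba Z^\bb \otimes X^{\alpha(\ba)}Z^{\beta(\ba)} : \ba,\bb \in (\bbZ_2)^s\}$, which is exactly the canonical error set of Proposition~\ref{QECC_code1} for this choice of $\alpha,\beta$. That proposition gives a decoding $\cD_0$ --- collectively measure $Z_1,\dots,Z_s$, apply the controlled correction $X^{-\alpha}Z^{-\beta}$, and discard the syndrome register --- with $\cD_0 \circ \hat{N}_\bu \circ \hat{U}_0 = \id^{\otimes k}$ for every $\bu$ in that set; restricting to the subset $\bE_0$ (and using once more that $\hat{N}_\bu$ ignores phases) gives the claim.

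The only step with real content is the middle paragraph: that a common reduced syndrome forces a common logical action, which is precisely where the ``degenerate-or-detectable'' hypothesis is used. The main thing to get right is the coordinate dictionary via~(\ref{eq:seconda}) translating ``$\in \cZ(\cS_0)$'' and ``$\in \cS_0$'' into support conditions on the label $\bw$; after that the argument is routine Pauli multiplication modulo phases, and one should remember the trivial case $E_1 = E_2$ (where $E_2^\dagger E_1 = I \in \cS_0$).
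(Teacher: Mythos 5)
Your proof is correct and follows essentially the same route as the paper's: a case split on whether two errors share a reduced syndrome, with degeneracy ($E_2^\dagger E_1 \in \cS_0$) absorbing the first case and syndrome distinguishability the second, all feeding into the canonical decoder of Proposition~\ref{QECC_code1}. You are in fact more careful than the paper, whose proof tacitly treats $\bE_0$ as already parameterized by $(\ba,\bb)$; your explicit construction of $\alpha,\beta$ from one representative per realized syndrome --- justified by showing the hypothesis forces errors with a common reduced syndrome to have a common logical action --- closes that gap.
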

\begin{proof}
Since the vector $(\ba,\bb)$ completely specifies the error operator
$E$, we consider the following two different cases:
\begin{itemize}
\item If two error operators $E_1$ and $E_2$ have the same reduced
syndrome $\ba$, then the error operator $E_2^\dagger E_1$ gives us
the all-zero reduced syndrome. Therefore, $E_2^\dagger E_1\in\cS_0$.
This error $E_2^\dagger E_1$ has no effect on the codeword.
\item If two error operators $E_1$ and $E_2$ have different reduced
syndromes, and let $\ba$ be the reduced syndrome of $E_2^\dagger
E_1$, then $E_2^\dagger E_1 \not\in \cZ(\cS_0)$. This error
$E_2^\dagger E_1$ can be corrected by the decoding operation given
in (\ref{QECC_decode}).
\end{itemize}
\end{proof}

\section*{3.3 \hspace{2pt} The general case}

\addcontentsline{toc}{section}{3.3 \hspace{0.15cm} The general case}

\begin{theorem}%
\label{QECC_general}%
Given an Abelian group $\cS_I$ of size $2^{n-k}$ that does not
contain $-I$, there exists an $[[n,k]]$ quantum error-correcting
code $\cC$ defined by the encoding and decoding pair $(\cE,\cD)$
with the following properties:
\begin{enumerate}%
\item The code $\cC$ can correct the error set $\bE$ if for all $E_1,
E_2 \in \bE$, $E_2^\dagger E_1 \in \cS_I \bigcup
(\cG_n-\cZ(\cS_I))$.
\item The codespace $\cC$ is a simultaneous eigenspace of the
$\cS_I$.
\item To decode, the reduced error syndrome is obtained by
simultaneously measuring the observables from $\cS_I$.
\end{enumerate}%
\end{theorem}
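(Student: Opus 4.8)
The plan is to reduce the general case to the canonical code of Section~3.2 via a symplectic change of basis, using the machinery of Theorems~\ref{GSsymp} and~\ref{thm2}. First I would pass from the Abelian group $\cS_I$ to its image under the isomorphism $[N]^{-1}: [\cG_n] \to (\bbZ_2)^{2n}$, obtaining an isotropic subspace $V$ of dimension $n-k$ (isotropic precisely because $\cS_I$ is Abelian, by~(\ref{eq:seconda}); the hypothesis $-I \notin \cS_I$ guarantees the lift to genuine Pauli operators is consistent). By Theorem~\ref{GSsymp}, since $V$ is isotropic we have $\symp V = 0$ and $V = \iso V = \spann{\bu_1,\dots,\bu_{n-k}}$ sitting inside a full symplectic basis $(\bu_i,\bv_i)_{i=1}^n$ of $(\bbZ_2)^{2n}$.

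Next I would define the symplectomorphism $\Upsilon$ sending this symplectic basis to the standard one, i.e.\ $\Upsilon(\bu_i) = \bg_i$ and $\Upsilon(\bv_i) = \bh_i$; since both are symplectic bases, $\Upsilon$ preserves the symplectic form, so Theorem~\ref{thm2} produces a unitary $U_\Upsilon$ on $\cH_2^{\otimes n}$ with $[N_{\Upsilon(\bu)}] = [U_\Upsilon N_\bu U_\Upsilon^{-1}]$ for all $\bu$. Equivalently (applying Lemma~\ref{sim} directly to $\cS_I \sim \cS_0$, which preserves commutation relations since both are Abelian of the same size and lifted consistently), there is a unitary $U$ with $U \cS_0 U^{-1} = \cS_I$ up to phases, where $\cS_0 = \langle Z_1,\dots,Z_{n-k}\rangle$ is the stabilizer of the canonical code $\cC_0$.

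Now I would \emph{define} the encoding as $\cE = \hat{U} \circ \cE_0$, i.e.\ $U_{\rm enc} = U \, U_0$, and the decoding as $\cD = \cD_0 \circ \hat{U}^{-1}$, where $(\cE_0,\cD_0)$ is the canonical pair from Proposition~\ref{QECC_code1}. Property~(2) is then immediate from Lemma~\ref{simi3}: $\cC = U(\cC_0)$ is a simultaneous eigenspace of $\{U \bA U^{-1} : \bA \in \cS_0\} = \cS_I$. For property~(1), observe that conjugation by $U$ is a group isomorphism of $\cG_n$ that carries $\cS_0$ to $\cS_I$ and hence $\cZ(\cS_0)$ to $\cZ(\cS_I)$ and $\cG_n - \cZ(\cS_0)$ to $\cG_n - \cZ(\cS_I)$; so the hypothesis $E_2^\dagger E_1 \in \cS_I \cup (\cG_n - \cZ(\cS_I))$ translates, upon setting $E_i' = U^{-1} E_i U$, into $(E_2')^\dagger E_1' \in \cS_0 \cup (\cG_n - \cZ(\cS_0))$, and the conjugated error set $\bE' = U^{-1}\bE\, U$ must be of the form $\bE_0$ in Proposition~\ref{QECC_code1} — this is the one genuinely substantive point, and I expect it to be the main obstacle: one needs to check that any Pauli error set closed under the syndrome-difference condition relative to $\cS_0 = \langle Z_1,\dots,Z_s\rangle$ can, after absorbing stabilizer elements and a choice of coset representatives, be written with each element acting as $X^\ba Z^\bb$ on the first $s$ qubits and some fixed $X^{\alpha(\ba)}Z^{\beta(\ba)}$ on the last $k$; the condition $(E_2')^\dagger E_1' \in \cS_0$ whenever the reduced syndromes agree is exactly what makes $\alpha,\beta$ well-defined functions of $\ba$. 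Then Proposition~\ref{QECC_code1} gives $\cD_0 \circ \hat{N}_{\bu'} \circ \cE_0 = \id^{\otimes k}$ for all $\bu' \in \bE'$, and composing with $\hat{U}$, $\hat{U}^{-1}$ yields $\cD \circ \hat{N}_\bu \circ \cE = \id^{\otimes k}$ for all $\bu \in \bE$; Theorem~\ref{QECC_discretization} then upgrades this to correctness against the full channel $\cN$ with $\mathrm{supp}(\cN) = \bE$. Finally property~(3) follows from Lemma~\ref{simi2}: measuring $\cS_0$ after applying $U^{-1}$ (which is what $\cD_0 \circ \hat U^{-1}$ does at the syndrome-extraction stage) is equivalent to measuring $U \cS_0 U^{-1} = \cS_I$ directly on the received state.
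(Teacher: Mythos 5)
Your proposal follows essentially the same route as the paper's proof: reduce to the canonical code via a unitary intertwining $\cS_I$ with $\cS_0=\langle Z_1,\dots,Z_{n-k}\rangle$ (your explicit passage through Theorems~\ref{GSsymp} and~\ref{thm2} is exactly how the paper says Lemma~\ref{sim} is obtained), define $\cE$ and $\cD$ by conjugating the canonical pair, and invoke Lemmas~\ref{simi3} and~\ref{simi2} for properties (2) and (3). If anything, you are slightly more careful than the paper at the one substantive step — verifying that the conjugated error set really has the form $\bE_0$ of Proposition~\ref{QECC_code1}, with $\alpha,\beta$ well defined because errors sharing a reduced syndrome differ by an element of $\cS_0$ — which the paper passes over by citing Proposition~\ref{QECC_dumb}.
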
%

\begin{proof}%
\begin{figure}
\centering
\includegraphics[width=5.8in]{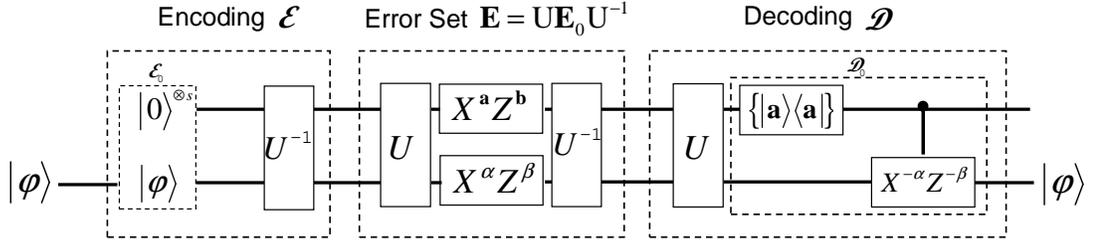}
\caption{A standard quantum error-correcting code.} \label{QECC}
\end{figure}
The protocol is shown in Figure \ref{QECC}. Since $\cS_I$ has the
same commutation relations with the stabilizer $\cS_0$ of the
canonical code $\cC_0$ given in the previous section, by Lemma
\ref{sim}, there exists an unitary matrix $U$ such that
$\cS_0=U\cS_IU^{-1}$. Define $\cE=U^{-1}\circ\cE_0$ and
$\cD=\cD_0\circ U$, where $\cE_0$ and $\cD_0$ are given in
(\ref{QECC_en}) and (\ref{QECC_decode}), respectively.
\begin{enumerate}
\item Let $\bE_0$ be the error set that can be corrected by $\cC_0$.
Then by Proposition \ref{QECC_code1},
\[
\cD_0\circ E_0\circ \cE_{0}=\text{id}^{\otimes k}
\]
for any $E_0\in\bE_0$. Let $\bE=\{U^{-1}E_0 U: \forall
E_0\in\bE_0\}$. It follows that, for any $E\in\bE$,
\[
\cD\circ E\circ\cE=\text{id}^{\otimes k}.
\]
Thus, the encoding and decoding pair $(\cE,\cD)$ corrects $\bE$.
Following Proposition \ref{QECC_dumb}, the correctable error set
$\bE$ contains all $E_1, E_2$ such that $E_2^\dagger E_1 \in \cS_I
\bigcup (\cG_n-\cZ(\cS_I)$.

\item Since $\cC_0$ is the simultaneous $+1$ eigenspace of $\cS_0$,
and $\cS_I=U^{-1}\cS_0 U$, Lemma \ref{simi3} guarantees that the
codespace $\cC$ after encoding $\cE$ is a simultaneous eigenspace of
$\cS_I$.

\item The decoding operation $\cD_0$ involves
\begin{enumerate}[i.]
\item measuring the set of generators of $\cS_0$, yielding the error
syndrome according to the error $E_0$.
\item  performing a recovering operation $E_0$ again to undo the
error.
\end{enumerate}
By Lemma \ref{simi2}, performing $\cD=\cD_0\circ U$ is equivalent to
measuring $\cS_I=U^{-1}\cS_0 U$, followed by performing the
recovering operation $U^{-1} E_0 U$ , followed by $U$ to undo the
encoding.
\end{enumerate}
\end{proof}%
\bigskip
We said an $[[n,k]]$ QECC defined by $\cS_I$ to have distance $d$,
if for all operators $E_1$ and $E_2$ with weigh $< d$ and $E_1 \neq
E_2$, either
\begin{enumerate}[i.]
\item $E_2^\dagger E_1 \notin \cG_n-\cZ(\cS_I)$, or
\item $E_2^\dagger E_1 \in \cS_I$.
\end{enumerate}
The code is called \emph{non-degenerate} if the second condition is
not invoked. A QECC with distance $d$ can correct up to $t$-qubit
errors, where $t=\lfloor (d-1)/2 \rfloor$. Such code is called an
$[[n,k,d]]$ QECC.

\section*{3.4 \hspace{2pt} Relation to symplectic codes}
\addcontentsline{toc}{section}{3.4 \hspace{0.15cm} Relation to
symplectic codes}

\begin{Prop}
\label{sptoqecc}%
Consider an $[n,k,d]$ symplectic code $\Csp$ defined by $\Hsp$. If
$\Csp$ is dual-containing, then $\Csp$ defines a non-degenerate
$[[n,k,d]]$ QECC.
\end{Prop}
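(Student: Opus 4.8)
The plan is to take the dual-containing symplectic code $\Csp = \rs{\Hsp}^\perp$ and manufacture from it an Abelian stabilizer group inside $\cG_n$, then invoke Theorem~\ref{QECC_general}. First I would set $\cV = \rs{\Hsp} \subseteq (\bbZ_2)^{2n}$, which is an $(n-k)$-dimensional subspace, and define the candidate stabilizer as $\cS_I = \{ N_\bu : \bu \in \cV \}$ (choosing phases so that $-I \notin \cS_I$; the dual-containing hypothesis is exactly what makes this consistent). The key point is that $\cS_I$ is Abelian: for $\bu,\bv \in \cV$ we have $\bu \odot \bv^\T = 0$ because $\cV = \rs{\Hsp}$ is self-orthogonal (this is the content of ``$\Csp$ dual-containing''), and then (\ref{eq:seconda}) gives $N_\bu N_\bv = N_\bv N_\bu$. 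Since $[N]$ is an isomorphism by (\ref{eq:primi}), $|\cS_I| = 2^{n-k}$, so $\cS_I$ has the right size and Theorem~\ref{QECC_general} applies, producing an $[[n,k]]$ QECC $\cC$ whose codespace is a simultaneous eigenspace of $\cS_I$ and whose decoding measures the observables in $\cS_I$.

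Next I would translate the error-correction condition of Theorem~\ref{QECC_general}(1) into a distance statement. An error operator of weight $< d$ corresponds, via $N$, to a vector $\bu \in (\bbZ_2)^{2n}$ with $\wt_{\rm sp}(\bu) < d$ (using $\wt(N_\bu) = \wt_{\rm sp}(\bu)$). The normalizer $\cZ(\cS_I)$ corresponds under $[N]$ to $\cV^\perp = \Csp$: indeed $N_\bw$ commutes with every $N_\bu$, $\bu \in \cV$, iff $\bw \odot \bu^\T = 0$ for all such $\bu$, i.e.\ iff $\bw \in \cV^\perp = \Csp$. So $\cG_n - \cZ(\cS_I)$ corresponds to the complement of $\Csp$. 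By the definition of the distance of $\Csp$ (section 2.4), every nonzero $\bu$ with $\wt_{\rm sp}(\bu) < d$ satisfies $\bu \notin \Csp$, equivalently $\Hsp \odot \bu^\T \neq \b0^\T$, so $N_\bu \in \cG_n - \cZ(\cS_I)$. Now for any two errors $E_1 \neq E_2$ of weight $< d$, write $E_2^\dagger E_1 = N_\bu$ up to phase with $\bu = \bu_1 + \bu_2 \neq \b0$; since $\wt_{\rm sp}$ of a sum is controlled by the union of supports, $\wt_{\rm sp}(\bu) < d$ as well (more carefully: the errors we need to correct are those of weight $\le t = \lfloor (d-1)/2 \rfloor$, so $E_2^\dagger E_1$ has weight $< d$). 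Hence $\bu \notin \Csp$, so $E_2^\dagger E_1 \in \cG_n - \cZ(\cS_I)$, which is one of the two alternatives in Theorem~\ref{QECC_general}(1). Thus $\cC$ corrects all weight-$\le t$ errors, i.e.\ has distance $d$, giving an $[[n,k,d]]$ QECC.

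Finally I would address non-degeneracy. By the definition given just after Theorem~\ref{QECC_general}, the code is non-degenerate if the alternative ``$E_2^\dagger E_1 \in \cS_I$'' is never needed. But we have just shown that for $E_1 \neq E_2$ of weight $< d$, $E_2^\dagger E_1$ always lands in $\cG_n - \cZ(\cS_I)$, so the first alternative always holds and the second is never invoked; hence the code is non-degenerate. Equivalently, a nontrivial stabilizer element $N_\bu$, $\bu \in \cV \setminus \{\b0\}$, has weight $\geq d$ because $\cV = \rs{\Hsp} = \Csp^\perp \subseteq \Csp$ (dual-containing!) and $\Csp$ contains no nonzero vector of weight $< d$.

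The main obstacle, I expect, is bookkeeping on phases and on the precise weight estimate for $E_2^\dagger E_1$: one must be careful that the group $\cS_I$ can be chosen with $-I \notin \cS_I$ (so Theorem~\ref{QECC_general} genuinely applies) and that ``weight $< d$'' is the right hypothesis after multiplying two weight-$\le t$ errors. Both are routine given (\ref{eq:primi}), (\ref{eq:seconda}) and $\wt(N_\bu) = \wt_{\rm sp}(\bu)$, but they are the places where a careless argument would slip.
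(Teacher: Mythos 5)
Your proposal is correct and follows essentially the same route as the paper: build the Abelian stabilizer $\cS_I$ from the rows of the self-orthogonal $\Hsp$, invoke Theorem~\ref{QECC_general}, and use the distance of $\Csp$ to show that $E_2^\dagger E_1$ for distinct weight-$\le t$ errors always lands in $\cG_n - \cZ(\cS_I)$, which also gives non-degeneracy. Your write-up is in fact more careful than the paper's on the identification $\cZ(\cS_I) \leftrightarrow \Csp$, the phase choice ensuring $-I \notin \cS_I$, and the weight bookkeeping for the product of two errors.
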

\begin{proof}
Since $\Hsp$ is self-orthogonal, that means the group $\cS_I$
generated by the operator $g_i=N_{\br_i}$, where $\br_i$ is the
$i$-th row of $\Hsp$, is an Abelian group with size $2^{n-k}$. From
Theorem \ref{QECC_general}, $\cS_I$ defines an $[[n,k]]$ QECC $\cC$.

For all vectors $\bu_1,\bu_2$ with weight $<t$, where $t=\lfloor
(d-1)/2 \rfloor$, we have
$$\Hsp\odot (\bu_1-\bu_2) \neq \b0^\T,$$ or, equivalently,
$$N_{\bu_2}^{\dagger}N_{\bu_1} \not\in\cG_n-\cZ(\cS_I).$$
Therefore $\cC$ is a non-degenerate QECC with distance $d$.

\end{proof}

\subsection*{3.4.1 \hspace{2pt} The CSS construction}
\addcontentsline{toc}{subsection}{3.4.1 \hspace{0.15cm} The CSS
construction}

\begin{Prop}
Given a dual-containing classical binary codes $[n,k,d]$ $C$, there
exists an $[[n,2k-n,d]]$ QECC.
\end{Prop}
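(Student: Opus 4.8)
The plan is to reduce the CSS statement to the quaternary-code machinery already developed, specifically Proposition \ref{prop_QtoSP} together with Proposition \ref{sptoqecc}. A binary code embeds into a quaternary code in a standard way: given a dual-containing binary $[n,k,d]$ code $C$ with parity-check matrix $H$, I would consider the quaternary code $C_4 \subseteq (\bbF_4)^n$ whose parity-check matrix over $\bbF_4$ is simply $H$ itself (viewing the binary entries $0,1$ as elements of $\bbF_4$). Since $H$ has $n-k$ rows, $C_4$ is an $[n,k]_4$ code. The first task is to check its minimum distance: I expect it to remain $\geq d$ because any low-weight quaternary codeword of $C_4$, when hit with the conjugate-linear structure, would force a low-weight binary codeword of $C$ or of its dual $C^\perp \subseteq C$, contradicting $d$. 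This distance bookkeeping — confirming that passing from $C$ to $C_4$ and then to $\Csp$ does not lose distance — is the step I expect to require the most care.

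Next I would invoke Proposition \ref{prop_QtoSP}: from the $[n,k,d]_4$ code $C_4$ I obtain a symplectic code $\Csp$ with parameters $[n, 2k-n, d]$, defined by the symplectic parity-check matrix $\Hsp = \gamma(\tilde H_4)$ where $\tilde H_4 = \left(\begin{smallmatrix}\omega H\\ \bar\omega H\end{smallmatrix}\right)$. The crucial remaining point is that this $\Csp$ is \emph{dual-containing} in the symplectic sense, i.e. $\Hsp$ is self-orthogonal under $\odot$. By identity (\ref{unx}), self-orthogonality of $\Hsp$ is equivalent to $\tr\inner{\tilde H_4}{\tilde H_4} = 0$, which in turn reduces to checking the Hermitian/trace products among the rows $\omega \br_i$ and $\bar\omega \br_j$ for rows $\br_i, \br_j$ of $H$. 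Because $H$ has binary entries, $\br_i^\dagger = \br_i$, so $\tr\inner{\omega\br_i}{\omega\br_j} = \tr(\bar\omega \omega\, \br_i\!\cdot\!\br_j) = \tr(\br_i\!\cdot\!\br_j)$ and similarly for the other blocks, where $\br_i\!\cdot\!\br_j$ is the ordinary binary inner product. The dual-containing hypothesis on $C$ says precisely that $\text{rowspace}(H) \subseteq C = \text{rowspace}(H)^\perp$, i.e. $\br_i \cdot \br_j = 0$ for all $i,j$. Feeding this in makes all the relevant trace products vanish, so $\Hsp$ is symplectically self-orthogonal and $\Csp$ is dual-containing.

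With $\Csp$ a dual-containing $[n, 2k-n, d]$ symplectic code in hand, Proposition \ref{sptoqecc} immediately yields a (non-degenerate) $[[n, 2k-n, d]]$ QECC, which is exactly the claimed statement. So the overall skeleton is: (i) lift $C$ to a quaternary code $C_4$ with the same parity-check matrix, tracking that the distance is preserved; (ii) apply Proposition \ref{prop_QtoSP} to get a symplectic code $\Csp$ of dimension $2k-n$; (iii) use the binary-ness of $H$ and the dual-containing hypothesis to verify $\Hsp$ is self-orthogonal; (iv) apply Proposition \ref{sptoqecc}. The only genuinely delicate bookkeeping is the distance argument in step (i) — one must rule out that some $\bbF_4$-linear combination introducing $\omega$'s produces a codeword of $C_4$ of weight below $d$ that has no binary analogue; this follows from the observation (recorded in \S2.5) that $a \in \bbF_4$ vanishes iff $\tr\omega a = \tr\bar\omega a = 0$, which lets one decompose a quaternary codeword into two binary ones lying in $C$.
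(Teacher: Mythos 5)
Your proposal is correct, and it arrives at exactly the same object as the paper's proof: since $\gamma(\omega H)=(H\,|\,\b0)$ and $\gamma(\bar\omega H)=(\b0\,|\,H)$, the matrix $\Hsp=\gamma(\tilde H_4)$ you obtain from Proposition \ref{prop_QtoSP} is literally the block-diagonal CSS matrix $\left(\begin{smallmatrix}H & \b0\\ \b0 & H\end{smallmatrix}\right)$ that the paper writes down directly, and the closing step (Proposition \ref{sptoqecc}) is identical. The difference is only in the scaffolding. The paper verifies the two needed properties by hand: self-orthogonality of $\Hsp$ follows from $\text{rowspace}(H)=C^\perp\subset C$, and the distance condition $\Hsp\odot\bu^\T\neq\b0^\T$ for nonzero $\bu=(\bz|\bx)$ with $\wt_{\rm sp}(\bu)<d$ follows because $\Hsp\odot\bu^\T=\b0^\T$ would force both $\bz$ and $\bx$ into $C$. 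You instead outsource the distance bookkeeping to the quaternary lift plus Proposition \ref{prop_QtoSP}; your check that the lift preserves dimension and distance is the same splitting argument in $\bbF_4$ clothing (every $\ba\in(\bbF_4)^n$ is $\bb_1+\omega\bb_2$ with $\bb_1,\bb_2$ binary, and $H\ba^\T=\b0^\T$ forces both into $C$), so this buys reuse of an existing proposition at the cost of one extra, correctly handled, verification. One small precision point in your self-orthogonality check: the cross-block trace products are $\tr\inner{\omega\br_i}{\bar\omega\br_j}=\tr\bigl(\omega\,(\br_i\cdot\br_j)\bigr)$, not $\tr(\br_i\cdot\br_j)$; the latter vanishes unconditionally (because $\tr 1=0$), so the diagonal blocks are automatically orthogonal and it is only the cross blocks that genuinely use the dual-containing hypothesis $\br_i\cdot\br_j=0$. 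Your conclusion is unaffected since you do feed that hypothesis in, but the phrase ``similarly for the other blocks'' elides the one place where the hypothesis is actually doing work.
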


\begin{proof}
Let $H$ be the parity check matrix of $C$. Since
$$\text{rowspace}(H)= C^\perp \subset C =
\text{rowspace}(H)^\perp,$$ therefore
\begin{equation}%
\label{CSS}%
\Hsp = \left(\begin{array}{c|c}
 H & \b0\\
\b0 & H
\end{array} \right),
\end{equation}%
is dual-containing, and defines an $[n,2k-n]$ symplectic code
$\Csp$. By definition of classical linear codes, for each nonzero
$\ba \in (\bbZ_2)^n$ such that $\wt(\ba) < d$,
\begin{eqnarray*}%
\inner{H}{\ba} \neq \b0^\T,
\end{eqnarray*}%
Then
$$ \Hsp \odot \bu \neq \b0^\T,$$
holds for each nonzero $\bu \in (\bbZ_2)^{2n}$ with $\wt(\bu) < d$.
Thus $\Csp$ defines a non-degenerate $[[n,2k-n,d]]$ QECC by
Proposition \ref{sptoqecc}.
\end{proof}
Actually, instead of using the same code $C$, one can use two codes
$C_1$ and $C_2$, such that $C_1 \subset C_2$, in the CSS
construction \cite{NC00}. Furthermore, the CSS code have one
interesting property that its generators contain all $X$'s and
protect against phase flips and generators contain all $Z$'s and
protect against bit flips.

\section*{3.5 \hspace{2pt} Examples}
\addcontentsline{toc}{section}{3.5 \hspace{0.15cm} Examples}

\subsection*{3.5.1 \hspace{2pt} The $[[9,1,3]]$ Shor code}
\addcontentsline{toc}{subsection}{3.5.1 \hspace{0.15cm} The
$[[9,1,3]]$ Shor code}%

The first quantum error-correcting code constructed by Shor
\cite{Shor95} was a quantum analog of the classical repetition code,
which stores information redundantly by duplicating each bit several
times. We list the stabilizer generators for the $[[9,1,3]]$ Shor
code in Table \ref{QECC1}. It is easy to verify that it can correct
arbitrary single-qubit error.

\begin{table}[htdp]
\begin{center}
\begin{tabular}{c|ccccccccc}
\hline\hline
 $S_1$ & Z & Z & I & I & I & I & I & I & I\\
 $S_2$ & I & Z & Z & I & I & I & I & I & I\\
 $S_3$ & I & I & I & Z & Z & I & I & I & I\\
 $S_4$ & I & I & I & I & Z & Z & I & I & I\\
 $S_5$ & I & I & I & I & I & I & Z & Z & I\\
 $S_6$ & I & I & I & I & I & I & I & Z & Z\\
 $S_7$ & X & X & X & I & I & I & X & X & X\\
 $S_8$ & X & X & X & X & X & X & I & I & I\\ \hline
 $\bar{Z}$ & Z & Z & Z & Z & Z & Z & Z & Z & Z\\
 $\bar{X}$ & X & X & X & X & X & X & X & X & X\\
 \hline\hline
\end{tabular}
\end{center}
\caption{The [[9,1,3]] Shor code.} \label{QECC1}
\end{table}%

\subsection*{3.5.2 \hspace{2pt} The $[[7,1,3]]$ Steane code}
\addcontentsline{toc}{subsection}{3.5.2 \hspace{0.15cm} The
$[[7,1,3]]$ Steane code}

The second example, the $[[7,1,3]]$ Steane code, is constructed
using the CSS construction from dual-containing $[7,4,3]$ Hamming
code with the parity check matrix
\begin{equation}
H=\left(\begin{array}{ccccccc} 0 & 0 & 0 & 1 & 1 & 1 & 1 \\
0 & 1 & 1 & 0 & 0 & 1 & 1 \\
1 & 0 & 1 & 0 & 1 & 0 & 1
\end{array}\right).
\end{equation}
We list the stabilizer generators in Table \ref{QECC2}.
\begin{table}[h]
\begin{center}
\begin{tabular}{c|ccccccc}
\hline\hline
 $S_1$ & I & I & I & Z & Z & Z & Z \\
 $S_2$ & I & Z & Z & I & I & Z & Z \\
 $S_3$ & Z & I & Z & I & Z & I & Z \\
 $S_4$ & I & I & I & X & X & X & X \\
 $S_5$ & I & X & X & I & I & X & X \\
 $S_6$ & X & I & X & I & X & I & X \\ \hline
 $\bar{Z}$ & Z & Z & Z & Z & Z & Z & Z \\
 $\bar{X}$ & X & X & X & X & X & X & X \\
 \hline\hline
\end{tabular}
\end{center}
\caption{The [[7,1,3]] Steane code.} \label{QECC2}
\end{table}%

\section*{3.6 \hspace{2pt} Discussion}
\addcontentsline{toc}{section}{3.6 \hspace{0.15cm} Discussion}

We have developed a canonical code method together with the
stabilizer formalism \cite{CRSS97,DG97thesis,NC00} to introduce the
standard quantum error-correcting codes. The canonical code method
provides us essential insight into the error-correcting property.
First of all, the canonical code is obtained by attaching some
ancillas, initially in the $\ket{0}$ state, to the quantum
information we want to preserve. Therefore the \emph{codewords} of
the canonical code can be easily described by a set of commuting
Pauli $Z$ operators. The error syndrome of each correctable error
can be seen as classical information being encoded in the canonical
code by elementary coding. Therefore, reading out the error syndrome
is equivalent to recovering the classical message. Then we can
restore the codewords of the canonical code by performing a
correction operation based on the measurement outcome since the
outcome tells us which error happens. These two steps, reading out
the error syndrome and performing correction operation, are called
the decoding operation.

For a useful QECC, we expect it to be able to correct at least
arbitrary $t$-qubit errors, for some $t\geq1$. In this sense, the
canonical code is not a satisfactory QECC, but we can transform the
canonical code to a QECC with desirable distance property. The
mapping (encoding) is done with some unitary that takes the
codespace of the canonical code to the codespace specified by the
stabilizer of the QECC. Essentially, all QECCs developed to date are
stabilizer codes. The problem of finding QECCs was reduced to that
of constructing dual-containing symplectic codes, or equivalently,
classical dual-containing quaternary codes\cite{CRSS97}. When binary
codes are viewed as quaternary, this amounts to the well known
Calderbank-Shor-Steane (CSS) construction \cite{Ste96,CS96}. The
requirement that a code contains its dual is a consequence of the
need for a commuting stabilizer group. The virtue of this approach
is that we can directly construct quantum codes from classical codes
with a certain property, rather than having to develop a completely
new theory of quantum error correction from scratch. Unfortunately,
the need for a self-orthogonal parity check matrix presents a
substantial obstacle to importing the classical theory in its
entirety, especially in the context of modern codes such as
low-density parity check (LDPC) codes \cite{MMM04QLDPC}.

In the next chapter, we will show that actually every quaternary (or
binary) classical linear code, not just dual-containing codes, can
be transformed into a QECC, given that the encoder Alice and decoder
Bob have access to shared entanglement. If the classical codes are
not dual-containing, they correspond to a set of stabilizer
generators that do not commute; however, if shared entanglement is
an available resource, these generators may be embedded into larger,
commuting generators, giving a well-defined code space. We call this
the entanglement-assisted stabilizer formalism, and the codes
constructed from it are entanglement-assisted QECCs (EAQECCs).

\chapter*{Chapter 4: \hspace{1pt} Entanglement-assisted quantum error-correcting codes}%
\addcontentsline{toc}{chapter}{Chapter 4:\hspace{0.15cm}
Entanglement-assisted quantum error-correcting codes}

\label{cp_IV}%
We consider the following communication scenario depicted in Figure
\ref{figfather}. The protocol involves two spatially separated
parties, Alice and Bob, and the resources at their disposal are
\begin{itemize}
\item a noisy channel
defined by a CPTP map $\cN:  \cL^{\otimes n} \rightarrow
\cL^{\otimes n}$ taking density operators on Alice's system to
density operators on Bob's system;
\item the $c$-ebit state $\ket{\Phi}^{\otimes c}$ shared
between Alice and Bob.
\end{itemize}
Alice wishes to send $k$-qubit quantum information \emph{perfectly}
to Bob using the above resources. An $[[n,k;c]]$
entanglement-assisted quantum error correcting code (EAQECC)
consists of
\begin{itemize}
\item An encoding isometry $\cE = \hat{U}_{\rm enc}:
 \cL^{\otimes k} \otimes \cL^{\otimes c} \rightarrow \cL^{\otimes n}$
\item A decoding CPTP map
$\cD: \cL^{\otimes n} \otimes \cL^{\otimes c} \rightarrow
\cL^{\otimes k}$
\end{itemize}
such that
$$
 \cD \circ \cN \circ \hat{U}_{\rm enc} = \id^{\otimes k},
$$
where ${U}_{\rm enc}$ is the isometry which appends the state
$\ket{\Phi}^{\otimes c}$,
$$
U_{\rm enc} \ket{\varphi} =  \ket{\varphi} \ket{\Phi}^{\otimes c},
$$
and $\id: \cL \rightarrow \cL$ is the identity map on a single
qubit.
The protocol thus uses up $c$ ebits of entanglement and generates
$k$ perfect qubit channels. We represent it by the resource
inequality (with a slight abuse of notation \cite{DHW05RI})
$$
\langle \cN \rangle  + c \, [q \, q] \geq  k \, [q \rightarrow q].
$$
Even though a qubit channel is a strictly stronger resource than its
static analogue, an ebit of entanglement, the parameter $k-c$ is
still a good (albeit pessimistic) measure of the net noiseless
quantum resources gained. It should be borne in mind that a negative
value of $k$ still refers to a non-trivial protocol.

\begin{figure}
\centering
\includegraphics[width=5.0in]{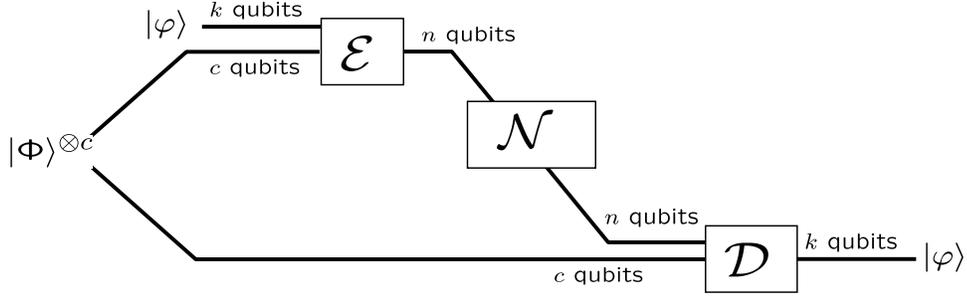}
\caption{A generic entanglement assisted quantum code.}
\label{figfather}
\end{figure}

\section*{4.1 \hspace{2pt} The channel model: discretization of errors}
\addcontentsline{toc}{section}{4.1 \hspace{0.15cm} The channel
model: discretization of errors}

Again we need to show that we can discretize the errors introduced
by $\cN$ in the entanglement-assisted communication scenario. This
can be done using steps described in Section 3.1. The continuous map
$\cN$ then can be replaced by the error set $S = {\rm supp}(\cN)$ by
Theorem \ref{QECC_discretization}.

\section*{4.2 \hspace{2pt} The entanglement-assisted canonical code}
\addcontentsline{toc}{section}{4.2 \hspace{0.15cm} The
entanglement-assisted canonical code}

The entanglement-assisted quantum error-correcting codes (EAQECCs)
come from a simple idea: replacing some portion of the ancillas of
the canonical codes (\ref{QECC_en}) by the maximally entangled
states shared between the sender and receiver. We can construct the
entanglement-assisted (EA) canonical code $\Cea_0$ with the
following trivial encoding operation $\cE_0=\hat{U}_0$ defined by
\begin{equation}%
\label{EAQECC_en}%
U_0:\ket{\varphi} \to \ket{\b0}\otimes\ket{\Phi}^{\otimes
c}\otimes\ket{\varphi} .
\end{equation}%
The operation simply appends $\ell$ ancilla qubits in the state
$\ket{\b0}$, and $c$ copies of $\ket{\Phi}$ (the maximally entangled
state shared between sender Alice and receiver Bob), to the initial
register containing the state $\ket{\varphi}$ of size $k$ qubits,
where $\ell+k+c=n$.

\begin{Prop}
\label{EAQECC_code1}%
The encoding given by $\cE_0$ and a suitably-defined decoding map
$\cD_0$ can correct the error set
\begin{equation}%
\bE_0 = \{X^\ba Z^\bb  \otimes Z^{\ba_1}X^{\ba_2} \otimes
X^{\alpha(\ba,\ba_1,\ba_2)} Z^{\beta(\ba,\ba_1,\ba_2)} :  \ba,\bb
\in (\bbZ_2)^\ell, \ba_1,\ba_2 \in (\bbZ_2)^c \} ,
\end{equation}%
 for any fixed functions $\alpha,\beta:(\bbZ_2)^\ell \times
(\bbZ_2)^c \times (\bbZ_2)^c \to (\bbZ_2)^k$.
\end{Prop}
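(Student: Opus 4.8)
The plan is to follow the proof of Proposition \ref{QECC_code1} essentially verbatim, handling the three registers separately: the block of $\ell$ ancilla qubits in the state $\ket{\b0}$ behaves exactly as in the standard canonical code (elementary coding, Section 2.6.1), the $c$ ebit halves behave as in superdense coding (Section 2.6.2), and the $k$-qubit information register absorbs the residual error. First I would compute the channel output after an arbitrary error $E\in\bE_0$ acts on Alice's $n=\ell+c+k$ qubits. Up to an overall phase,
\[
E\,\bigl(\ket{\b0}\otimes\ket{\Phi}^{\otimes c}\otimes\ket{\varphi}\bigr)
=\ket{\ba}\otimes(Z^{\ba_1}X^{\ba_2}\otimes I^B)\ket{\Phi}^{\otimes c}\otimes\ket{\varphi'},
\]
where $\ket{\ba}=X^{\ba}\ket{\b0}$ (the $Z^\bb$ factor acts trivially on $\ket{\b0}$), $\ket{\varphi'}=X^{\alpha(\ba,\ba_1,\ba_2)}Z^{\beta(\ba,\ba_1,\ba_2)}\ket{\varphi}$, and Bob's halves of the ebits are left untouched by the channel. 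Since $\ket{\Phi}$ is the simultaneous $+1$ eigenstate of $Z\otimes Z$ and $X\otimes X$, the states $(Z^{\ba_1}X^{\ba_2}\otimes I)\ket{\Phi}^{\otimes c}$ for distinct $(\ba_1,\ba_2)$ are mutually orthogonal, and likewise the $\ket{\ba}$ for distinct $\ba$.

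Second I would describe the decoder. At decoding time Bob holds all $n$ channel outputs together with his own $c$ ebit halves. He measures the observables $Z^{\be_1},\dots,Z^{\be_\ell}$ on the ancilla block, reading off $\ba$ as in elementary coding; then, for each of the $c$ ebit pairs, he simultaneously measures $Z\otimes Z$ and $X\otimes X$, reading off $(\ba_1,\ba_2)$ as in superdense coding. Knowing the full reduced syndrome $(\ba,\ba_1,\ba_2)$, he applies $X^{-\alpha(\ba,\ba_1,\ba_2)}Z^{-\beta(\ba,\ba_1,\ba_2)}$ to the $k$-qubit register, returning it to $\ket{\varphi}$, and discards the ancilla and ebit qubits.

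Third, because the goal is to transmit quantum rather than classical information, I would replace the measurements by a coherent controlled unitary in analogy with (\ref{QECC_decode}),
\[
U_{0,\rm dec}=\sum_{\ba,\ba_1,\ba_2}\proj{\ba}\otimes\Pi_{\ba_1,\ba_2}\otimes X^{-\alpha(\ba,\ba_1,\ba_2)}Z^{-\beta(\ba,\ba_1,\ba_2)},
\]
where $\Pi_{\ba_1,\ba_2}$ projects onto $(Z^{\ba_1}X^{\ba_2}\otimes I)\ket{\Phi}^{\otimes c}$; this is well defined since the syndrome-carrying states of the ancilla-plus-ebit registers are orthogonal and sit in tensor product with the information register. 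Composing $\hat{U}_{0,\rm dec}$ with the partial trace over the ancilla and ebit systems yields a CPTP map $\cD_0$ with $\cD_0\circ\hat E\circ\cE_0=\id^{\otimes k}$ for every $E\in\bE_0$, which is the claim. The steps are routine; the only points needing care are the bookkeeping of which qubits Bob possesses at decoding (in particular that he holds \emph{both} halves of every ebit, so the superdense-coding measurement is available), and the check that reading the syndrome coherently does not disturb $\ket{\varphi'}$, which holds because distinct syndromes label orthogonal states that factor out of the information register.
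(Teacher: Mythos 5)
Your proposal is correct and follows essentially the same route as the paper's proof: compute the channel output $\ket{\ba}\otimes\ket{\ba_1,\ba_2}\otimes\ket{\varphi'}$, identify the reduced syndrome $(\ba,\ba_1,\ba_2)$ as classical data recoverable by elementary and superdense coding, and then replace the measurements by the coherent controlled unitary (\ref{EAQECC_decode}) followed by discarding the syndrome registers. The only difference is cosmetic: you make the orthogonality of the syndrome-carrying states explicit, whereas the paper delegates this to Sections 2.6.1 and 2.6.2.
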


\begin{figure}
\centering
\includegraphics[width=5.0in]{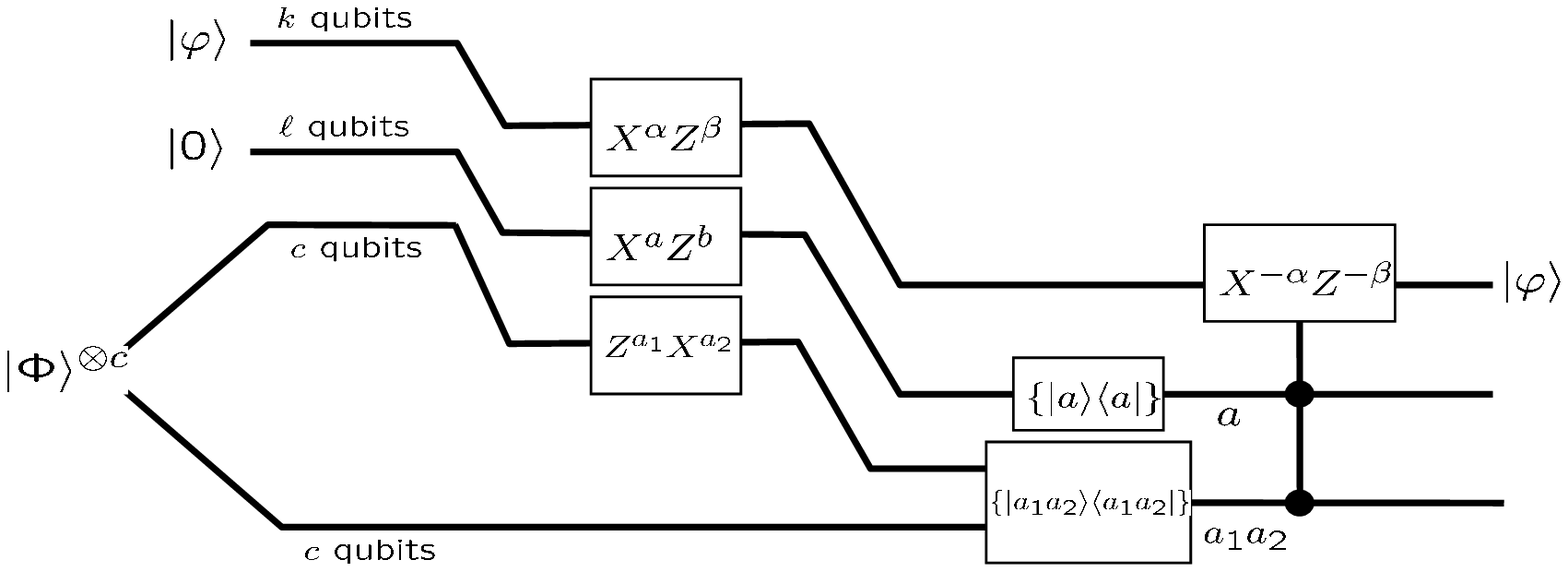}
\caption{The entanglement-assisted canonical code.}
\label{QECC_figdumb}
\end{figure}

\begin{proof}
The protocol is shown in Figure \ref{QECC_figdumb}. After applying
an error $E \in \bE_0$, the channel output becomes (up to a phase
factor):
\begin{equation}%
\label{EAQECC_out}%
(X^\ba Z^\bb)\ket{\b0} \otimes (Z^{\ba_1}X^{\ba_2}\otimes I^B)
\ket{\Phi}^{\otimes c} \otimes (X^{\alpha(\ba,\ba_1,\ba_2)}
Z^{\beta(\ba,\ba_1,\ba_2)})\ket{\varphi} = \ket{\ba} \otimes
\ket{\ba_1,\ba_2} \otimes \ket{\varphi'}
\end{equation}%
where
\begin{eqnarray}
\ket{\ba} &=& X^{\ba} Z^{\bb}\ket{\b0} =X^{\ba}\ket{\b0} \\
\ket{\ba_1, \ba_2} &=& (Z^{\ba_1} X^{\ba_2} \otimes I^B)
\ket{\Phi}^{\otimes c},\\
\ket{\varphi'}&=& X^{\alpha(\ba, \ba_1,\ba_2)} Z^{\beta(\ba, \ba_1,
\ba_2)} \ket{\varphi}. \\
\end{eqnarray}
As the vector $(\ba, \ba_1, \ba_2, \bb)$ completely specifies the
error $E$, it is called the \emph{error syndrome}. The state
(\ref{EAQECC_out}) only depends on the \emph{reduced syndrome} $\br
= (\ba, \ba_1, \ba_2)$. In effect, $\ba$ and $(\ba_1, \ba_2)$ have
been encoded using elementary and superdense coding, respectively.
Bob, who holds the entire state (\ref{EAQECC_out}), can identify the
reduced syndrome. Bob simultaneous measures the $Z^{\be_1}, \dots,
Z^{\be_\ell}$ observables to decode $\ba$, the $X^{\be_1} \otimes
X^{\be_1}, \dots, X^{\be_c} \otimes X^{\be_c}$ observables to decode
$\ba_1$, and the $Z^{\be_1} \otimes Z^{\be_1}, \dots, Z^{\be_c}
\otimes Z^{\be_c}$ observables to decode $\ba_2$. He then performs
$X^{\alpha(\ba, \ba_1, \ba_2)} Z^{\beta(\ba, \ba_1, \ba_2)}$ on the
remaining $k$ qubit system $\ket{\varphi'}$, restoring it to the
original state $\ket{\varphi}$.

Since the goal is the transmission of quantum information, no actual
measurement is necessary. Instead, Bob can perform the following
decoding $\cD_0$ consisting of the controlled unitary
\begin{equation}%
\label{EAQECC_decode}%
U_{0,\rm dec} = \sum_{\ba,\ba_1,\ba_2} \ket{\ba}\bra{\ba}\otimes
\ket{\ba_1,\ba_2}\bra{\ba_1,\ba_2} \otimes
X^{-\alpha(\ba,\ba_1,\ba_2)}Z^{-\beta(\ba,\ba_1,\ba_2)},
\end{equation}
followed by discarding the unwanted subsystems.

\end{proof}

We can rephrase the above error-correcting procedure in terms of the
stabilizer formalism. Let $\cS_0=\langle
\cS_{0,I},\cS_{0,E}\rangle$, where $\cS_{0,I}=\langle Z_1 , \cdots ,
Z_\ell \rangle$ is the isotropic subgroup of size $2^\ell$ and
$\cS_{0,E}=\langle
Z_{\ell+1},\cdots,Z_{\ell+c},X_{\ell+1},\cdots,X_{\ell+c}\rangle$ is
the \emph{symplectic} subgroup of size $2^{2c}$. We can easily
construct an Abelian extension of $\mathcal{S}_0$ that acts on $n+c$
qubits, by specifying the following generators:
\begin{equation}
\begin{split}
\label{EACC_gen}
Z_{1} &\otimes I, \\
&\vdots  \\
Z_{\ell} &\otimes I, \\
Z_{\ell+1} &\otimes Z_{1}, \\
X_{\ell+1} &\otimes X_{1}. \\
&\vdots  \\
Z_{\ell+c} &\otimes Z_{c}, \\
X_{\ell+c} &\otimes X_{c},
\end{split}
\end{equation}%
where the first $n$ qubits are on the side of the sender (Alice) and
the extra $c$ qubits are taken to be on the side of the receiver
(Bob). The operators $Z_{i}$ or $X_{i}$ to the right of the tensor
product symbol above is the Pauli operator $Z$ or $X$ acting on
Bob's $i$-th qubit. We denote such an Abelian extension of the group
$\mathcal{S}_0$ by $\widetilde{\cS}_0$. It is easy to see that the
group $\widetilde{\cS}_0$ fixes the code space $\cC^{\text{EA}}_0$
(therefore $\widetilde{\cS}_0$ is the stabilizer for $\Cea_0$), and
we will call the group $\cS_0$ the \emph{entanglement-assisted
stabilizer} for $\cC^{\text{EA}}_0$.

Consider the parameters of the EA canonical code. The number of
ancillas $\ell$ is equal to the number of generators for the
isotropic subgroup $\cS_{0,I}$. The number of ebits $c$ is equal to
the number of symplectic pairs that generate the entanglement
subgroup $\cS_{0,E}$. Finally, the number of logical qubits $k$ that
can be encoded in $C_0^{\text{EA}}$ is equal to $n-\ell-c$. To sum
up, $C_0^{\text{EA}}$ defined by $\cS_0$ is an $[[n,k;c]]$ EAQECC
that fixes a $2^k$-dimensional code space.

\begin{Prop}%
\label{EAQECC_dumb}%
The EAQECC $\cC_0^{\text{EA}}$ defined by $\cS_0=\langle
\cS_{0,I},\cS_{0,E}\rangle$ can correct an error set $\bE_0$ if for
all $E_1,E_2\in \bE_0$, $E_2^\dagger E_1 \in \cS_{0,I}\bigcup
(\cG_n-\cZ(\langle \cS_{0,I},\cS_{0,E} \rangle))$.
\end{Prop}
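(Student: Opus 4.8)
The plan is to follow the proof of Proposition~\ref{QECC_dumb}, the only new ingredient being the $c$ ebit halves, which are handled by superdense coding exactly as in Proposition~\ref{EAQECC_code1}. By Theorem~\ref{QECC_discretization} it suffices to exhibit a decoder of the form (\ref{EAQECC_decode}) --- that is, to choose functions $\alpha,\beta$ --- for which $\cD_0\circ\hat{E}\circ\cE_0=\id^{\otimes k}$ for every Pauli error $E\in\bE_0$. The first step is the dictionary between syndromes and the Abelian stabilizer $\widetilde{\cS}_0$ generated by (\ref{EACC_gen}): since any such $E$ acts trivially on Bob's qubits, its vector of commutation signs against the generators (\ref{EACC_gen}) --- equivalently, the outcome of measuring $Z_1,\dots,Z_\ell$ on the ancilla block and the paired observables $X_{\ell+j}\otimes X_j$, $Z_{\ell+j}\otimes Z_j$ on the ebits, as in Sections~2.6.1 and~2.6.2 --- is exactly the reduced syndrome $\br=(\ba,\ba_1,\ba_2)$ of Proposition~\ref{EAQECC_code1}. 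Consequently a Pauli $F$ has reduced syndrome $\b0$ if and only if $F\in\cZ(\langle\cS_{0,I},\cS_{0,E}\rangle)$.

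Next comes the case analysis. Take $E_1,E_2\in\bE_0$ and set $F=E_2^\dagger E_1$. If $E_1$ and $E_2$ have the same reduced syndrome then $F\in\cZ(\langle\cS_{0,I},\cS_{0,E}\rangle)$, and the hypothesis $F\in\cS_{0,I}\cup(\cG_n-\cZ(\langle\cS_{0,I},\cS_{0,E}\rangle))$ forces $F\in\cS_{0,I}$; hence $F$ is a product of the ancilla operators $Z_1,\dots,Z_\ell$, it fixes every codeword $\ket{\b0}\otimes\ket{\Phi}^{\otimes c}\otimes\ket{\varphi}$, and $E_1,E_2$ carry the same Pauli on the $k$ information qubits. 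This makes the rule ``for each reduced syndrome $\br$ realized by some $E\in\bE_0$, take $X^{\alpha(\br)}Z^{\beta(\br)}$ to be the information-block part of any such $E$'' well defined, and with this choice the analysis of Proposition~\ref{EAQECC_code1} applies verbatim to every $E\in\bE_0$ (the $Z$'s on the ancilla block act trivially on $\ket{\b0}$ and are irrelevant). If instead $E_1$ and $E_2$ have different reduced syndromes, the syndrome measurement in $\cD_0$ separates them, so each is undone by the recovery attached to its own syndrome. Thus $\cD_0\circ\hat{E}\circ\cE_0=\id^{\otimes k}$ for all $E\in\bE_0$, and Theorem~\ref{QECC_discretization} promotes this to any channel supported on $\bE_0$.

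The only real obstacle is the well-definedness step: one must exclude a pair $E_1,E_2\in\bE_0$ with equal reduced syndrome but different information-block content, which is exactly the case $F=E_2^\dagger E_1\in\cZ(\langle\cS_{0,I},\cS_{0,E}\rangle)\setminus\cS_{0,I}$, an undetectable logical operator --- and this is precisely what the hypothesis forbids. A secondary point needing care is the ebit bookkeeping inside the dictionary: one should verify that the commutation of $F$ with the paired generators $X_{\ell+j}\otimes X_j$ and $Z_{\ell+j}\otimes Z_j$ returns the $j$-th bits of $\ba_1$ and $\ba_2$ respectively, which is just the superdense-coding decoding of Section~2.6.2 read off at the syndrome register; everything else transcribes the unassisted argument of Proposition~\ref{QECC_dumb}.
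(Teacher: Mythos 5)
Your proposal is correct and follows essentially the same route as the paper's proof: the same case split on whether $E_1$ and $E_2$ share a reduced syndrome, with the shared-syndrome case forced into $\cS_{0,I}$ by the hypothesis (so the residual error fixes the codewords) and the distinct-syndrome case handled by the decoder of (\ref{EAQECC_decode}). You merely make explicit two points the paper leaves implicit --- the reduction to Pauli errors via Theorem \ref{QECC_discretization} and the identification of ``zero reduced syndrome'' with membership in $\cZ(\langle\cS_{0,I},\cS_{0,E}\rangle)$ --- which is a faithful elaboration rather than a different argument.
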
%
\begin{proof}%
Since the vector $(\ba,\ba_1,\ba_2,\bb)$ completely specifies the
error operator $E$, we consider the following two different cases:
\begin{itemize}
\item If two error operators $E_1$ and $E_2$ have the same reduced
syndrome $(\ba,\ba_1,\ba_2)$, then the error operator $E_2^\dagger
E_1$ gives us the all-zero syndrome. Therefore, $E_2^\dagger E_1\in
\cS_{0,I}$. This error $E_2^\dagger E_1$ has no effect on the
codewords of $\cC_0^{\text{EA}}$.
\item If two error operators $E_1$ and $E_2$ have different reduced
syndromes, and let $(\ba,\ba_1,\ba_2)$ be the reduced syndrome of
$E_2^\dagger E_1$, then $E_2^\dagger E_1 \not\in Z(\langle
\cS_{0,I},\cS_{0,E}\rangle)$. This error $E_2^\dagger E_1$ can be
corrected by the decoding operation given in (\ref{EAQECC_decode}).
\end{itemize}
\end{proof}

\section*{4.3 \hspace{2pt} The general case}
\addcontentsline{toc}{section}{4.3 \hspace{0.15cm} The general case}

\begin{theorem}%
\label{EAQECC_general}%
Given a general group $\cS=\langle\cS_I,\cS_E\rangle$ with the sizes
of $\cS_I$ and $\cS_E$ being $2^{n-k-c}$ and $2^{2c}$, respectively,
there exists an $[[n,k;c]]$ EAQECC $\cC^{\text{EA}}$ defined by the
encoding and decoding pair $(\cE,\cD)$ with the following
properties:
\begin{enumerate}%
\item The code $\cC^{\text{EA}}$ can correct
the error set $\bE$ if for all $E_1, E_2 \in \bE$, $E_2^\dagger E_1
\in \cS_I\bigcup (\cG_n-\cZ(\langle\cS_I,\cS_E\rangle))$.
\item The codespace $\Cea$ is a simultaneous eigenspace of the
Abelian extension of $\cS$, $\tilde{\cS}$.
\item To decode, the reduced error syndrome is obtained by
simultaneously measuring the observables from $\widetilde{\cS}$.
\end{enumerate}%

\end{theorem}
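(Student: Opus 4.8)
The plan is to reduce the general case to the entanglement-assisted canonical code $\Cea_0$ of Section 4.2, in exact parallel with the way Theorem \ref{QECC_general} reduces the unassisted general case to the canonical code $\cC_0$. The four tools are Lemma \ref{basis} (to put $\cS$ in standard form), Lemma \ref{sim} (to realize the resulting commutation-preserving correspondence as conjugation by a unitary), and Lemmas \ref{simi3} and \ref{simi2} (to transport the eigenspace and the measurement procedure through that conjugation).

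First I would show that $\cS = \langle\cS_I,\cS_E\rangle$ has the same commutation relations as the canonical stabilizer $\cS_0 = \langle\cS_{0,I},\cS_{0,E}\rangle$ of $\Cea_0$. Applying Lemma \ref{basis} to $\cS$ produces a generating set consisting of $\ell := n-k-c$ mutually commuting ``isotropic'' generators together with $c$ anticommuting symplectic pairs --- the prescribed sizes $2^{n-k-c}$ of $\cS_I$ and $2^{2c}$ of $\cS_E$ force $p=\ell$, $q=c$ in the notation of Lemma \ref{basis} --- and this is precisely the commutation pattern of the canonical generators $Z_1,\dots,Z_\ell$ and $(Z_{\ell+j},X_{\ell+j})$, $j=1,\dots,c$, of $\cS_0$. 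Matching the two generating sets gives a commutation-preserving bijection $\cS\sim\cS_0$, so Lemma \ref{sim} supplies a unitary $U$ on $\cH_2^{\otimes n}$ with $\cS_0 = U\cS U^{-1}$ up to phases, and, by keeping the correspondence generator-by-generator, $\cS_{0,I}=U\cS_I U^{-1}$ and $\cS_{0,E}=U\cS_E U^{-1}$.

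Next I set $\cE = U^{-1}\circ\cE_0$ and $\cD = \cD_0\circ U$, with $\cE_0,\cD_0$ the canonical encoding/decoding of Proposition \ref{EAQECC_code1}; here $U$ acts on Alice's $n$ transmitted qubits, so in the $(n+c)$-qubit picture on which $\cE_0,\cD_0$ operate it appears as $U\otimes I^B$, and in the channel picture (where the errors live) it is just the $n$-qubit unitary $U$. Property 1 then follows exactly as in Theorem \ref{QECC_general}: if $\bE_0$ is the error set corrected by $\Cea_0$, characterized by $E_2^\dagger E_1\in\cS_{0,I}\cup(\cG_n-\cZ(\langle\cS_{0,I},\cS_{0,E}\rangle))$ via Proposition \ref{EAQECC_dumb}, then $(\cE,\cD)$ corrects $\bE=\{U^{-1}E_0U:E_0\in\bE_0\}$ by the cancellation $\cD\circ E\circ\cE = \cD_0\circ E_0\circ\cE_0 = \id^{\otimes k}$, and conjugating the syndrome condition by $U^{-1}$ turns it into $E_2^\dagger E_1\in\cS_I\cup(\cG_n-\cZ(\langle\cS_I,\cS_E\rangle))$. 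Property 2 follows from Lemma \ref{simi3}: $\Cea = (U^{-1}\otimes I^B)(\Cea_0)$ and $\Cea_0$ is a simultaneous eigenspace of the Abelian extension $\tilde\cS_0$, so $\Cea$ is a simultaneous eigenspace of $\{(U\otimes I^B)^{-1}\bA(U\otimes I^B):\bA\in\tilde\cS_0\}$; conjugating the explicit generators (\ref{EACC_gen}) shows this set is an Abelian extension of $\cS$ whose isotropic and symplectic parts again have sizes $2^{n-k-c}$ and $2^{2c}$, and we define it to be $\tilde\cS$. Property 3 follows from Lemma \ref{simi2}: $\cD = \cD_0\circ U$ first applies $U$ and then, inside $\cD_0$, measures the generators of $\tilde\cS_0$; Lemma \ref{simi2} lets us commute that measurement to the front, so the reduced syndrome is equivalently obtained by first measuring $\tilde\cS = (U\otimes I^B)^{-1}\tilde\cS_0(U\otimes I^B)$ and then running the correction-and-discard part of $\cD_0$ (preceded by $U$).

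The step I expect to require the most care is bookkeeping rather than anything conceptual: keeping track of which operators act on Alice's $n$ channel qubits versus Bob's $c$ halves of the ebits, consistently promoting the $n$-qubit unitary $U$ to $U\otimes I^B$ when passing to the $(n+c)$-qubit picture, and checking that Lemma \ref{sim} conjugates the isotropic and symplectic \emph{subgroups} correctly --- not merely the ambient group $\cS$ --- so that the Abelian extension obtained by conjugating $\tilde\cS_0$ genuinely splits into the required $(2^{n-k-c},2^{2c})$ pair. The phase ambiguities inherent in Lemma \ref{sim} are harmless, since property 2 only asserts that $\Cea$ is \emph{an} eigenspace of $\tilde\cS$, not specifically the $+1$-eigenspace. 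With these checks in place, the argument is a line-by-line analogue of the proof of Theorem \ref{QECC_general}.
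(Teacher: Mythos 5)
Your proposal is correct and follows essentially the same route as the paper: reduce to the entanglement-assisted canonical code $\Cea_0$ via Lemma \ref{sim}, define $\cE=U^{-1}\circ\cE_0$ and $\cD=\cD_0\circ\hat{\bar U}$ (the paper likewise extends $U$ trivially to Bob's qubits), and establish the three properties by the cancellation argument, Lemma \ref{simi3}, and Lemma \ref{simi2} respectively. Your explicit invocation of Lemma \ref{basis} to verify that $\cS$ matches the commutation pattern of $\cS_0$, and your care about conjugating the subgroups $\cS_I$ and $\cS_E$ generator-by-generator, merely spell out steps the paper's terser proof takes for granted.
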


\begin{figure}
\centering
\includegraphics[width=5.0in]{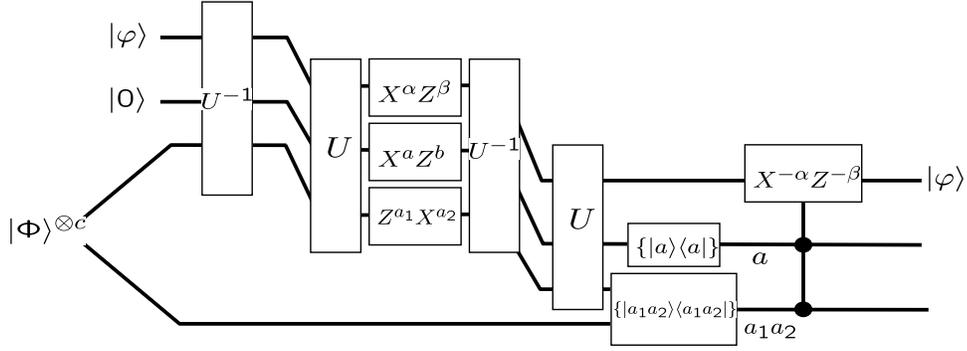}
\caption{Generalizing the entanglement-assisted canonical code
construction.} \label{fignodumb}
\end{figure}

\begin{proof}%
Since the commutation relations of $\cS$ are the same as the EA
stabilizer $\cS_0$ for the EA canonical code $\Cea_0$ in the
previous section, by Lemma \ref{sim}, there exists an unitary matrix
$U$ such that $\cS_0=U\cS U^{-1}$. The protocol is shown in Figure
\ref{fignodumb}. Define $\cE=U^{-1}\circ\cE_0$ and $\cD=\cD_0\circ
\hat{\bar{U}}$, where $\bar{U}$ is the trivial extension of $U$ are
Bob's Hilbert space, and $\cE_0$ and $\cD_0$ are given in
(\ref{EAQECC_en}) and (\ref{EAQECC_decode}), respectively.
\begin{enumerate}
\item Since
\[
\cD_0\circ E_0\circ \cE_{0}=\text{id}^{\otimes k}
\]
for any $E_0\in\bE_0$, then
\[
\cD\circ E\circ\cE=\text{id}^{\otimes k}
\]
follows for any $E\in\bE$. Thus, the encoding and decoding pair
$(\cE,\cD)$ corrects $\bE$. Following Proposition \ref{EAQECC_dumb},
the correctable error set $\bE$ contains all $E_1, E_2$ such that
$E_2^\dagger E_1 \in \cS_I \bigcup
(\cG_n-\cZ(\langle\cS_I,\cS_E\rangle))$.

\item Since $\Cea_0$ is the simultaneous $+1$ eigenspace of $\widetilde{\cS}_0$,
$\cS=U^{-1}\cS_0 U$, and by definition $\Cea=\bar{U}^{-1}(\Cea_0)$,
we conclude that $\Cea$ is a simultaneous eigenspace of
$\widetilde{\cS}$.

\item The decoding operation $\cD_0$
involves
\begin{enumerate}[i.]
\item measuring the set of generators of $\widetilde{\cS}_0$, yielding the error
syndrome according to the error $E_0$.
\item  performing a recovering operation $E_0$ again to undo the
error.
\end{enumerate}
By Lemma \ref{simi2}, performing $\cD=\cD_0\circ \hat{\bar{U}}$ is
equivalent to measuring $\widetilde{\cS}=U^{-1}\widetilde{\cS}_0 U$,
followed by performing the recovering operation $U^{-1}E_0 U$ based
on the measurement outcome, followed by $\hat{U}$ to undo the
encoding.
\end{enumerate}
\end{proof}%
%
%

\section*{4.4 \hspace{2pt} Generalized construction from quaternary codes}
\addcontentsline{toc}{section}{4.4 \hspace{0.15cm} Generalized
construction from quaternary codes}

\begin{Prop}
If a classical $[n,k,d]$ code $C_4$ exists then an $[[n, 2k-n+c, d;
c]]$ EAQECC exists for some non-negative integer $c$.
\end{Prop}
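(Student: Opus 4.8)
The plan is to chain together Proposition~\ref{prop_QtoSP}, Theorem~\ref{GSsymp}, and Theorem~\ref{EAQECC_general}. Beginning from a full-rank $(n-k)\times n$ parity check matrix $H_4$ of $C_4$, Proposition~\ref{prop_QtoSP} produces the symplectic matrix $\Hsp=\gamma(\tilde H_4)$ (with $\tilde H_4$ obtained by stacking $\omega H_4$ over $\bar\omega H_4$) and its symplectic code $\Csp=\rs{\Hsp}^\perp$ of distance $d$. First I would pin down that $\Hsp$ has full row rank $2n-2k$: the rows of $\tilde H_4$ are $\omega$ and $\bar\omega$ times the rows of $H_4$, and since $\{\omega,\bar\omega\}$ is an $\bbF_2$-basis of $\bbF_4$ while the rows of $H_4$ are $\bbF_4$-linearly independent, these $2(n-k)$ vectors are $\bbF_2$-linearly independent; the additive isomorphism $\gamma$ carries this over to $(\bbZ_2)^{2n}$. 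Hence $V:=\rs{\Hsp}$ has dimension $m=2n-2k$.

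Next I would apply Theorem~\ref{GSsymp} to $V$, obtaining $V=\symp V\oplus\iso V$ with $\dim\symp V=2c$, $\dim\iso V=\ell$, where $c,\ell\ge 0$ and $2c+\ell=m=2n-2k$. Transporting this through the isomorphism $[N]:(\bbZ_2)^{2n}\to[\cG_n]$ gives a group $\cS=\langle\cS_I,\cS_E\rangle$ with $\cS_I=[N](\iso V)$ isotropic of size $2^\ell$ and $\cS_E=[N](\symp V)$ symplectic of size $2^{2c}$, and by~(\ref{eq:seconda}) these subgroups obey exactly the commutation structure demanded by Theorem~\ref{EAQECC_general}. That theorem then yields an $[[n,k';c]]$ EAQECC in which $\cS_I$ has size $2^{n-k'-c}$; matching $n-k'-c=\ell=2n-2k-2c$ forces $k'=2k-n+c$, so the parameters $n$, $2k-n+c$, $c$ come out exactly as claimed.

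It then remains to certify the distance. Under $[N]$ the normalizer $\cZ(\cS)$ is the image of $V^\perp$, because $N_\bu$ commutes with every element of $\cS$ iff $\bu\odot\bv^\T=0$ for all $\bv\in V$; and $V^\perp=\rs{\Hsp}^\perp=\Csp$. Since $\Csp$ has distance $d$, every nonzero $\bu$ with $\wt(\bu)<d$ lies outside $\Csp$, so $N_\bu\in\cG_n-\cZ(\cS)$. Consequently, for any two errors $E_1=N_{\bu_1}$, $E_2=N_{\bu_2}$ of weight at most $t=\lfloor(d-1)/2\rfloor$, the operator $E_2^\dagger E_1$ is $N_{\bu_1+\bu_2}$ up to a phase with $\wt(\bu_1+\bu_2)\le 2t<d$, hence $E_2^\dagger E_1\in\cG_n-\cZ(\langle\cS_I,\cS_E\rangle)\subseteq\cS_I\cup(\cG_n-\cZ(\langle\cS_I,\cS_E\rangle))$. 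By item~1 of Theorem~\ref{EAQECC_general} all weight-$\le t$ errors are corrected, so the code has distance $d$ and is a (non-degenerate) $[[n,2k-n+c,d;c]]$ EAQECC.

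The step needing the most care is the bookkeeping around the rank of $\Hsp$: only the exact equality $\dim V=2n-2k$ makes the count $k'=n-\ell-c$ collapse to the advertised $2k-n+c$. The other delicate point is the identification of $\cZ(\cS)$ with the symplectic dual $\Csp$, which is precisely what lets the distance statement of Proposition~\ref{prop_QtoSP} transfer intact; everything else is a direct invocation of results already established.
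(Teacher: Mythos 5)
Your proof is correct and follows essentially the same route as the paper's: Proposition~\ref{prop_QtoSP} to get $\Hsp$, Theorem~\ref{GSsymp} to split $V=\rs{\Hsp}$ into symplectic and isotropic parts, and Theorem~\ref{EAQECC_general} to obtain the code. The extra bookkeeping you supply --- the full-rank argument for $\Hsp$ via the $\bbF_2$-basis $\{\omega,\bar\omega\}$, the explicit parameter matching $k'=2k-n+c$, and the identification of $\cZ(\cS)$ with $\Csp=V^\perp$ to transfer the distance --- is detail the paper leaves implicit, and it is all sound.
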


\begin{proof}
Let $H_4$ be the $(n - k) \times n$ quaternary parity check matrix
for $C_4$. By Proposition \ref{prop_QtoSP}, there exists an
$[n,2k-n,d]$ symplectic code $\Csp$ with parity check matrix $\Hsp =
\gamma(\tilde{H}_4)$, where
\begin{equation}%
\tilde{H}_4 = \left(\begin{array}{c}
 \omega H_4 \\
\bar{\omega} H_4
\end{array} \right).
\end{equation}%
Notice that even if $2k-n<0$, the following still holds
$$\Hsp \odot \bu^\T \neq \b0^\T,$$
for each nonzero $\bu \in (\bbZ_2)^{2n}$ with $\wt(\bu) < d$.

For simplicity, let $V = \text{rowspace}({\Hsp})$. Theorem
\ref{GSsymp} shows that there exists a symplectic basis consisting
of hyperbolic pairs $(\bu_i,\bv_i)$, $i=1,2,\cdots,n$, such that
$\{\bu_1,\cdots,\bu_{c+\ell},\bv_1,\cdots,\bv_c\}$ is a basis for
$V$. Then by the map $N:(\bbZ_2)^{2n}\to \cG_n$, the group
$\cS=\langle\cS_I,\cS_E\rangle$, defines an $[[n,2k-n+c,d;c]]$
EAQECC by Theorem \ref{EAQECC_general}, where
\begin{eqnarray*}
\cS_E&=&\langle N_{\bu_1},N_{\bv_1},\cdots,N_{\bu_c},N_{\bv_c}\rangle \\
\cS_I&=&\langle N_{\bu_{c+1}},\cdots,N_{\bu_{(c+\ell)}}\rangle
\end{eqnarray*}
and $$ c = \frac{1}{2} \dim \text{symp}{V}.$$

When $c=0$, $V$ is dual-containing. The above construction will give
us standard quantum error-correcting codes.
\end{proof}

Any classical binary $[n,k,d]$ code may be viewed as a quaternary
$[n,k,d]_4$ code. In this case, the above construction gives rise to
a CSS-type code.

\section*{4.5 \hspace{2pt} Bounds on performance}
\addcontentsline{toc}{section}{4.5 \hspace{0.15cm} Bounds on
performance}

\label{secVII}%
In this section we shall see that the performance of EAQECCs is
comparable to the performance of QECCs (which are a special case of
EAQECCs).

The two most important outer bounds for QECCs are the quantum
Singleton bound \cite{KL97, Pre98} and the quantum Hamming bound
\cite{Got96}. Given an $[[n,k,d]]$ QECC (which is an $[[n,k,d;0]]$
EAQECC), the quantum Singleton bound reads
$$
n - k \geq 2 (d-1).
$$
The quantum Hamming  bound holds only for non-degenerate codes and
reads
$$
\sum_{j =0}^{\lfloor \frac{d-1}{2} \rfloor} 3^j {n \choose j} \leq
2^{n - k}.
$$
The proofs of these bounds \cite{Got96,Pre98} are easily adapted to
EAQECCs. This was first noted by Bowen \cite{Bow02} in the case of
the quantum Hamming bound. Consequently, an $[[n,k,d;c]]$ EAQECC
satisfies both bounds for any value of $c$. Note that the $\bbF_4$
construction connects the quantum Singleton bound to the classical
Singleton bound $n-k \geq d-1$. An $[n,k,d]$ quaternary code
saturating the classical Singleton bound implies an
$[[n,2k-n+c,d;c]]$ EAQECC saturating the quantum Singleton bound,
that is $n-(k-c)\geq 2(d-1)$.

It is instructive to examine the asymptotic performance of quantum
codes on a particular channel. A popular choice is the tensor power
channel $\cN^{\otimes n}$, where $\cN$ is the depolarizing channel
with Kraus operators $\{ \sqrt{p_0} I, \sqrt{p_1} X, \sqrt{p_2} Y,
\sqrt{p_3} Z \}$, for some probability vector ${\bf p} = (p_0, p_1,
p_2, p_3 )$.

It is well known that the maximal transmission rate $R = k/n$
achievable by a non-degenerate QECC (in the sense of vanishing error
for large $n$ on the channel $\cN^{\otimes n}$) is equal to the
\emph{hashing bound} $R = 1 - H({\bf p})$. Here $H({\bf p})$ is the
Shannon entropy of the probability distribution ${\bf p}$. This
bound is attained by picking a random self-orthogonal code. However
no explicit constructions are known which achieve this bound.

Interestingly, the $\bbF_4$ construction also connects the hashing
bound to the Shannon bound for quaternary channels. Consider the
quaternary channel $a \mapsto a + c$, where $c$ takes on values $0,
\omega,  1, \bar{\omega}$, with respective probabilities $p_0, p_1,
p_2, p_3$. The maximal achievable rate $R = k/n$ for this channel
was proved by Shannon to equal $R = 2 - H({\bf p})$. An $[n,k]$
quaternary code saturating the Shannon bound implies an
$[[n,2k-n+c;c]]$ EAQECC, achieving the hashing bound!


\section*{4.6 \hspace{2pt} Table of codes}
\label{secIX}%
In \cite{CRSS98} a table of best known QECCs was given. Below we
show an updated table which includes EAQECCs.

\bigskip

\begin{table}[h]
\centering
\begin{tabular}{|c||c|c|c|c|c|c|c|c|c|c|c|}
  \hline
  $n \backslash k-c$ & 0 & 1 & 2 & 3 & 4 & 5 & 6 & 7 & 8 & 9 & 10  \\ \hline
  3              & 2 & $2^*$ & 1 & 1 &   &   &   &  & & &     \\ \hline
  4              & $3^*$ & 2 & 2& 1 & 1 &  &  &  & & &      \\ \hline
  5              & 3 & 3 & 2 & $2^*$ & 1 & 1 &  &  & & &      \\ \hline
  6              & 4 & 3 & 2 & 2 & 2 & 1 & 1 & & & &     \\ \hline
  7              & 3 & 3 & 2 & 2 & 2 & $2^*$ & 1 & 1 &  & &  \\ \hline
  8              & 4 & 3 & 3 & 3 & 2 & 2 & 2 & 1 & 1 &  &  \\ \hline
  9              & 4 & $4^*$ & 3 & 3 & 2 & 2 & 2 & $2^*$ & 1 & 1 &   \\ \hline
  10             & $5^*$ & 4 & 4 & 3 & 3 & 2 & 2 & 2 & 2 & 1 & 1 \\ \hline
  \hline
\end{tabular}
\caption{Highest achievable minimal distance $d$ in any
$[[n,k,d;c]]$ EAQECCs.}
\end{table}
\bigskip

The entries with an asterisk mark the improvements over the table
from \cite{CRSS98}. All these are obtained from  Proposition 3.1.
The corresponding classical quaternary code is available online at
{\tt http://www.win.tue.nl/$\sim$aeb/voorlincod.html}.

The general methods from \cite{CRSS98} for constructing new codes
from old also apply here. Moreover, new constructions are possible
since the self-orthogonality condition is removed. An example is
given by the following Theorem.

\begin{theorem}
\begin{enumerate}[(a)]
\item Suppose an $[[n,k,d;c]]$ code exists, then an
$[[n+1,k-1,d';c']]$ code exists for some $c'$ and $d' \geq d$;
\item Suppose a non-degenerate $[[n,k,d;c]]$ code exists, then an
$[[n-1,k+1,d-1;c']]$ code exists for some $c'$.
\end{enumerate}
\end{theorem}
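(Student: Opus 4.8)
The plan is to work entirely at the level of symplectic/quaternary codes, since Proposition~3.1 (together with the generalized construction of Section~4.4) translates any $[n,k,d]_4$ classical code into an $[[n,2k-n+c,d;c]]$ EAQECC, and conversely an $[[n,k,d;c]]$ EAQECC built this way comes from an associated symplectic check matrix $\Hsp$ with the correctable-error distance property $\Hsp\odot\bu^\T\neq\b0^\T$ for all nonzero $\bu$ of weight $<d$. So I would phrase both constructions as manipulations of a parity check matrix and then invoke Theorem~\ref{GSsymp} and Theorem~\ref{EAQECC_general} to repackage the result as an EAQECC, letting $c'$ be whatever $\tfrac12\dim\symp V$ turns out to be for the new matrix $V=\rs{\Hsp'}$. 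The point is that since $c'$ is existentially quantified, I never need to track it: it is simply read off at the end.

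For part (a), the idea is ``lengthening'': starting from a symplectic code on $n$ qubits with check matrix $\Hsp$ of distance $d$ (i.e.\ the $[[n,k,d;c]]$ code), I would append one coordinate. Concretely, add a new column pair and a new row so that the new $n{+}1$-qubit symplectic matrix $\Hsp'$ has one more row than $\Hsp$; this drops the number of logical qubits by one (the parameter $n-(\text{number of rows})$ decreases appropriately, matching $k\to k-1$ after accounting for the entanglement bookkeeping). The new row must be chosen to have support on the new coordinate, so that any codeword of $\rs{\Hsp'}^\perp$ restricted to the old coordinates still lies in the old code, hence the distance cannot decrease: $d'\geq d$. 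One clean way: take the new row to be a single Pauli $Z$ (or $X$) on the new qubit, i.e.\ the vector $(\be_{n+1}\,|\,\b0)$ padded appropriately, so that $\Hsp' = \left(\begin{smallmatrix}\Hsp & \b0\\ \b0 & Z\text{-on-new}\end{smallmatrix}\right)$; then a weight $<d$ vector $\bu'$ on $n{+}1$ qubits that is orthogonal to all rows of $\Hsp'$ must have zero new-coordinate $x$-component (from the new row) and, restricting to the old $n$ coordinates, be orthogonal to $\Hsp$, so either it is supported only on the new coordinate (excluded by weight and the new row, or harmless) or it contradicts the distance of the old code. Apply Theorem~\ref{GSsymp} to $V'=\rs{\Hsp'}$ to extract $\symp{V'}$ and $\iso{V'}$, set $c'=\tfrac12\dim\symp{V'}$, and Theorem~\ref{EAQECC_general} delivers the $[[n+1,k-1,d';c']]$ code.

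For part (b), the idea is the dual operation, ``shortening'', and this is where the non-degeneracy hypothesis is used. Given a non-degenerate $[[n,k,d;c]]$ code with check matrix $\Hsp$, I would delete one coordinate: pass to the punctured matrix $\Hsp''$ on $n-1$ qubits obtained by removing one column pair. Puncturing drops distance by at most one, so the residual distance is at least $d-1$; the subtlety is that puncturing can in principle create a low-weight vector that is orthogonal to all rows of $\Hsp''$ but whose lift to $n$ qubits was \emph{not} orthogonal to $\Hsp$ — precisely the degenerate-error scenario (a weight-$<d$ element lying in the stabilizer rather than outside the normalizer). Non-degeneracy of the original code rules exactly this out: there are no such ``harmless but low-weight'' elements to worry about, so the punctured code genuinely has distance $\geq d-1$ in the EAQECC sense ($\Hsp''\odot\bu^\T\neq\b0^\T$ for nonzero $\bu$ of weight $<d-1$). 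Removing a coordinate while keeping (generically) the same number of independent rows raises the logical dimension by one, giving the parameters $[[n-1,k+1,d-1;c']]$ after re-running Theorem~\ref{GSsymp} and Theorem~\ref{EAQECC_general} with $c'=\tfrac12\dim\symp{\rs{\Hsp''}}$.

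The main obstacle is getting the bookkeeping of the three parameters $(\ell,c,k)$ right under these row/column operations — in particular verifying that in part~(a) adding a row really costs exactly one logical qubit (not more) after the symplectic decomposition is recomputed, and in part~(b) that puncturing does not accidentally make two rows of $\Hsp$ dependent (which would change $k$ by more than one). I expect this to be handled by choosing the added row in (a) to be genuinely new (supported on the fresh coordinate, hence independent of the old rows) and, in (b), by choosing which coordinate to puncture so that no dependency is introduced — a non-degenerate code has enough ``room'' that such a coordinate exists. The distance claims themselves are short once the right matrix is written down; the non-degeneracy hypothesis in (b) is doing real work and should be flagged as the reason the weaker-distance conclusion $d-1$ (rather than $d$) is the best one can assert.
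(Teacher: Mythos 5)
Both halves of your proposal founder on the same bookkeeping identity: for an EAQECC built from an $m$-dimensional rowspace $V\subseteq(\bbZ_2)^{2n}$ with $\dim\symp{V}=2c$ and $\dim\iso{V}=\ell$, one has $k=n-\ell-c$, hence the net yield is $k-c=n-m$. In part (a) you append one fresh coordinate and \emph{one} new row $(\be_{n+1}|\b0)$; this row has vanishing symplectic product with itself and with every (zero-padded) old row, so the isotropic part grows by exactly this one generator while the symplectic part is unchanged, giving $c'=c$ and $k'=(n+1)-(\ell+1)-c=k$. Your construction therefore yields an $[[n+1,k,d';c]]$ code, not the claimed $[[n+1,k-1,d';c']]$: the sentence ``this drops the number of logical qubits by one'' is false, since $n$ and $m$ each grow by one and $n-m$ is unchanged. (The conclusion can still be rescued by afterwards demoting one logical operator into the isotropic group, but you never take that step.) The paper instead adjoins \emph{two} rows --- an overall all-$Z$ and an overall all-$X$ parity check, each touching the new qubit --- so that $m$ grows by $2$ while $n$ grows by $1$, the net yield genuinely drops by one, and every nontrivial single-qubit Pauli on the new coordinate is detected.

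In part (b) the error is more consequential: deleting a column pair from $\Hsp$ while keeping its rows does not puncture the code, it shortens it, and it moves the parameters the wrong way. With $m$ fixed and $n\mapsto n-1$ the net yield $n-m$ \emph{decreases} by one, giving roughly an $[[n-1,k-1,\,\cdot\,;c]]$ code; your claim that ``removing a coordinate while keeping the same number of independent rows raises the logical dimension by one'' contradicts $k=n-m+c$. The paper punctures the dual object: the $(n+\hat{k})$-dimensional normalizer $C=\rs{\Hsp}^\perp$, whose dimension survives the deletion of one $Z$/$X$ coordinate pair (non-degeneracy together with $d\geq 2$ guarantees no nonzero element of $C$ is supported solely on the deleted coordinate), so that the punctured $C'$ sits inside $(\bbZ_2)^{2(n-1)}$ with unchanged dimension and the net yield \emph{rises} by one --- this is where $k+1$ comes from, and at the level of check matrices it corresponds to deleting the column pair \emph{and} two rows. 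Your reading of the non-degeneracy hypothesis (it rules out low-weight normalizer elements that would spoil the distance after puncturing) is in the right spirit, but the construction it is meant to protect is not the one you wrote down.
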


\begin{proof}
(a) Recall that the net yield is $\hat{k}=k-c$. Let $H$ be the
$(n-\hat{k} \times 2n)$ parity check matrix of the $[[n,k,d;c]]$
code. The parity check matrix of the new $[[n+1,\hat{k}-1,d';c']]$
is then
\begin{equation}%
H'=\left(\begin{tabular}{cc|cc}
  0 $\cdots$ 0 & 0 & 1 $\cdots$ 1 & 1\\
  1 $\cdots$ 1 & 1 & 0 $\cdots$ 0 & 0\\
   & 0  &  &0\\
   \raisebox{0.5ex}[0cm][0cm]{$H_Z$} & \vdots & \raisebox{0.5ex}[0cm][0cm]{$H_X$} & \vdots \\
   & 0  &  &0
\end{tabular} \right).
\end{equation}%
This corresponds to the classical construction of adding a parity
check at the end of the codeword  \cite{FJM77}. The additional rows
ensure that errors involving the last qubit are detected. Sometimes
the distance actually increases: for instance, the $[[8,0,4]]$ is
obtained from the $[[7,1,3]]$ code in this way.

(b) We mimic the classical ``puncturing'' method \cite{FJM77}. Let
$C$ be the $(n+\hat{k})$-dimensional subspace of $(\bbZ_2)^{2n}$
corresponding to the  $[[n,k,d;c]]$ EAQEC code. Puncturing $C$ by
deleting the first $Z$ and $X$ coordinate, we obtain a new ``code''
$C'$ which is an $(n+\hat{k})$-dimensional subspace of
$(\bbZ_2)^{2(n-1)}$. This corresponds to an $[[n-1,k+1,d-1;c']]$
EAQEC code, as the minimum distance between the ``codewords'' of $C$
decreases by at most $1$.
\end{proof}

\section*{4.7 \hspace{2pt} Discussion}
\label{secX}%

Motivated by recent developments in quantum Shannon theory, we have
introduced a generalization of the stabilizer formalism to the
setting in which the encoder Alice and decoder Bob pre-share
entanglement (EAQECCs). The powerful canonical code technique again
provides us essential insight into the error-correcting property.
First of all, the entanglement-assisted canonical code is obtained
by replacing some ancillas of the standard canonical code with
maximally entangled states. The codewords of the
entanglement-assisted canonical code then can be described by a set
of commuting operators (see (\ref{EACC_gen})). The error syndrome of
each correctable error can be seen as classical information being
encoded in the entanglement-assisted canonical code by either
elementary coding or superdense coding. Therefore, reading out the
error syndrome is equivalent to recovering the classical message.
Then we can restore the codewords of the entanglement-assisted
canonical code by performing a correction operation based on the
measurement outcome since the outcome tells us which error happens.
These two steps, reading out the error syndrome and performing
correction operation, are called the decoding operation.

Up to this point, the entanglement-assist canonical code is nothing
but the stabilizer formalism. What makes the entanglement-assisted
canonical code different is when half of the maximally entangled
states are assumed to be originally possessed by the receiver Bob
(These half of ebits do not go through the noisy channel). Then the
operators on Alice's sie form a non-commuting set of generators,
allowing us to map arbitrary classical quaternary codes to EAQECCs.

There are two practical advantages of EAQECCs over standard QECCs:
\begin{enumerate}
\item They are much easier to construct from classical
codes because self-orthogonality is not required. This allows us to
import the classical theory of error correction wholesale, including
capacity-achieving modern codes. The attraction of these modern
codes comes from the existence of efficient decoding algorithms that
provide excellent trade-off between decoding complexity and decoding
performance. In fact, these decoding algorithms, such as sum-product
algorithm, can be modified to decode the error syndromes effectively
\cite{MMM04QLDPC}. The only problem of using these iterative
decoding algorithms on quantum LDPC actually comes from those
shortest 4-cycles that were introduced inevitably because of
self-orthogonality constrain. However, we have demonstrated recently
that by allowing assisted entanglement, those 4-cycles can be
eliminated completely, and the performance of the iterative decoding
improves dramatically by our numerically simulation results (see
Chapter 7). This finding further confirms the
contribution of our EA formalism. 

\item Comparing $[[n,k,d;c]]$ EAQECCs to
$[[n,k,d;0]]$ QECCs is not being entirely fair to former, since the
entanglement used in the protocol is a strictly weaker resource than
quantum communication. However, by using an EAQECC, we typically
achieve a \emph{higher rate} for the same distance, or a
\emph{higher distance} for the same rate, than a QECC; and because
entanglement is a ``cheaper'' resources, this is often a worthwhile
trade-off.  Or to think of it a different way, if we construct an
EAQECC and a QECC from two classical codes with the same parameters
$[n,k,d]$, the EAQECC will have a higher rate; or by using an EAQECC
derived from a classical code with higher distance and lower rate,
we can achieve the same rate and a higher distance than a QECC.
\end{enumerate}

If one is interested in applications to fault tolerant quantum
computation, where the resource of entanglement is meaningless, high
values of $c$ are unwelcome because they require a long seed QECCs.
We expect this obstacle to be overcome by bootstrapping.

Another fruitful line of investigation connects to quantum
cryptography. Quantum cryptographic protocols, such as BB84, are
intimately related to CSS QECCs. In \cite{LD07QKD} it is shown that
EAQECCs analogues of CSS codes give rise to key expansion protocols
which do not rely on the existence of long self-orthogonal codes.

\chapter*{Chapter 5: \hspace{1pt} Operator quantum error-correcting codes}
\addcontentsline{toc}{chapter}{Chapter 5:\hspace{0.15cm} Operator
quantum error-correcting codes}

\label{cp_V}%
In this chapter, we will briefly review the well-known operator
quantum error-correcting codes (OQECCs), using the canonical code
method and linking to the operator stabilizer formalism.

\section*{5.1 \hspace{2pt} The canonical code}
\addcontentsline{toc}{section}{5.1 \hspace{0.15cm} The canonical
code}

The idea of OQECCs also comes from a simple idea: replacing some
portion of the ancillas of the canonical code (\ref{QECC_en}) by
some garbage states. We can construct the operator canonical code
$\Cop_0$ with the following trivial encoding operation $\cE_0$
defined by
\begin{equation}
\label{OQECC_en}%
\cE_0:\proj{\varphi} \to \proj{\b0} \otimes \sigma \otimes
\proj{\varphi} .
\end{equation}
The operation simply appends $s$ ancilla qubits in the state
$\ket{\b0}$, and an arbitrary state $\sigma$ of size $r$ qubits, to
the initial register containing the state $\ket{\varphi}$ of size
$k$ qubits, where $s+k+r=n$. These $r$ extra garbage qubits are
called the gauge qubits. Two states of this form which differ only
in $\sigma$ are considered to encode the same quantum information.

\begin{Prop}
\label{prop_OQECC_cc}%
The encoding given by $\cE_0$ and a suitably-defined decoding map
$\cD_0$ can correct the error set
\begin{equation}%
\bE_0 = \{X^\ba Z^\bb \otimes X^\bc Z^\bd \otimes X^{\alpha(\ba)}
Z^{\beta(\ba)} : \ba,\bb \in (\bbZ_2)^s, \bc,\bd \in (\bbZ_2)^r \},
\end{equation}%
for any fixed functions $\alpha,\beta:(\bbZ_2)^s \to (\bbZ_2)^k$.
\end{Prop}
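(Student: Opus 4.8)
The plan is to follow verbatim the template of Propositions~\ref{QECC_code1} and~\ref{EAQECC_code1}; the only genuinely new ingredient is the passive gauge register carrying the arbitrary state $\sigma$, which will simply be traced out at the end. First I would compute the channel output. Applying $E = X^\ba Z^\bb \otimes X^\bc Z^\bd \otimes X^{\alpha(\ba)}Z^{\beta(\ba)} \in \bE_0$ to the encoded state $\proj{\b0}\otimes\sigma\otimes\proj{\varphi}$, the three tensor factors decouple: on the $s$-qubit ancilla block $(X^\ba Z^\bb)\ket{\b0}=\ket{\ba}$ up to a phase, where $\ket{\ba}=X^\ba\ket{\b0}$ (here $Z^\bb$ acts trivially on $\ket{\b0}$, since $\ket{\b0}$ is a $+1$ eigenstate of each $Z^{\be_i}$); on the $r$-qubit gauge block one gets a garbage state $\sigma'=(X^\bc Z^\bd)\,\sigma\,(X^\bc Z^\bd)^\dagger$; and on the $k$-qubit information block $\ket{\varphi'}=X^{\alpha(\ba)}Z^{\beta(\ba)}\ket{\varphi}$. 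Thus the output is $\proj{\ba}\otimes\sigma'\otimes\proj{\varphi'}$, exactly as in Figure~\ref{QECC_CC} with an extra, irrelevant gauge wire.

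Second, I would observe that although the full error syndrome is $(\ba,\bb,\bc,\bd)$, the output depends only on the \emph{reduced syndrome} $\ba$, modulo the gauge state $\sigma'$ which we never care about. As in Section~2.6.1, $\ba$ has been written into the ancilla register by elementary coding, so Bob extracts it by simultaneously measuring the observables $Z^{\be_1},\dots,Z^{\be_s}$, then applies $X^{-\alpha(\ba)}Z^{-\beta(\ba)}$ to the information block to restore $\ket{\varphi}$, and finally discards the ancilla and gauge registers. For coherent quantum transmission no measurement is needed: I would set $\cD_0=\hat{U}_{0,\rm dec}$ followed by tracing out the ancilla and gauge systems, where
\[
U_{0,\rm dec} = \sum_{\ba}\proj{\ba}\otimes I^{\otimes r}\otimes X^{-\alpha(\ba)}Z^{-\beta(\ba)},
\]
and then verify $\cD_0\circ\hat{E}\circ\cE_0=\id^{\otimes k}$ for every $E\in\bE_0$, which is precisely the assertion of the proposition.

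There is no real obstacle here. The one point to watch — and the reason for the hypothesis that $\alpha,\beta$ depend on $\ba$ alone — is that the recovery operation $X^{-\alpha(\ba)}Z^{-\beta(\ba)}$ must be computable from the reduced syndrome without knowledge of $\bc,\bd$; and the fact that the error acts arbitrarily on the gauge block costs nothing precisely because that block is discarded. In other words, the content of the statement is just the identification of the correctable set for the operator canonical code with those Pauli errors that leave the reduced syndrome well defined, in exact parallel with the unassisted and entanglement-assisted canonical codes already treated.
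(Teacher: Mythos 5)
Your proposal is correct and follows essentially the same route as the paper's own proof: you compute the same factored channel output $\proj{\ba}\otimes\sigma'\otimes\proj{\varphi'}$, identify the reduced syndrome $\ba$ as read out by elementary coding, and use the identical coherent recovery $\sum_{\ba}\proj{\ba}\otimes I\otimes X^{-\alpha(\ba)}Z^{-\beta(\ba)}$ followed by discarding the ancilla and gauge registers. No substantive differences to report.
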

\begin{proof}
\begin{figure}
\centering
\includegraphics[width=5.8in]{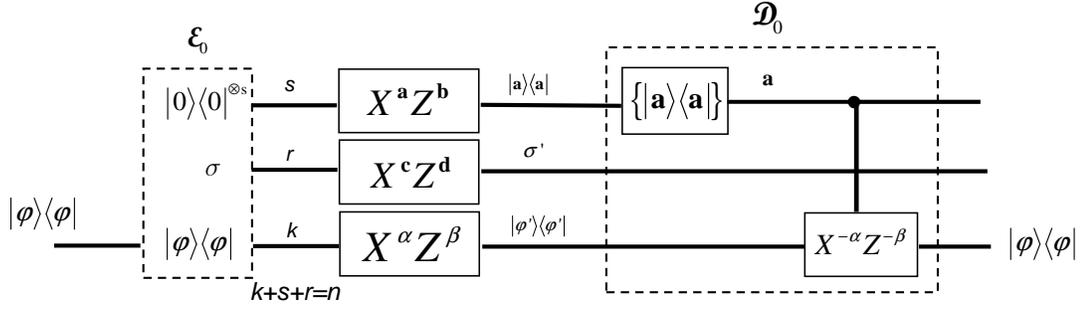}
\caption{The operator canonical code.} \label{OQECC_CC}
\end{figure}
The protocol is shown in Figure \ref{OQECC_CC}. After applying an
error $E \in \bE_0$, the channel output becomes (up to a phase
factor):
\begin{equation}%
\begin{split}
(X^\ba Z^\bb)\proj{\b0} & (X^\ba Z^\bb)^\dagger \otimes (X^\bc
Z^\bd) \sigma (X^\bc Z^\bd)^\dagger \otimes (X^{\alpha(\ba)}
Z^{\beta(\ba)})\proj{\varphi} (X^{\alpha(\ba)} Z^{\beta(\ba)})^\dagger \\
= &\proj{\ba} \otimes \sigma' \otimes \proj{\varphi'}
\end{split}
\end{equation}%
where
\begin{eqnarray}
\ket{\ba}&=& X^{\ba}\ket{\b0}, \\
\sigma' &=& (X^\bc Z^\bd) \sigma (X^\bc Z^\bd)^\dagger, \\
\ket{\varphi'}&=&(X^{\alpha(\ba)} Z^{\beta(\ba)})\ket{\varphi}.
\end{eqnarray}

As the vector $(\ba, \bb, \bc, \bd)$ completely specifies the error
operator $E$, it is called the \emph{error syndrome}. However, in
order to correct this error, only the \emph{reduced syndrome} $\ba$
matters. Here two kinds of \emph{passive} error correction are
involved. The errors that come from vector $\bb$ are passively
corrected because they do not affect the encoded state given in
(\ref{OQECC_en}). The errors that come from vector $(\bc,\bd)$ are
passively corrected because of the subsystem structure inside the
code space: $\rho\otimes\sigma$ and $\rho\otimes\sigma'$ represent
the same information, differing only by a gauge operation. Though
these errors change the encoded states, they do not damage the
information encoded in the states.

The decoding operation $\cD_0$ is constructed based on the reduced
syndrome, and is also known as \emph{collective measurement}. Bob
can recover the state $\ket{\varphi}$ by performing the decoding
$\cD_0$:
\begin{equation}%
\label{OQECC_decode} \cD_0 = \sum_{\ba} \ket{\ba}\bra{\ba}\otimes I
\otimes X^{-\alpha(\ba)}Z^{-\beta(\ba)},
\end{equation}%
followed by discarding the unwanted systems.
\end{proof}

We can rephrase the above error-correcting procedure in terms of the
stabilizer formalism. Let $\cS_0=( \cS_{0,I},\cS_{0,G})$, where
$\cS_{0,I}=\langle Z_1 , \cdots , Z_s \rangle$ is the isotropic
subgroup of size $2^s$ and $\cS_{0,G}=\langle
Z_{s+1},\cdots,Z_{s+r},X_{s+1},\cdots,X_{s+r}\rangle$ is the
\emph{symplectic} subgroup of size $2^{2r}$.

It follows that the two subgroups $(\cS_{0,I},\cS_{0,G})$ define the
canonical OQECC $\Cop_0$ given in (\ref{OQECC_en}). The subgroup
$\cS_{0,I}$ defines a $2^{k+r}$-dimensional code space $\Cop_0$, and
the gauge subgroup $\cS_{0,G}$ specifies all possible operations
that can happen on the gauge qubits. Thus we can use $\cS_{0,G}$ to
define an equivalence class between two states in the code space of
the form: $\rho\otimes \sigma$ and $\rho\otimes\sigma'$, where
$\rho$ is a state on $\cH_2^{\otimes k}$, and $\sigma,\sigma'$ are
states on $\cH_2^{\otimes r}$. Consider the parameters of the
canonical code. The number of ancillas $s$ is equal to the number of
generators for the isotropic subgroup $\cS_{0,I}$. The number of
gauge qubits $r$ is equal to the number of symplectic pairs for the
gauge subgroup $\cS_{0,G}$. Finally, the number of logical qubits
$k$ that can be encoded in $\Cop_0$ is equal to $n-s-r$. To sum up,
$\Cop_0$ defined by $(\cS_{0,I},\cS_{0,G})$ is an $[[n,k;r]]$ OQECC
that fixes a $2^{k+r}$-dimensional code space, within which
$\rho\otimes \sigma$ and $\rho\otimes \sigma'$ are considered to
carry the same information. Notice that there is a tradeoff between
the number of encoded bits and gauge bits, in that we can reduce the
rate by improving the error-avoiding ability or vice versa.

\begin{Prop}
\label{OQECC_dumb} The OQECC $\Cop_0$ defined by
$(\cS_{0,I},\cS_{0,G})$ can correct an error set $\bE_0$ if for all
$E_1,E_2\in \bE_0$, $E_2^\dagger E_1 \in
\langle\cS_{0,I},\cS_{0,G}\rangle\bigcup (\cG_n-\cZ(\cS_{0,I}))$.
\end{Prop}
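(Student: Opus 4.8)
The plan is to follow the template already used in the proofs of Proposition~\ref{QECC_dumb} and Proposition~\ref{EAQECC_dumb}. Recall from Proposition~\ref{prop_OQECC_cc} that the full syndrome vector $(\ba,\bb,\bc,\bd)$, with $\ba,\bb\in(\bbZ_2)^s$ and $\bc,\bd\in(\bbZ_2)^r$, completely specifies an error $E\in\bE_0$, whereas the decoding map $\cD_0$ of (\ref{OQECC_decode}) acts only through the \emph{reduced} syndrome $\ba$ (obtained by measuring $Z_1,\dots,Z_s$, i.e.\ the generators of $\cS_{0,I}$). So I would split the argument into two cases according to whether $E_1$ and $E_2$ have the same reduced syndrome, and in each case verify both that the hypothesis $E_2^\dagger E_1\in\langle\cS_{0,I},\cS_{0,G}\rangle\cup(\cG_n-\cZ(\cS_{0,I}))$ holds and that $\cD_0$ recovers the encoded state.

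\textbf{Case 1:} $E_1,E_2\in\bE_0$ share the same reduced syndrome $\ba$. Then in $E_2^\dagger E_1$ the bit-flip part on the ancilla block cancels, leaving a phase operator $Z^{\bb}$ supported on the $s$ ancilla qubits, which lies in $\cS_{0,I}=\langle Z_1,\dots,Z_s\rangle$; the part acting on the gauge register is some $X^{\bc}Z^{\bd}\in\cS_{0,G}$; and because $\alpha,\beta$ depend on $\ba$ only, the factors acting on the $k$ logical qubits cancel. Hence $E_2^\dagger E_1\in\langle\cS_{0,I},\cS_{0,G}\rangle$, and, invoking the subsystem equivalence $\rho\otimes\sigma\sim\rho\otimes\sigma'$ built into the code space, this composite error leaves the encoded information untouched.

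\textbf{Case 2:} $E_1,E_2\in\bE_0$ have distinct reduced syndromes, so $E_2^\dagger E_1$ has a nonzero reduced syndrome $\ba\neq\b0$. This forces $E_2^\dagger E_1$ to anticommute with some generator $Z_i$ of $\cS_{0,I}$, i.e.\ $E_2^\dagger E_1\in\cG_n-\cZ(\cS_{0,I})$; operationally, the measurement step of (\ref{OQECC_decode}) returns different outcomes for $E_1$ and $E_2$, and the controlled correction $X^{-\alpha(\ba)}Z^{-\beta(\ba)}$ restores $\ket{\varphi}$.

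All the computations here are routine; the one point that deserves care --- and the closest thing to an obstacle --- is Case~1, where one must be explicit that the leftover gauge factor $X^{\bc}Z^{\bd}$ merely deforms $\sigma\mapsto\sigma'$ and is therefore absorbed by the trivial ``$I$'' slot of the decoder (\ref{OQECC_decode}), rather than corrupting the $\proj{\varphi}$ register. This is precisely the feature that separates the operator construction from the plain canonical code of Proposition~\ref{QECC_code1}, so it is worth spelling out rather than leaving implicit.
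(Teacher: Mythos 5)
Your proof is correct and follows essentially the same route as the paper's: the same two-case split on whether $E_1$ and $E_2$ share the reduced syndrome $\ba$, with the same-syndrome case landing $E_2^\dagger E_1$ in $\langle\cS_{0,I},\cS_{0,G}\rangle$ (harmless by the subsystem equivalence) and the different-syndrome case landing it outside $\cZ(\cS_{0,I})$ (corrected by the decoder of (\ref{OQECC_decode})). You merely spell out more explicitly than the paper does why the gauge factor $X^{\bc}Z^{\bd}$ is absorbed rather than corrupting $\proj{\varphi}$, which is a welcome clarification but not a different argument.
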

\begin{proof}
Since the vector $(\ba,\bb,\bc,\bd)$ completely specifies the error
operator $E$, we consider the following two different cases:
\begin{itemize}
\item If two error operators $E_1$ and $E_2$ have the same reduced
syndrome $\ba$, then the error operator $E_2^\dagger E_1$ gives us
all-zero reduced syndrome with some vector $(\bb,\bc,\bd)$.
Therefore, $E_2^\dagger E_1\in\langle\cS_{0,I},\cS_{0,G}\rangle$.
This error $E_2^\dagger E_1$ has no effect on the logical state
$\proj{\varphi}$.
\item If two error operators $E_1$ and $E_2$ have different reduced
syndromes, and let $\ba$ be the reduced syndrome of $E_2^\dagger
E_1$, then $E_2^\dagger E_1 \not\in \cZ(\cS_{0,I})$. This error
$E_2^\dagger E_1$ can be corrected by the decoding operation given
in (\ref{OQECC_decode}).
\end{itemize}
\end{proof}

\section*{5.2 \hspace{2pt} The general case}
\addcontentsline{toc}{section}{5.2 \hspace{0.15cm} The general case}

\begin{theorem}%
\label{OQECC_general}%
Given a general group $\cS=\langle\cS_I,\cS_G\rangle$ with the sizes
of $\cS_I$ and $\cS_G$ being $2^{n-k-r}$ and $2^{2r}$, respectively,
there exists an $[[n,k;r]]$ OQECC $\Cop$ defined by the encoding and
decoding pair $(\cE,\cD)$ with the following properties:
\begin{enumerate}%
\item The code $\Cop$ can correct
the error set $\bE$ if for all $E_1, E_2 \in \bE$, $E_2^\dagger E_1
\in \langle \cS_I,\cS_G\rangle \bigcup (\cG_n-\cZ(\cS_I))$.
\item The codespace $\Cop$ is a simultaneous eigenspace of $\cS_I$.
\item To decode, the reduced error syndrome is obtained by
simultaneously measuring the observables from $\cS_I$.
\end{enumerate}%
\end{theorem}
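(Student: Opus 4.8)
The plan is to follow exactly the strategy used for Theorem~\ref{QECC_general} and Theorem~\ref{EAQECC_general}: build the general OQECC by applying a global unitary change of basis to the canonical OQECC $\Cop_0$ of Section~5.1.

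First I would check that $\cS=\langle\cS_I,\cS_G\rangle$ has the same commutation relations as the canonical operator stabilizer $\cS_0=\langle\cS_{0,I},\cS_{0,G}\rangle$. Recall that $\cS_{0,I}=\langle Z_1,\dots,Z_s\rangle$ with $s=n-k-r$ is isotropic of size $2^{n-k-r}$, while $\cS_{0,G}=\langle Z_{s+1},\dots,Z_{s+r},X_{s+1},\dots,X_{s+r}\rangle$ is generated by $r$ hyperbolic pairs $(Z_{s+i},X_{s+i})$ in which only the two partners anticommute, all other generator pairs commuting, and every generator of $\cS_{0,G}$ commuting with every generator of $\cS_{0,I}$. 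Since by hypothesis $\cS_I$ is isotropic of size $2^{n-k-r}$ and $\cS_G$ is a symplectic subgroup of size $2^{2r}$, Lemma~\ref{basis} (equivalently, Theorem~\ref{GSsymp} carried through the isomorphism $[N]:(\bbZ_2)^{2n}\to[\cG_n]$) shows that a generating set of $\cS$ can be chosen obeying precisely the relations (\ref{comm}) with $n-k-r$ commuting generators and $r$ hyperbolic pairs --- matching $\cS_0$ generator for generator. Hence there is a commutation-preserving bijection $\cS\sim\cS_0$, so Lemma~\ref{sim} yields a unitary $U$ on $\cH_2^{\otimes n}$ with $\cS_0=U\cS U^{-1}$, and consequently $\cS_{0,I}=U\cS_I U^{-1}$ and $\cS_{0,G}=U\cS_G U^{-1}$.

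Next I would set $\cE=U^{-1}\circ\cE_0$ and $\cD=\cD_0\circ\hat U$, with $\cE_0,\cD_0$ as in (\ref{OQECC_en}) and (\ref{OQECC_decode}), and verify the three items as before. For item~(1), let $\bE_0$ be the set correctable by $\Cop_0$ (Proposition~\ref{prop_OQECC_cc}); putting $\bE=\{U^{-1}E_0U:E_0\in\bE_0\}$, the identity $\cD_0\circ E_0\circ\cE_0=\id^{\otimes k}$ gives $\cD\circ E\circ\cE=\id^{\otimes k}$ for all $E\in\bE$. Conjugation by $U$ preserves Pauli weights, the normalizer $\cZ(\cdot)$, and membership in $\langle\cS_{0,I},\cS_{0,G}\rangle$, so the criterion of Proposition~\ref{OQECC_dumb} translates into: $\bE$ contains all $E_1,E_2$ with $E_2^\dagger E_1\in\langle\cS_I,\cS_G\rangle\bigcup(\cG_n-\cZ(\cS_I))$. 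For item~(2), $\Cop_0$ is the simultaneous $+1$ eigenspace of $\cS_{0,I}$ and $\Cop=U^{-1}(\Cop_0)$ with $\cS_I=U^{-1}\cS_{0,I}U$, so Lemma~\ref{simi3} makes $\Cop$ a simultaneous eigenspace of $\cS_I$. For item~(3), $\cD_0$ reads the reduced syndrome by measuring the generators of $\cS_{0,I}$ and then applies a Pauli correction; by Lemma~\ref{simi2}, performing $\cD=\cD_0\circ\hat U$ equals measuring $\cS_I=U^{-1}\cS_{0,I}U$, applying the conjugated correction $U^{-1}E_0U$, and then applying $\hat U$ to undo the encoding.

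The one step that needs genuine care is the first: showing that an \emph{arbitrary} group $\cS=\langle\cS_I,\cS_G\rangle$ with the prescribed sizes is really commutation-equivalent to the canonical $\cS_0$, i.e.\ that one can always choose generators so that $\cS_I$ consists of commuting operators that commute with all of $\cS_G$ while $\cS_G$ splits into $r$ hyperbolic pairs. This is precisely the content of Lemma~\ref{basis}; once that is invoked, everything else is the same bookkeeping as in the QECC and EAQECC proofs.
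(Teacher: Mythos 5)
Your proposal is correct and follows essentially the same route as the paper: conjugate the canonical operator code $\Cop_0$ by the unitary $U$ obtained from Lemma~\ref{sim}, then verify the three items via Propositions~\ref{prop_OQECC_cc} and~\ref{OQECC_dumb} and Lemmas~\ref{simi3} and~\ref{simi2}. The only difference is that you explicitly justify the commutation-equivalence $\cS\sim\cS_0$ via Lemma~\ref{basis}, a step the paper simply asserts; this is a welcome addition, not a deviation.
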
%

\begin{proof}%
\begin{figure}
\centering
\includegraphics[width=5.8in]{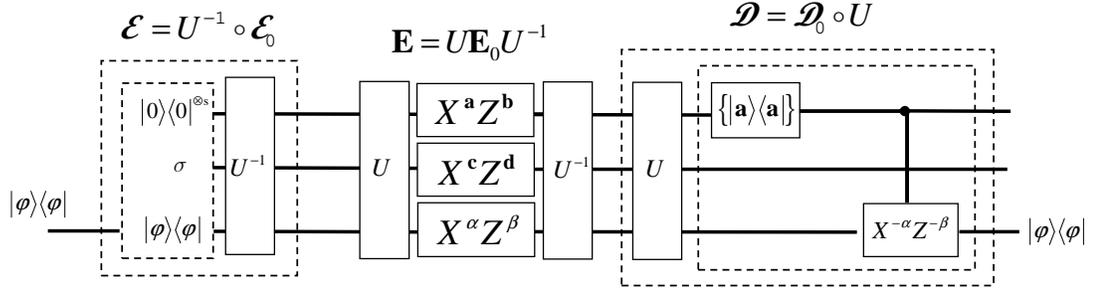}
\caption{The operator quantum error-correcting code.} \label{OQECC}
\end{figure}

Since the commutation relations of $\cS=(\cS_{I},\cS_{G})$ are the
same as the OP stabilizer $\cS_0=(\cS_{0,I},\cS_{0,G})$ for the OP
canonical code $\Cop_0$ in the previous section, by Lemma \ref{sim},
there exists an unitary matrix $U$ such that $\cS_0=U\cS U^{-1}$.
The protocol is shown in Figure \ref{OQECC}. Define
$\cE=\hat{U}^{-1}\circ\cE_0$ and $\cD=\cD_0\circ \hat{U}$, and
$\cE_0$ and $\cD_0$ are given in (\ref{OQECC_en}) and
(\ref{OQECC_decode}), respectively.
\begin{enumerate}
\item Since
\[
\cD_0\circ E_0\circ \cE_{0}=\text{id}^{\otimes k}
\]
for any $E_0\in\bE_0$, then
\[
\cD\circ E\circ\cE=\text{id}^{\otimes k}
\]
follows for any $E\in\bE$. Thus, the encoding and decoding pair
$(\cE,\cD)$ corrects $\bE$. Following Proposition \ref{OQECC_dumb},
the correctable error set $\bE$ contains all $E_1, E_2$ such that
$E_2^\dagger E_1 \in \langle \cS_I,\cS_G\rangle \bigcup
(\cG_n-\cZ(\cS_I))$.

\item Since $\Cop_0$ is the simultaneous $+1$ eigenspace of $\cS_{0,I}$,
$\cS=U^{-1}\cS_0 U$, and by definition $\Cop=U^{-1}(\Cop_0)$, we
conclude that $\Cop$ is a simultaneous eigenspace of $\cS_I$.

\item The decoding operation $\cD_0$
involves
\begin{enumerate}[i.]
\item measuring the set of generators of $\cS_0$, yielding the error
syndrome according to the error $E_0$.
\item  performing a recovering operation $E_0$ again to undo the
error.
\end{enumerate}
By Lemma \ref{simi2}, performing $\cD=\cD_0\circ \hat{U}$ is
equivalent to measuring $\cS=U^{-1}\cS_0 U$, followed by performing
the recovering operation $U^{-1}E_0 U$ based on the measurement
outcome, followed by $\hat{U}$ to undo the encoding.
\end{enumerate}
\end{proof}%

\section*{5.3 \hspace{2pt} Discussion}
\addcontentsline{toc}{section}{5.3 \hspace{0.15cm} Discussion}

The idea of the operator canonical code comes from replacing some
portion of ancillas of the standard canonical code with an arbitrary
garbage state that we do not care about. In terms of the operator
stabilizer formalism, the codespace of the operator canonical code
is described by a set of commuting Pauli $Z$ operators together with
a set of anti-commuting operators specifying all possible operations
that can occur on the garbage state. These operations on the garbage
state do not affect our quantum information, therefore no correction
is needed, and thus the passive error-correcting power is increased.
The error syndrome of each correctable error can be seen as
classical information being encoded in the operator canonical code
by elementary coding. Therefore, reading out the error syndrome is
equivalent to recovering the classical message. Then we can restore
the codewords of the operator canonical code by performing a
correction operation based on the measurement outcome since the
outcome tells us which error happens. These two steps, reading out
the error syndrome and performing correction operation, are called
the decoding operation.

The operator quantum error-correcting codes are a combination of
standard quantum error-correcting codes (active error correction)
and the passive passive error-avoiding schemes, such as
decoherence-free subspaces and noiseless subsystems. The operator
stabilizer is generated by a set of non-commuting generators.
Therefore, we can map arbitrary classical quaternary codes to
OQECCs, though the distance of the OQECCs is not always guaranteed.
There has been a couple of clever construction of OQECCs whose
distance is inherited from their classical counterpart
\cite{AKS06OQECC,KS06}.

The advantage of OQECCs comes from the fact that it is not necessary
to actively correct all errors, but rather only to perform
correction modulo the subsystem structure. One potential benefit of
the new decoding procedure is to improve the threshold of
fault-tolerant quantum computation. This research direction remains
a hot topic in quantum computation.

\chapter*{Chapter 6: \hspace{1pt} Entanglement-assisted operator quantum error-correcting codes}
\label{cp_VI}%
\addcontentsline{toc}{chapter}{Chapter 6:\hspace{0.15cm}
Entanglement-assisted operator quantum error-correcting codes}

Now it becomes clear how to combine the idea of
entanglement-assisted and operator formalism, to construct the
entanglement-assisted operator quantum error-correcting codes
(EAQECCs). We will begin with its canonical code.

\section*{6.1 \hspace{2pt} The canonical code}
\addcontentsline{toc}{section}{6.1 \hspace{0.15cm} The canonical
code}

We illustrate the idea of EAOQECCs by the following canonical code.
Consider the trivial encoding operation $\cE_0$ defined by
\begin{equation}
\label{EAOQECC_en} \cE_0:\proj{\varphi} \to \proj{\b0}^{\otimes s}
\otimes \proj{\Phi}^{\otimes c} \otimes \sigma \otimes
\proj{\varphi} .
\end{equation}
The operation simply appends $s$ ancilla qubits in the state
$\ket{\b0}$, $c$ copies of $\ket{\Phi}$ (a maximally entangled state
shared between sender Alice and receiver Bob), and an arbitrary
state $\sigma$ of size $r$ qubits, to the initial register
containing the state $\ket{\varphi}$ of size $k$ qubits, where
$s+k+r+c=n$. These $r$ extra qubits are the gauge qubits. Two states
of this form which differ only in $\sigma$ are considered to encode
the same quantum information.

\begin{Prop}
\label{EAOQECC_code} The encoding given by $\cE_0$ and a
suitably-defined decoding map $\cD_0$ can correct the error set
\begin{equation}%
\begin{split} \bE_0 =& \{X^\ba Z^\bb  \otimes
Z^{\ba_1}X^{\ba_2} \otimes X^\bc Z^\bd \otimes
X^{\alpha(\ba,\ba_1,\ba_2)} Z^{\beta(\ba,\ba_1,\ba_2)} :
\\ & \ba,\bb \in (\bbZ_2)^s, \ba_1,\ba_2 \in (\bbZ_2)^c, \bc,\bd \in (\bbZ_2)^r \} ,
\end{split}
\end{equation}%
for any fixed functions $\alpha,\beta:(\bbZ_2)^s \times (\bbZ_2)^c
\times (\bbZ_2)^c \to (\bbZ_2)^k$.
\end{Prop}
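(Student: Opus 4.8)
The plan is to follow the template established by Propositions \ref{QECC_code1}, \ref{EAQECC_code1}, and \ref{prop_OQECC_cc}, of which this statement is essentially the common generalization: the error set $\bE_0$ now carries four blocks of syndrome data, one for each type of register ancilla / ebit / gauge / message, and each block is handled by the corresponding mechanism from the earlier proofs. First I would apply an arbitrary error $E \in \bE_0$ to the encoded state $\proj{\b0}^{\otimes s} \otimes \proj{\Phi}^{\otimes c} \otimes \sigma \otimes \proj{\varphi}$ and observe that, since $E$ is a tensor product of Pauli operators acting on the four separate registers, the channel output factors (up to a phase) as
$$
\proj{\ba} \otimes \ket{\ba_1,\ba_2}\bra{\ba_1,\ba_2} \otimes \sigma' \otimes \proj{\varphi'},
$$
where $\ket{\ba} = X^\ba \ket{\b0}$ (the $Z^\bb$ fixes $\ket{\b0}$), $\ket{\ba_1,\ba_2} = (Z^{\ba_1}X^{\ba_2} \otimes I^B)\ket{\Phi}^{\otimes c}$ is the superdense-coded state on the joint Alice--Bob ebit register, $\sigma' = (X^\bc Z^\bd)\sigma(X^\bc Z^\bd)^\dagger$, and $\ket{\varphi'} = X^{\alpha(\ba,\ba_1,\ba_2)}Z^{\beta(\ba,\ba_1,\ba_2)}\ket{\varphi}$.

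Next I would identify the reduced syndrome $\br = (\ba, \ba_1, \ba_2)$ and explain why the rest of the syndrome is harmless: the component $\bb$ acts trivially on $\ket{\b0}^{\otimes s}$, exactly as in the standard canonical code of Proposition \ref{QECC_code1}, while $(\bc, \bd)$ only sends $\sigma$ to $\sigma'$, which is a gauge operation and hence preserves the encoded information, exactly as in the operator canonical code of Proposition \ref{prop_OQECC_cc}. Then, paralleling Sections 2.6.1 and 2.6.2, I would note that $\ba$ has been conveyed to Bob by elementary coding and $(\ba_1, \ba_2)$ by superdense coding: Bob, who holds all $n+c$ qubits, recovers $\ba$ by measuring $Z^{\be_1}, \dots, Z^{\be_s}$, recovers $\ba_1$ by measuring the $X^{\be_i} \otimes X^{\be_i}$ observables and $\ba_2$ by measuring the $Z^{\be_i} \otimes Z^{\be_i}$ observables on the ebit register, and then applies $X^{-\alpha(\ba,\ba_1,\ba_2)}Z^{-\beta(\ba,\ba_1,\ba_2)}$ to $\ket{\varphi'}$ to restore $\ket{\varphi}$.

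Finally, since the goal is the transmission of quantum information and not a classical outcome, I would replace the measurements by the coherent controlled-unitary decoding
$$
U_{0,\mathrm{dec}} = \sum_{\ba,\ba_1,\ba_2} \proj{\ba} \otimes \ket{\ba_1,\ba_2}\bra{\ba_1,\ba_2} \otimes I \otimes X^{-\alpha(\ba,\ba_1,\ba_2)}Z^{-\beta(\ba,\ba_1,\ba_2)},
$$
followed by discarding the ancilla, ebit, and gauge registers; this defines the CPTP map $\cD_0$, and one checks directly that $\cD_0 \circ \hat E \circ \cE_0 = \id^{\otimes k}$ for every $E \in \bE_0$. I do not anticipate a genuine obstacle — the content is bookkeeping — but the step requiring the most care is verifying that the four Pauli factors really act on disjoint registers so that the output factorizes cleanly, and in particular that the superdense-coding block is set up correctly ($Z^{\ba_1}X^{\ba_2}$ acting on Alice's half of $\ket{\Phi}^{\otimes c}$ with Bob's half untouched), so that $\{\ket{\ba_1,\ba_2}\}$ is indeed an orthonormal basis distinguished by the $Z\otimes Z$ and $X\otimes X$ measurements; that is exactly the property that lets Bob read off $(\ba_1,\ba_2)$ without disturbing $\ket{\varphi'}$.
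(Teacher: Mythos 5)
Your proposal is correct and follows essentially the same route as the paper's proof: factor the channel output as $\proj{\ba}\otimes\proj{\ba_1,\ba_2}\otimes\sigma'\otimes\proj{\varphi'}$, note that $\bb$ and $(\bc,\bd)$ are passively handled while the reduced syndrome $(\ba,\ba_1,\ba_2)$ is read out via elementary and superdense coding, and then implement the recovery coherently with the same controlled unitary the paper writes down. No gaps.
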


\begin{proof}
\begin{figure}
\centering
\includegraphics[width=5.8in]{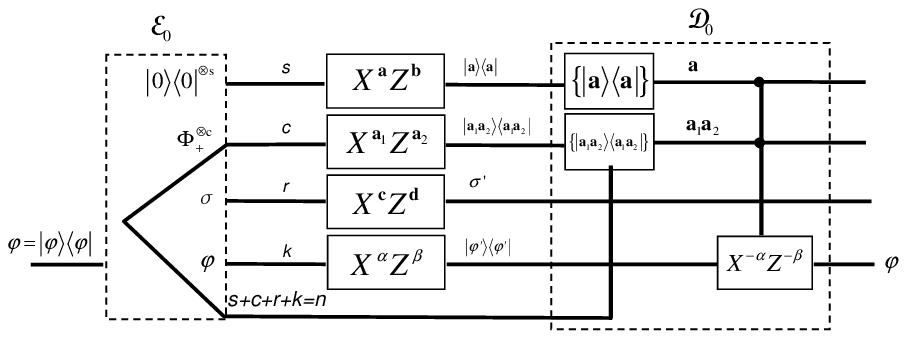}
\caption{The entanglement-assisted operator canonical code.}
\label{EAOQECC_CC}
\end{figure}
The protocol is shown in Figure \ref{EAOQECC_CC}. After applying an
error $E \in \bE_0$, the channel output becomes (up to a phase
factor):
\begin{equation}%
\proj{\ba} \otimes \proj{\ba_1,\ba_2}\otimes \sigma'
\otimes\proj{\varphi'},
\end{equation}%
where
\begin{eqnarray}
\ket{\ba}&=& X^{\ba}\ket{\b0}, \\
\ket{\ba_1,\ba_2} &=& (Z^{\ba_1} X^{\ba_2}\otimes I^B)
\ket{\Phi}^{\otimes c}, \\
\sigma' &=&(X^\bc Z^\bd) \sigma (X^\bc Z^\bd)^\dagger, \\
\ket{\varphi'}&=&(X^{\alpha(\ba,\ba_1,\ba_2)}
Z^{\beta(\ba,\ba_1,\ba_2)}) \ket{\varphi}.
\end{eqnarray}

As the vector $(\ba, \ba_1, \ba_2, \bb, \bc, \bd)$ completely
specifies the error operator $E$, it is called the \emph{error
syndrome}. However, in order to correct this error, only the
\emph{reduced syndrome} $(\ba,\ba_1,\ba_2)$ matters. The
entanglement-assisted operator canonical code $\Ceao_0$ keeps
advantages of both EAQECCs and OQECCs. On one hand, the two kinds of
passive error correction are preserved. On the other hand, the power
of active error correction is increased by the use of pure
entanglement.

The decoding operation $\cD_0$ is constructed based on the reduced
syndrome. Bob can recover the state $\ket{\varphi}$ by performing
the decoding $\cD_0$:
\begin{equation}%
\label{EAOQECC_decode}
\begin{split}
\cD_0 = &\sum_{\ba,\ba_1,\ba_2} \ket{\ba}\bra{\ba}\otimes
\ket{\ba_1,\ba_2}\bra{\ba_1,\ba_2} \otimes I \\ &\otimes
X^{-\alpha(\ba,\ba_1,\ba_2)}Z^{-\beta(\ba,\ba_1,\ba_2)},
\end{split}
\end{equation}%
followed by discarding the unwanted systems.
\end{proof}

We can rephrase the above error-correcting procedure in terms of the
stabilizer formalism. Let $\cS_0=\langle
\cS_{0,I},\cS_{0,S}\rangle$, where $\cS_{0,I}=\langle Z_1 , \cdots ,
Z_s \rangle$ is the isotropic subgroup of size $2^s$ and
$\cS_{0,S}=\langle
Z_{s+1},\cdots,Z_{s+c+r},X_{s+1},\cdots,X_{s+c+r}\rangle$ is the
\emph{symplectic} subgroup of size $2^{2(c+r)}$. We can further
divide the symplectic subgroup $\cS_{0,S}$ into an entanglement
subgroup
\[
\cS_{0,E}=\langle Z_{s+1},\cdots,Z_{s+c},X_{s+1},\cdots,X_{s+c}
\rangle
\]
of size $2^{2c}$ and a gauge subgroup
\[
\cS_{0,G}=\langle
Z_{s+c+1},\cdots,Z_{s+c+r},X_{s+c+1},\cdots,X_{s+c+r} \rangle
\]
of size $2^{2r}$, respectively. The generators of
$(\cS_{0,I},\cS_{0,E},\cS_{0,G})$ are arranged in the following
form:
\begin{equation}
\begin{array}{cccc}
Z^{\be_i} & I & I & I \\
I & Z^{\be_j} & I & I \\
I & X^{\be_j} & I & I \\
I & I & Z^{\be_l} & I \\
I & I & X^{\be_l} & I \\
\overleftrightarrow{s} & \overleftrightarrow{c} &
\overleftrightarrow{r} & \overleftrightarrow{k}
\end{array}
\end{equation}
where $\{\be_i\}_{i\in[s]}$, $\{\be_j\}_{j\in[c]}$, and
$\{\be_l\}_{l\in[r]}$ are the set of standard bases in $(\bbZ_2)^s$,
$(\bbZ_2)^c$, and $(\bbZ_2)^r$, respectively, and
$[k]\equiv\{1,\cdots,k\}$.

It follows that the three subgroups
$(\cS_{0,I},\cS_{0,E},\cS_{0,G})$ define the canonical code
$\Ceao_0$ given in (\ref{EAOQECC_en}). The subgroups $\cS_{0,I}$ and
$\cS_{0,E}$ define a $2^{k+r}$-dimensional code space
$\Ceao_0\subset \cH^{\otimes (n+c)}$, and the gauge subgroup
$\cS_{0,G}$ specifies all possible operations that can happen on the
gauge qubits. Thus we can use $\cS_{0,G}$ to define an equivalence
class between two states in the code space of the form: $\rho\otimes
\sigma$ and $\rho\otimes\sigma'$, where $\rho$ is a state on
$\cH^{\otimes k}$, and $\sigma,\sigma'$ are states on $\cH^{\otimes
r}$. Consider the parameters of the canonical code. The number of
ancillas $s$ is equal to the number of generators for the isotropic
subgroup $\cS_{0,I}$. The number of ebits $c$ is equal to the number
of symplectic pairs that generate the entanglement subgroup
$\cS_{0,E}$. The number of gauge qubits $r$ is equal to the number
of symplectic pairs for the gauge subgroup $\cS_{0,G}$. Finally, the
number of logical qubits $k$ that can be encoded in $\Ceao$ is equal
to $n-s-c-r$. To sum up, $\Ceao$ defined by
$(\cS_{0,I},\cS_{0,E},\cS_{0,G})$ is an $[[n,k;r,c]]$ EAOQECC that
fixes a $2^{k+r}$-dimensional code space, within which $\rho\otimes
\sigma$ and $\rho\otimes \sigma'$ are considered to carry the same
information.

\begin{Prop}
\label{EAOQECC_dumb} The EAOQECC $\Ceao$ defined by
$(\cS_{0,I},\cS_{0,E},\cS_{0,G})$ can correct an error set $\bE_0$
if for all $E_1,E_2\in \bE_0$, $E_2^\dagger E_1 \in
\langle\cS_{0,I},\cS_{0,G}\rangle\bigcup (\cG_n-\cZ(\langle
\cS_{0,I},\cS_{0,E} \rangle))$.
\end{Prop}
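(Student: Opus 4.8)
The plan is to merge the bookkeeping from the proofs of Propositions \ref{EAQECC_dumb} and \ref{OQECC_dumb}. Recall from Proposition \ref{EAOQECC_code} that an error $E\in\bE_0$ is completely determined by the vector $(\ba,\ba_1,\ba_2,\bb,\bc,\bd)$ with $\ba,\bb\in(\bbZ_2)^s$, $\ba_1,\ba_2\in(\bbZ_2)^c$, $\bc,\bd\in(\bbZ_2)^r$, and that the decoder $\cD_0$ in (\ref{EAOQECC_decode}) uses only the reduced syndrome $(\ba,\ba_1,\ba_2)$: it reads off $\ba$ by measuring $Z^{\be_1},\dots,Z^{\be_s}$ (elementary coding, Section 2.6.1), reads off $(\ba_1,\ba_2)$ by measuring the $Z\otimes Z$ and $X\otimes X$ observables on the $c$ ebits (superdense coding, Section 2.6.2), and then applies the controlled unitary undoing $X^{\alpha(\ba,\ba_1,\ba_2)}Z^{\beta(\ba,\ba_1,\ba_2)}$ on the logical register.

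First I would make precise that the reduced syndrome of $E$ is exactly its anticommutation pattern against the generators of $\langle\cS_{0,I},\cS_{0,E}\rangle$: the $s$ bits of $\ba$ record (anti)commutation with $Z_1,\dots,Z_s\in\cS_{0,I}$, and the $2c$ bits of $(\ba_1,\ba_2)$ record (anti)commutation with the symplectic pairs $(Z_{s+j},X_{s+j})\in\cS_{0,E}$, $j\in[c]$, via the superdense-coding identity. Neither the ``$\bb$'' part nor the ``$(\bc,\bd)$'' part affects this pattern: $Z^\bb$ is itself an element of $\cS_{0,I}$ and hence commutes with all of $\langle\cS_{0,I},\cS_{0,E}\rangle$, while $X^\bc Z^\bd$ acts only on the gauge register (disjoint from the ancilla and ebit registers) and is thus an element of $\cS_{0,G}$.

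Then I would split into two cases, exactly as in the earlier propositions. If $E_1,E_2\in\bE_0$ have the same reduced syndrome $(\ba,\ba_1,\ba_2)$, then in $E_2^\dagger E_1$ the $X^\ba$ and $Z^{\ba_1}X^{\ba_2}$ factors cancel, as do the logical factors $X^{\alpha}Z^{\beta}$ (these depend only on the reduced syndrome), leaving a $Z$-type operator on the ancilla register times an arbitrary Pauli on the gauge register; hence $E_2^\dagger E_1\in\langle\cS_{0,I},\cS_{0,G}\rangle$, and by the structure of the codespace this operator fixes $\proj{\varphi}$ up to a gauge transformation on $\sigma$, so no correction is needed. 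If instead $E_1,E_2$ have different reduced syndromes, then $E_2^\dagger E_1$ anticommutes with at least one generator of $\langle\cS_{0,I},\cS_{0,E}\rangle$, i.e. $E_2^\dagger E_1\in\cG_n-\cZ(\langle\cS_{0,I},\cS_{0,E}\rangle)$; the syndrome measurement inside $\cD_0$ then separates $E_1$ from $E_2$ and the controlled unitary in (\ref{EAOQECC_decode}) removes the residual error, so $\cD_0\circ E\circ\cE_0=\id^{\otimes k}$ on $\bE_0$.

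I expect the one genuinely delicate point to be the verification that the superdense-coding measurement on the $(n+c)$-qubit extension really produces a reduced syndrome equal to the anticommutation pattern of $E$ with $\cS_{0,E}$ — this requires carefully tracking which tensor factors live on Alice's side and which on Bob's side, as in Section 2.6.2 — together with making rigorous the phrase ``no effect on $\proj{\varphi}$'' for operators in $\langle\cS_{0,I},\cS_{0,G}\rangle$, namely that the isotropic part acts as the identity on the codespace while the gauge part only permutes the $\sigma$ factor and hence preserves the encoded information. Once these are settled, the two-case dichotomy is a routine merger of Propositions \ref{EAQECC_dumb} and \ref{OQECC_dumb}.
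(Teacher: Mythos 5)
Your proposal is correct and follows essentially the same route as the paper's own proof: the identical two-case split on whether $E_1$ and $E_2$ share the reduced syndrome $(\ba,\ba_1,\ba_2)$, with the first case landing $E_2^\dagger E_1$ in $\langle\cS_{0,I},\cS_{0,G}\rangle$ and the second forcing it outside $\cZ(\langle\cS_{0,I},\cS_{0,E}\rangle)$ so that the decoder (\ref{EAOQECC_decode}) distinguishes and corrects it. The extra detail you supply (cancellation of the logical factors, identification of the reduced syndrome with the anticommutation pattern) is a correct elaboration of steps the paper leaves implicit.
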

\begin{proof}
Since the vector $(\ba,\ba_1,\ba_2,\bb,\bc,\bd)$ completely
specifies the error operator $E$, we consider the following two
different cases:
\begin{itemize}
\item If two error operators $E_1$ and $E_2$ have the same reduced
syndrome $(\ba,\ba_1,\ba_2)$, then the error operator $E_2^\dagger
E_1$ gives us all-zero reduced syndrome with some vector
$(\bb,\bc,\bd)$. Therefore, $E_2^\dagger
E_1\in\langle\cS_{0,I},\cS_{0,G}\rangle$. This error $E_2^\dagger
E_1$ has no effect on the logical state $\proj{\varphi}$.
\item If two error operators $E_1$ and $E_2$ have different reduced
syndromes, and let $(\ba,\ba_1,\ba_2)$ be the reduced syndrome of
$E_2^\dagger E_1$, then $E_2^\dagger E_1 \not\in Z(\langle
\cS_{0,I},\cS_{0,E}\rangle)$. This error $E_2^\dagger E_1$ can be
corrected by the decoding operation given in (\ref{EAOQECC_decode}).
\end{itemize}
\end{proof}

\section*{6.2 \hspace{2pt} The general case}
\addcontentsline{toc}{section}{6.2 \hspace{0.15cm} The general case}
\begin{theorem}
\label{EAOQECC_general} Given the subgroups $(\cS_I,\cS_E,\cS_G)$,
there exists an $[[n,k;r,c]]$ entanglement-assisted operator quantum
error-correcting code $C^{\text{eao}}$ defined by the encoding and
decoding pair: $(\cE,\cD)$. The code $\Ceao$ can correct the error
set $\bE$ if for all $E_1, E_2 \in \bE$, $E_2^\dagger E_1 \in
\langle\cS_I,\cS_G\rangle \bigcup
(\cG_n-\cZ(\langle\cS_I,\cS_E\rangle))$.
\end{theorem}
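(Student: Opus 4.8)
The plan is to follow the template already used for Theorems~\ref{QECC_general}, \ref{EAQECC_general} and \ref{OQECC_general}, merging the entanglement-assisted and operator ingredients: reduce the general case to the canonical code of Section~6.1 by a unitary change of basis. First I would note that, by hypothesis, the triple $(\cS_I,\cS_E,\cS_G)$ has exactly the same commutation relations as the canonical triple $(\cS_{0,I},\cS_{0,E},\cS_{0,G})$ appearing in the array before Proposition~\ref{EAOQECC_dumb}: the isotropic part consists of $s$ mutually commuting generators, the entanglement part of $c$ anticommuting symplectic pairs, the gauge part of $r$ anticommuting symplectic pairs, and every remaining cross-product vanishes. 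Applying Lemma~\ref{sim} then produces a unitary $U$ on $\cH_2^{\otimes n}$ with $\cS_0 = U\cS U^{-1}$, up to a phase on each generator.

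Next I would set $\cE = \hat{U}^{-1}\circ\cE_0$ and $\cD = \cD_0\circ\hat{\bar{U}}$, where $\cE_0,\cD_0$ are the canonical operations in~(\ref{EAOQECC_en}) and~(\ref{EAOQECC_decode}) and $\bar{U}$ is the trivial extension of $U$ acting as the identity on Bob's $c$ pre-shared ebit-halves, exactly as in the proof of Theorem~\ref{EAQECC_general}. Putting $\bE = \{U^{-1}E_0 U : E_0\in\bE_0\}$, where $\bE_0$ is the correctable set of Proposition~\ref{EAOQECC_code}, the identity $\cD_0\circ E_0\circ\cE_0 = \id^{\otimes k}$ conjugates to $\cD\circ E\circ\cE = \id^{\otimes k}$ for every $E\in\bE$, so $(\cE,\cD)$ corrects $\bE$; the $[[n,k;r,c]]$ parameters follow from $s+c+r+k=n$ as in the canonical code. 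It then remains to translate the membership condition of Proposition~\ref{EAOQECC_dumb}: $E_2^\dagger E_1\in\langle\cS_{0,I},\cS_{0,G}\rangle\cup(\cG_n-\cZ(\langle\cS_{0,I},\cS_{0,E}\rangle))$ becomes, under conjugation by $U^{-1}$, precisely $E_2^\dagger E_1\in\langle\cS_I,\cS_G\rangle\cup(\cG_n-\cZ(\langle\cS_I,\cS_E\rangle))$, since a unitary conjugation carries groups to groups and normalizers to normalizers. For completeness one can also record, as in the earlier general-case theorems, that $\Ceao=\bar{U}^{-1}(\Ceao_0)$ is a simultaneous eigenspace of the Abelian extension $\widetilde{\cS}=U^{-1}\widetilde{\cS}_0U$ (Lemma~\ref{simi3}) and that $\cD$ amounts to measuring $\widetilde{\cS}$ and then recovering (Lemma~\ref{simi2}).

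I expect the only genuine obstacle to be bookkeeping the tensor factors, which is already the delicate point in the EAQECC case: $U$ (and the unitary inside $\cD_0$) acts on Alice's $n$ qubits together with the received ebit-halves, while Bob's $c$ pre-shared qubits must be untouched by $\cE$ yet available to $\cD$, so one must check that $\bar{U}$ extends $U$ by the identity on Bob's side and that conjugation distributes through the tensor product in the expected way. The operator (gauge) ingredient adds nothing new here, since gauge errors are already absorbed into $\langle\cS_{0,I},\cS_{0,G}\rangle$ by Proposition~\ref{EAOQECC_dumb}; once the tensor bookkeeping is handled, the proof closes immediately.
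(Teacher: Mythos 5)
Your proposal follows essentially the same route as the paper: invoke Lemma~\ref{sim} to get a unitary $U$ with $\cS_0 = U\cS U^{-1}$, set $\cE = U^{-1}\circ\cE_0$ and $\cD = \cD_0\circ U$ with the canonical maps of (\ref{EAOQECC_en}) and (\ref{EAOQECC_decode}), and conjugate the canonical correctability statement of Proposition~\ref{EAOQECC_dumb}. Your extra care with the trivial extension $\bar{U}$ on Bob's ebit-halves and the explicit translation of the normalizer condition is sound and, if anything, slightly more complete than the paper's own write-up.
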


\begin{proof}
\begin{figure}
\centering
\includegraphics[width=5.8in]{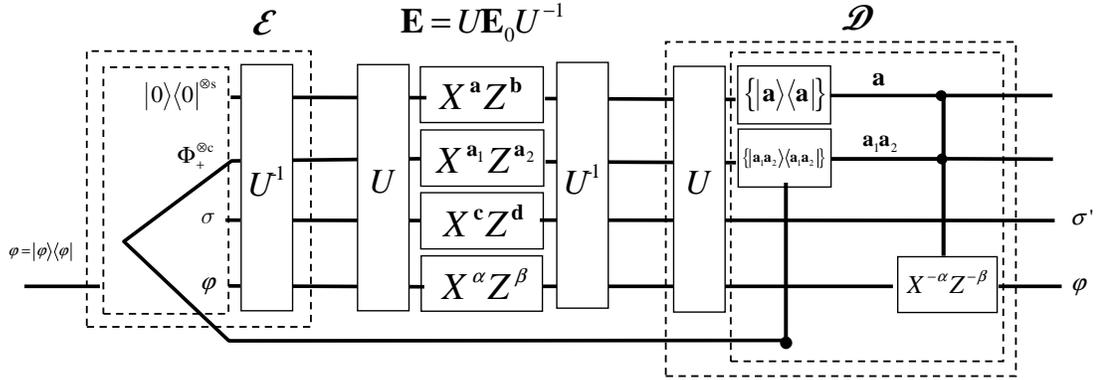}
\caption{The entanglement-assisted operator quantum error-correcting
code.} \label{EAOQECC}
\end{figure}
Since $\cS\sim\cS_0$, there exists an unitary matrix $U$ that
preserves the commutation relations. The protocol is shown in Figure
\ref{EAOQECC}. Define $\cE=U^{-1}\circ\cE_0$ and $\cD=\cD_0\circ U$,
where $\cE_0$ and $\cD_0$ are given in (\ref{EAOQECC_en}) and
(\ref{EAOQECC_decode}), respectivley. Since
\[
\cD_0\circ E_0\circ \cE_{0}=\text{id}^{\otimes k}
\]
for any $E_0\in\bE_0$, then
\[
\cD\circ E\circ\cE=\text{id}^{\otimes k}
\]
follows for any $E\in\bE$. Thus, the encoding and decoding pair
$(\cE,\cD)$ corrects $\bE$.
\end{proof}

We say that the $[[n,k,d;r,c]]$ EAOQECC $C^{\text{eao}}$ has
distance $d$ if it can correct any error set $\bE$ such that for
each operator $E\in \bE$, the weight $t$ of $E$ satisfies $2t+1\leq
d$.

\section*{6.3 \hspace{2pt} Properties of EAOQECCs}
\addcontentsline{toc}{section}{6.3 \hspace{0.15cm} Properties of
EAOQECCs}

In the description earlier in this chapter, we assumed that the
gauge subgroup was generated by a set of symplectic pairs of
generators. In some cases, it may make sense to start with a gauge
subgroup which itself has both an isotropic (i.e., commuting) and a
symplectic subgroup.  In this case, we can arbitrarily add a
symplectic partner for each generator in the isotropic subgroup of
the gauge group.  This can be useful in constructing EAOQECCs from
EAQECCs, in a way analogous to how OQECCs can be constructed by
starting from standard QECCs.  Poulin shows in \cite{DP05} that it
is possible to move generators from the stabilizer group into the
gauge subgroup, together with their symplectic partners, without
changing the essential features of the original code. We provide an
example of such a construction in section 6.4.2.

There is further flexibility in trading between active error
correction ability and passive noise avoiding ability
\cite{AKS06OQECC}. This is captured by the following theorem:

\begin{theorem}
\label{trans} We can transform any $[[n,k+r,d_1;0,c]]$ code $C_1$
into an $[[n,k,d_2;r,c]]$ code $C_2$, and transform the
$[[n,k,d_2;r,c]]$ code $C_2$ into an $[[n,k,d_3;0,c]]$ code $C_3$,
where $d_1\leq d_2 \leq d_3$.
\end{theorem}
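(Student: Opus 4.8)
The plan is to realize both transformations as elementary surgery on the stabilizer data $(\cS_I,\cS_E,\cS_G)$ together with a canonical symplectic list of logical operators, and to control the distance through the \emph{bad set}
$$\cB(\cS_I,\cS_E,\cS_G)=\cZ(\langle\cS_I,\cS_E\rangle)\setminus\langle\cS_I,\cS_G\rangle,$$
which, by Theorem \ref{EAOQECC_general}, is exactly the set of operators $E_2^\dagger E_1$ that a code with this data fails to handle. Two observations drive the argument. First, by Lemma \ref{basis} the generators can always be put in canonical form: commuting generators $Z_i$ spanning $\cS_I$, symplectic pairs generating $\cS_E$ and $\cS_G$, and $k$ further symplectic pairs $\bar{X}_i,\bar{Z}_i$ of logical operators, all mutually commuting except within each hyperbolic pair; and conversely, by Theorem \ref{EAOQECC_general}, any collection of subgroups with this commutation structure is realized by a genuine EAOQECC. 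Second, by the distance definition stated right after Theorem \ref{EAOQECC_general}, a code has distance at least $d$ iff its bad set contains no operator of weight $\le 2\lfloor(d-1)/2\rfloor$ (since a Pauli of weight $w$ is a product of two Paulis of weight $\le t$ precisely when $w\le 2t$); hence if one code's bad set is contained in another's, the first has distance at least as large. So it suffices to build $C_2$ and $C_3$ and check $\cB_{C_2}\subseteq\cB_{C_1}$ and $\cB_{C_3}\subseteq\cB_{C_2}$.

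\emph{Step 1: $C_1\to C_2$.} The code $C_1=[[n,k+r,d_1;0,c]]$ has trivial gauge subgroup and $k+r$ logical pairs $\bar{X}_1,\bar{Z}_1,\dots,\bar{X}_{k+r},\bar{Z}_{k+r}$. I would demote any $r$ of them, keeping $\cS_I$ and $\cS_E$ and setting $\cS_G=\langle\bar{X}_{k+1},\bar{Z}_{k+1},\dots,\bar{X}_{k+r},\bar{Z}_{k+r}\rangle$; the remaining $k$ pairs become the logical operators of the resulting $[[n,k,d_2;r,c]]$ code $C_2$, which is legitimate by Theorem \ref{EAOQECC_general} since its commutation relations are inherited from $C_1$. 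Because $\cS_I$ and $\cS_E$ are unchanged and $\langle\cS_I\rangle\subseteq\langle\cS_I,\cS_G\rangle$, we get
$$\cB_{C_2}=\cZ(\langle\cS_I,\cS_E\rangle)\setminus\langle\cS_I,\cS_G\rangle\;\subseteq\;\cZ(\langle\cS_I,\cS_E\rangle)\setminus\langle\cS_I\rangle=\cB_{C_1},$$
so $d_2\ge d_1$. This is the entanglement-assisted/operator analogue of Poulin's observation \cite{DP05} that (logical) stabilizer generators may be moved into the gauge subgroup.

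\emph{Step 2: $C_2\to C_3$.} Now the gauge subgroup of $C_2$ is generated by $r$ hyperbolic pairs $\bar{X}_1',\bar{Z}_1',\dots,\bar{X}_r',\bar{Z}_r'$. I would \emph{freeze} it by adjoining exactly one operator from each pair to the isotropic subgroup: set $\cS_I''=\langle\cS_I,\bar{Z}_1',\dots,\bar{Z}_r'\rangle$, keep $\cS_E$, and take a trivial gauge subgroup. The $\bar{Z}_j'$ commute with $\cS_I$, with $\cS_E$, and with one another, so $\cS_I''$ is isotropic, of size $2^{n-k-c}$; the same $k$ logical pairs survive, and Theorem \ref{EAOQECC_general} produces an $[[n,k,d_3;0,c]]$ code $C_3$. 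To see $\cB_{C_3}\subseteq\cB_{C_2}$, take $P\in\cB_{C_3}$. From $\cS_I\subseteq\cS_I''$ we get $P\in\cZ(\langle\cS_I'',\cS_E\rangle)\subseteq\cZ(\langle\cS_I,\cS_E\rangle)$. Suppose, for contradiction, that $P\in\langle\cS_I,\cS_G\rangle$, and write $P=g\,\prod_j(\bar{X}_j')^{a_j}(\bar{Z}_j')^{b_j}$ with $g\in\langle\cS_I\rangle$. Since $P\in\cZ(\langle\cS_I'',\cS_E\rangle)$ it commutes with each $\bar{Z}_j'\in\cS_I''$, and among the canonical generators of $\langle\cS_I,\cS_G\rangle$ the hyperbolic partner $\bar{X}_j'$ is the only one anticommuting with $\bar{Z}_j'$; hence $a_j=0$ for all $j$, so $P\in\langle\cS_I,\bar{Z}_1',\dots,\bar{Z}_r'\rangle=\cS_I''$, contradicting $P\notin\cS_I''$. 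Therefore $P\in\cZ(\langle\cS_I,\cS_E\rangle)\setminus\langle\cS_I,\cS_G\rangle=\cB_{C_2}$, so $d_3\ge d_2$. Chaining the two inclusions gives $d_1\le d_2\le d_3$.

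\emph{Main obstacle.} The two constructions and the distance monotonicity are routine once the canonical form of Lemma \ref{basis} is in hand; the one point needing genuine care is the inclusion $\cB_{C_3}\subseteq\cB_{C_2}$ — i.e.\ checking that freezing the gauge coordinates into the stabilizer never promotes a previously harmless operator to an \emph{undetectable} logical error. The anticommutation bookkeeping above is the heart of the matter, and it relies on adjoining exactly one operator from each symplectic gauge pair (so that $\cS_I''$ stays commuting) together with the hyperbolic-pair structure of the canonical basis; adjoining both operators of a pair is impossible, and adjoining a ``twisted'' combination would break commutativity with $\cS_I$. The only remaining chore is the mechanical passage between canonical and general codes via Theorem \ref{EAOQECC_general}, and the usual care with overall phases, both already handled in the earlier sections.
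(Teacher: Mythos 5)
Your proposal is correct and follows essentially the same route as the paper: both transformations are the same subgroup surgery (demote $r$ logical symplectic pairs into $\cS_G$, then freeze half of each gauge pair into the isotropic subgroup), and the distance monotonicity is obtained from the inclusion of the error-correction criteria of Theorem \ref{EAOQECC_general}, which you phrase equivalently via complements as $\cB_{C_3}\subseteq\cB_{C_2}\subseteq\cB_{C_1}$ rather than via the correctable sets $\bE_1\subseteq\bE_2\subseteq\bE_3$ as the paper does. Your explicit anticommutation bookkeeping in Step 2 merely fills in the step the paper asserts as ``$E\in\langle\cS_I,\cS_G\rangle/\cS_I'$ implies $E\not\in\cZ(\cS_I')$.''
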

\begin{proof}
There exists an isotropic subgroup $\cS_I$ and an entanglement
subgroup $\cS_E$ associated with $C_1$ of size $2^{s}$ and $2^{2c}$,
respectively. These parameters satisfy $s+c+k+r=n$. This code $C_1$
corresponds to an $[[n,k+r,d_1;0,c]]$ EAQECC for some $d_1$. If we
add the gauge subgroup $\cS_G$ of size $2^{2r}$, then
$(\cS_I,\cS_E,\cS_G)$ defines an $[[n,k,d_2;r,c]]$ EAOQECC $C_2$ for
some $d_2$, which follows from Theorem \ref{EAOQECC_general}. Let
$\bE_1$ be the error set that can be corrected by $\cC_1$, and
$\bE_2$ be the error set that can be corrected by $\cC_2$. Clearly,
$\bE_1\subset \bE_2$ (see the following table), so $\cC_2$ can
correct more errors than $\cC_1$. By sacrificing part of the
transmission rate, we have gained additional passive correction, and
$d_2\geq d_1$.

If we now throw away half of each symplectic pair in $\cS_G$ and
include the remaining generators in $\cS_I$, which becomes $\cS_I'$,
the size of the isotropic subgroup increases by a factor of $2^r$.
Then $(\cS_I',\cS_E)$ defines an $[[n,k,d_3;0,c]]$ EAQECC $C_3$. Let
$\bE_3$ be the error set that can be corrected by $C_3$. Let
$E\in\bE_2$, then either $E\in\langle \cS_I,\cS_G\rangle$ or
$E\not\in\cZ(\langle \cS_I,\cS_E\rangle)$.
\begin{itemize}
\item If $E\in\langle \cS_I,\cS_G\rangle$, then either $E\in \cS_I'$
or $E\in\langle \cS_I,\cS_G\rangle/\cS_I'$. If $E\in\langle
\cS_I,\cS_G\rangle/\cS_I'$, this implies $E\not\in\cZ(\cS_I')$.
Thus, $E\in\bE_3$.
\item Since $\langle \cS_I,\cS_E\rangle\subset \langle
\cS_I',\cS_E\rangle$, we have $\cZ(\langle
\cS_I',\cS_E\rangle)\subset\cZ(\langle \cS_I,\cS_E\rangle)$. If
$E\not\in\cZ(\langle \cS_I,\cS_E\rangle)$, then
$E\not\in\cZ(\langle\cS_I',\cS_E \rangle)$. Thus, $E\in\bE_3$.
\end{itemize}
Putting these together we get $ \bE_2\subset \bE_3$. Therefore
$d_3\geq d_2$.
\end{proof}

To conclude this section, we list the different error-correcting
criteria of a conventional stabilizer code (QECC), an EAQECC, an
OQECC, and an EAOQECC in Table \ref{sum_QEC}.
\begin{table}%
\label{sum_QEC}%
\centering
\begin{tabular}{|c|c|}
\hline QECC & EAQECC \\ \hline
$E_2^\dagger E_1\not\in \cZ(\cS_I)$ & $E_2^\dagger E_1\not\in \cZ(\langle\cS_I,\cS_E\rangle)$ \\
$E_2^\dagger E_1\in\cS_I$ & $E_2^\dagger E_1\in\cS_I$ \\ \hline
OQECC & EAOQECC \\ \hline
$E_2^\dagger E_1\not\in \cZ(\cS_I)$ & $E_2^\dagger E_1\not\in \cZ(\langle\cS_I,\cS_E\rangle)$ \\
$E_2^\dagger E_1\in \langle\cS_I,\cS_G\rangle$ & $E_2^\dagger E_1\in\langle\cS_I, \cS_G\rangle$ \\
\hline
\end{tabular}%
\caption{Summary of error-correcting criteria.}
\end{table}%

\section*{6.4 \hspace{2pt} Examples}
\addcontentsline{toc}{section}{6.4 \hspace{0.15cm} Examples}

\subsection*{6.4.1 \hspace{2pt} EAOQECC from EAQECC}%
\addcontentsline{toc}{subsection}{6.4.1 \hspace{0.15cm} EAOQECC from
EAQECC}

\label{DBexample}

Our first example constructs an $[[8,1,3;c=1,r=2]]$ EAOQECC from an
[[8,1,3;1]] EAQECC.  Consider the EAQECC code defined by the group
$\cS$ generated by the operators in Table~\ref{EAQECC1}. Here
$\bar{Z}$ and $\bar{X}$ refer to the logical $Z$ and $X$ operation
on the codeword, respectively. The isotropic subgroup is
$\cS_I=\langle S_1,S_2,S_3,S_4,S_5,S_8\rangle$, the entanglement
subgroup is $\cS_E=\langle S_6,S_7 \rangle$, and together they
generate the full group $\cS=\langle \cS_I,\cS_E\rangle$. This code
$C(\cS_I,\cS_E)$ encodes one qubit into eight physical qubits with
the help of one ebit, and therefore is an $[[8,1;1]]$ code. It can
be easily checked that this code can correct an arbitrary
single-qubit error, and it is degenerate.

\begin{table}[htdp]
\begin{center}
\begin{tabular}{c|cccccccc|c}
  & \multicolumn{8}{c} {Alice} & Bob        \\ \hline\hline
 $S_1$ & Z & Z & I & I & I & I & I & I & I\\
 $S_2$ & Z & I & Z & I & I & I & I & I & I\\
 $S_3$ & I & I & I & Z & Z & I & I & I & I\\
 $S_4$ & I & I & I & Z & I & Z & I & I & I\\
 $S_5$ & I & I & I & I & I & I & Z & Z & I\\
 $S_6$ & I & I & I & I & I & I & I & Z & Z\\
 $S_7$ & X & X & X & I & I & I & X & X & X\\
 $S_8$ & X & X & X & X & X & X & I & I & I\\ \hline
 $\bar{Z}$ & Z & I & I & Z & I & I & I & Z & I\\
 $\bar{X}$ & I & I & I & X & X & X & I & I & I\\
 \hline\hline
\end{tabular}
\end{center}
\caption{The original [[8,1,3;$c=1$]] EAQECC encodes one
qubit into eight physical qubits with the help of one ebit.}%
\label{EAQECC1}
\end{table}%

By inspecting the group structure of $\cS$, we can recombine the
first four stabilizers of the code to give two isotropic generators
(which we retain in $\cS_I$), and two generators which we include,
together with their symplectic partners, in the subgroup $\cS_G$,
for two qubits of gauge symmetry. This yields an $[[8,1,3;c=1,r=2]]$
EAOQECC whose generators are given in Table~\ref{EAOQECC1}. where
$\cS_I=\langle S_1',S_2',S_3',S_6'\rangle$, $\cS_E=\langle
S_4',S_5'\rangle$, and $\cS_G=\langle
g_1^z,g_1^x,g_2^z,g_2^x\rangle$.

\begin{table}[htdp]
\begin{center}
\begin{tabular}{c|cccccccc|c}
  & \multicolumn{8}{c} {Alice} & Bob        \\ \hline\hline
 $S_1'$ & Z & Z & I & Z & Z & I & I & I & I\\
 $S_2'$ & Z & I & Z & Z & I & Z & I & I & I\\
 $S_3'$ & I & I & I & I & I & I & Z & Z & I\\
 $S_4'$ & I & I & I & I & I & I & I & Z & Z\\
 $S_5'$ & X & X & X & I & I & I & X & X & X\\
 $S_6'$ & X & X & X & X & X & X & I & I & I\\ \hline
 $\bar{Z}$ & Z & I & I & Z & I & I & I & Z & I\\
 $\bar{X}$ & I & I & I & X & X & X & I & I & I\\ \hline
 $g_1^z$ & Z & Z & I & I & I & I & I & I & I\\
 $g_1^x$ & I & X & I & I & X & I & I & I & I\\
 $g_2^z$ & I & I & I & Z & I & Z & I & I & I\\
 $g_2^x$ & I & I & X & I & I & X & I & I & I\\
 \hline\hline
\end{tabular}
\end{center}
\caption{The resulting [[8,1,3;$c=1$,$r=2$]] EAOQECC encodes one
qubit into eight physical qubits with the help of one ebit, and
create two gauge qubits for passive error correction.}
\label{EAOQECC1}
\end{table}%

\subsection*{6.4.2 \hspace{2pt} EAOQECCs from classical BCH codes}%
\addcontentsline{toc}{subsection}{6.4.2 \hspace{0.15cm} EAOQECCs
from classical BCH codes}

EAOQECCs can also be constructed directly from classical binary
codes. Before we give examples, however, we need one more theorem:

\begin{theorem}
\label{ebit} Let $H$ be any binary parity check matrix with
dimension $(n-k)\times n$. We can obtain the corresponding
$[[n,2k-n+c;c]]$ EAQECC, where $c = {\rm rank}(H H^T)$ is the number
of ebits needed.
\end{theorem}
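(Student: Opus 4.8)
The plan is to reduce the statement to the generalized construction of Section~4.4 and then to carry out one explicit rank computation. First I would regard the binary code $C$ with parity-check matrix $H$ (taken, for the dimension count to be exact, to have full row rank) as a quaternary $[n,k]_4$ code with the same parity-check matrix $H_4=H$, and feed it through Proposition~\ref{prop_QtoSP}. The matrix $\tilde H_4$ of~(\ref{unos}) then has its top $n-k$ rows over $\{0,\omega\}$ and its bottom $n-k$ rows over $\{0,\bar{\omega}\}$; since $\gamma(\omega)=(1|0)$ and $\gamma(\bar{\omega})=(0|1)$, applying $\gamma$ row by row turns it into exactly the CSS-type matrix
\begin{equation*}
\Hsp=\gamma(\tilde H_4)=\left(\begin{array}{c|c} H & \b0\\ \b0 & H\end{array}\right)
\end{equation*}
of~(\ref{CSS}), of size $2(n-k)\times 2n$. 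Writing $V=\text{rowspace}(\Hsp)$, the Proposition of Section~4.4 (which invokes Theorems~\ref{GSsymp} and~\ref{EAQECC_general}) then yields an $[[n,2k-n+c;c]]$ EAQECC with $c=\tfrac12\dim\symp V$. So it remains only to prove that $\dim\symp V=2\,\rank(HH^T)$.

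Next I would convert $\dim\symp V$ into a matrix rank. As noted in the Remark following Theorem~\ref{GSsymp}, $\iso V$ is uniquely determined by $V$; concretely it is the radical $\{\bw\in V:\bw\odot\bu^\T=0\ \text{for all }\bu\in V\}$ of the symplectic form restricted to $V$. Hence $\dim\symp V=\dim V-\dim\iso V$ is the rank of that form, and this rank coincides with the rank of the symplectic Gram matrix $G=[\bu_a\odot\bu_b^\T]$ computed from \emph{any} spanning family $\{\bu_a\}$ of $V$ --- in particular the rows of $\Hsp$. The one small point to nail down here: if $W$ is the kernel of the coordinate map $\bbF_2^{\,2(n-k)}\to V$, then $W\subseteq\ker G$, whence $\dim\iso V=\dim\ker G-\dim W=\dim V-\rank G$.

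Finally I would compute $G$ directly. The rows of $\Hsp$ split into $Z$-type rows $(\bh_i|\b0)$ and $X$-type rows $(\b0|\bh_i)$, with $\bh_i$ the $i$th row of $H$. From $\bu\odot\bv^\T=\bz\,{\bx'}^\T+\bz'\,\bx^\T$ one gets $(\bh_i|\b0)\odot(\bh_{i'}|\b0)^\T=0$, $(\b0|\bh_i)\odot(\b0|\bh_{i'})^\T=0$, and $(\bh_i|\b0)\odot(\b0|\bh_j)^\T=\bh_i\bh_j^\T$, so ordering the $Z$-type rows first,
\begin{equation*}
G=\left(\begin{array}{cc} 0 & HH^T\\ HH^T & 0\end{array}\right),
\end{equation*}
where I used $(HH^T)^T=HH^T$ over $\bbF_2$. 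The two block-rows of $G$ are supported on disjoint columns, so $\rank G=2\,\rank(HH^T)$; therefore $c=\tfrac12\dim\symp V=\rank(HH^T)$, as claimed. (The construction moreover shows the EAQECC inherits the distance of $C$, though that is not asserted in the statement.)

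I expect the only real difficulty to be expository rather than mathematical: getting the $\gamma$-image of $\tilde H_4$ to come out block-diagonal, and making airtight the step ``$\dim\symp V=\rank$ of the Gram matrix'' for a spanning set that need not be linearly independent. Once $G$ is in hand, recognizing $HH^T$ as the matrix recording which $Z$-type and $X$-type generators anticommute, and the identity $\rank\bigl(\begin{smallmatrix}0&A\\ A^\T&0\end{smallmatrix}\bigr)=2\rank A$, are both immediate.
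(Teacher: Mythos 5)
Your proof is correct, and it reaches the conclusion by a genuinely different route at the key step. The paper also works with the CSS matrix $\Hsp=\left(\begin{smallmatrix} H & \b0\\ \b0 & H\end{smallmatrix}\right)$ and reduces everything to the symmetric matrix $HH^T$, but its counting argument goes through the classification of symmetric bilinear forms over $\bbF_2$: it asserts (without proof) a congruence $PHH^TP^T$ equal to a canonical form with a $p\times p$ identity block and $q$ hyperbolic blocks, $p+2q=c$, and then uses the rule $\{Z^{\ba},X^{\bb}\}=0\Leftrightarrow\ba\cdot\bb=1$ to exhibit the symplectic pairs explicitly among the operators $Z^{\br_i'},X^{\br_j'}$ built from the rows of $PH$. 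You bypass the canonical form entirely: you identify $\dim\symp{V}$ with the rank of the symplectic Gram matrix of the rows of $\Hsp$, compute that Gram matrix to be $\left(\begin{smallmatrix}0&HH^T\\ HH^T&0\end{smallmatrix}\right)$, and conclude $\dim\symp{V}=2\,\rank(HH^T)$. Your version is more elementary and more self-contained --- the only delicate point is that the Gram matrix is taken with respect to a spanning family that need not be a basis, and you correctly handle this by quotienting out the kernel of the coordinate map. What the paper's approach buys in exchange is constructiveness: the congruence $P$ tells you explicitly which row combinations form the hyperbolic pairs, which is what one needs to actually instantiate the code via Theorem~\ref{EAQECC_general}, whereas your argument only counts them. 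Your front end (routing the binary code through Proposition~\ref{prop_QtoSP} as a quaternary code and then through the Section~4.4 construction) is slightly more roundabout than the paper's direct appeal to the CSS matrix, but it is valid and lands in the same place.
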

\begin{proof}By the CSS construction,
let $\tilde{H}$ be
\begin{equation}%
\label{h4} \tilde{H}=\left(\begin{array}{c|c}
 H & \mathbf{0} \\ \mathbf{0} & H
\end{array}\right).
\end{equation}%
Let $\cS$ be the group generated by $\tilde{H}$, then $\cS=\langle
Z^{\br_1},\cdots, Z^{\br_{n-k}},X^{\br_{1}},\cdots,X^{\br_{n-k}}
\rangle$, where $\br_i$ is the $i$-th row vector of $H$. Now we need
to determine how many symplectic pairs are in group $\cS$. Since
rank$(H H^T)=c$, there exists a matrix $P$ such that
\[
P H H^T P^T=\left(\begin{array}{cccc} I_{p\times p} & \b0  & \b0 &
\b0
\\ \b0  &
  \mathbf{0} & I_{q\times q} & \b0 \\ \b0 &    I_{q\times q} & \mathbf{0}
  & \b0
  \\ \b0 & \b0 & \b0 & \mathbf{0}
\end{array}\right)_{(n-k)\times(n-k)}
\]
where $p+2q=c$. Let $\br_i'$ be the $i$-th row vector of the new
matrix $P H$, then $\cS=\langle Z^{\br_1'},\cdots, Z^{\br_{n-k}'}
,X^{\br_{1}'},\cdots,X^{\br_{n-k}'}\rangle $.

Using the fact that $\{Z^{\ba},X^{\bb}\}=0$ if and only if $\ba\cdot
\bb=1$, we know that the operators $Z^{\br_i'},X^{\br_i'}$ for $1\le
i\le p$, and the operators $Z^{\br_{p+j}'},X^{\br_{p+q+j}'}$ for $1
\le j\le q$, generate a symplectic subgroup in $\cS$ of size
$2^{2c}$.
\end{proof}

\label{BCHexample}
\begin{Def}\cite{FJM77}
A cyclic code of length $n$ over \text{GF}($p^m$) is a \emph{BCH
code of designed distance $d$} if, for some number $b\geq 0,$ the
generator polynomial $g(x)$ is
\[
g(x)=\text{lcm}\{M^{b}(x),M^{b+1}(x),\cdots,M^{b+d-2}(x)\},
\]
where $M^{k}(x)$ is the minimal polynomial of $\alpha^k$ over
GF($p^m$). I.e. $g(x)$ is the lowest degree monic polynomial over
GF($p^m$) having $\alpha^b,\alpha^{b+1},\cdots,\alpha^{b+d-2}$ as
zeros. When $b=1$, we call such BCH codes \emph{narrow-sense} BCH
codes. When $n=p^m-1$, we call such BCH codes \emph{primitive}.
\end{Def}

Consider the primitive narrow-sense BCH code over GF($2^6$). This
code has the following parity check matrix
\begin{equation}%
\label{BCH} H_q=\left(\begin{array}{ccccc}
1 & \alpha & \alpha^2 & \cdots & \alpha^{n-1} \\
1 & \alpha^3 & \alpha^6 & \cdots & \alpha^{3(n-1)} \\
1 & \alpha^5 & \alpha^{10} & \cdots & \alpha^{5(n-1)} \\
1 & \alpha^7 & \alpha^{14} & \cdots & \alpha^{7(n-1)}
\end{array}\right),
\end{equation}%
where $\alpha\in \text{GF}(2^6)$ satisfies $\alpha^6+\alpha+1=0$ and
$n=63$. Since all finite fields of order $p^m$ are
\emph{isomorphic}, there exists a one-to-one correspondence between
elements in $\{\alpha^j:j=0,1,\cdots,p^m-2,\infty\}$ and elements in
$\{a_0,a_1,\cdots,a_m: a_i\in\text{GF}(p)\}$. If we replace
$\alpha^{j}\in\text{GF}(2^6)$ in (\ref{BCH}) with its binary
representation, this gives us a binary $[63,39,9]$ BCH code whose
parity check matrix $H_2$ is of size $24 \times 63$. If we carefully
inspect the binary parity check matrix $H_2$, we will find that the
first 18 rows of $H_2$ give a $[63,45,7]$ dual-containing BCH code.

From Theorem \ref{ebit}, it is easy to check that $c=\text{rank}(H_2
H_2^T)=6$. Thus by the CSS construction \cite{BDH06}, this binary
$[63,39,9]$ BCH code will give us a corresponding $[[63,21,9;6]]$
EAQECC.

If we further explore the group structure of this EAQECC, we will
find that the 6 symplectic pairs that generate the entanglement
subgroup $\cS_E$ come from the last 6 rows of $H_2$.  (Remember that
we are using the CSS construction.) If we remove one symplectic pair
at a time from $\cS_E$ and add it to the gauge subgroup $\cS_G$, we
get EAOQECCs with parameters given in Table~\ref{BCHtable}.

\begin{table}[h]
\begin{center}
\begin{tabular}{|c|c|c|c|c|} \hline
n & k & d & r & c \\ \hline
63 & 21 & 9 & 0 & 6 \\
63 & 21 & 7 & 1 & 5 \\
63 & 21 & 7 & 2 & 4 \\
63 & 21 & 7 & 3 & 3 \\
63 & 21 & 7 & 4 & 2 \\
63 & 21 & 7 & 5 & 1 \\
63 & 21 & 7 & 6 & 0 \\ \hline
\end{tabular}
\end{center}
\caption{Parameters of the EAOQECCs constructed from a classical
[63,39,9] BCH code, where $r$ represents the amount of gauge qubits
created and $c$ represents the amount of ebits needed.}
\label{BCHtable}
\end{table}%

In general, there could be considerable freedom in which of the
symplectic pairs is to be removed. There are plenty of choices in
the generators of $\cS_E$. In fact, it does not matter which
symplectic pair we remove first in this example, due to the
algebraic structure of this BCH code. The distance is always lower
bounded by $7$.

One final remark: this example gives EAOQECCs with positive net
rate, so they could be used as catalytic codes.

\subsection*{6.4.3 \hspace{2pt} EAOQECCs from classical quaternary codes}%
\addcontentsline{toc}{subsection}{6.4.3 \hspace{0.15cm} EAOQECCs
from classical quaternary codes}

\label{qexample}

In the following, we will show how to use MAGMA \cite{MAGMA} to
construct EAOQECCs from classical quaternary codes with positive net
yield and without too much distance degradation. Consider the
following parity check matrix $H_4$ of a $[15,10,4]$ quaternary
code:
\begin{equation} \label{QC}
H_4=\left(\begin{array}{ccccccccccccccc}
1&0&0&0&1&1&\omega^2&0&1&\omega^2&0&\omega&\omega^2&1&0 \\
 0&1&0&0&1&0&\omega&\omega^2&1&\omega&0&0&1&\omega&1 \\
 0&0&1&0&\omega&\omega^2&1&\omega&1&0&0&\omega&1&\omega^2&\omega \\
 0&0&0&1&1&\omega^2&0&1&\omega^2&\omega&0&\omega^2&1&0 &\omega^2 \\
 0&0&0&0&0&0&0&0&0&0&1&0&0&0&0 \\
\end{array}\right),
\end{equation}
where $\{0,1,\omega,\omega^2\}$ are elements of GF(4) that satisfy:
$1+\omega+\omega^2=0$ and $\omega^3=1$.  This quaternary code has
the largest minimum weight among all known $[n=15,k=10]$ linear
quaternary codes. By the construction given in \cite{BDH06}, this
code gives a corresponding $[[15,9,4;c=4]]$ EAQECC with the
stabilizers given in Table~\ref{EAQECC2}.

\begin{table}[htdp]
\begin{center}
\begin{tabular}{|c|ccccccccccccccc|}
\hline\hline \multirow{8}{*}{${\cal{S}}_E$} &I&I&Y&I&Z&X&Y&Z&Y&I&I&Z&Y&X&Z \\
&I&Y&I&I&Y&I&Z&X&Y&Z&I&I&Y&Z&Y \\
&I&Z&Y&I&I&X&Z&X&X&X&I&Z&X&I&I \\
&I&I&X&I&Y&Z&X&Y&X&I&I&Y&X&Z&Y \\
&I&I&I&I&I&I&I&I&I&I&Z&I&I&I&I \\
&I&I&I&I&I&I&I&I&I&I&Y&I&I&I&I \\
&I&Z&Z&Z&X&I&Y&I&Y&I&I&Z&Z&Z&I \\
&I&Y&Y&Y&Z&I&X&I&X&I&I&Y&Y&Y&I \\ \hline
\multirow{2}{*}{${\cal{S}}_I$}&Z&Z&Y&I&Z&Y&X&X&Y&Z&I&Y&Z&Z&I \\
&Y&Y&X&I&Y&X&Z&Z&X&Y&I&X&Y&Y&I\\
 \hline\hline
\end{tabular}
\end{center}
\caption{Stabilizer generators of the $[[15,9,4;c=4]]$ EAQECC
derived from the classical code given by Eq.~(\ref{QC}). The size of
$\cS_E$ is equal to $2^{2c}$.}%
\label{EAQECC2}
\end{table}

The entanglement subgroup $\cS_E$ of this EAQECC has $c=4$
symplectic pairs. Our goal is to construct an EAOQECC from this
EAQECC such that the power of error correction is largely retained,
but the amount of entanglement needed is reduced. In this example,
the choice of which symplectic pair is removed strongly affects the
distance $d$ of the resulting EAOQECC. By using MAGMA to perform a
random search of all the possible sympletic pairs in $\cS_E$, and
then putting them into the gauge subgroup $\cS_G$, we can obtain a
$[[15,9,3;c=3,r=1]]$ EAOQECC with stabilizers given in
Table~\ref{EAOQECC2}. The distance is reduced by one, which still
retains the ability to correct all one-qubit errors; the amount of
entanglement needed is reduced by one ebit; and we gain some extra
power of passive error correction, due to the subsystem structure
inside the code space, given by the gauge subgroup $\cS_G$.
\begin{table}[h]
\begin{center}
\begin{tabular}{|c|ccccccccccccccc|}
\hline\hline \multirow{6}{*}{${\cal{S}}_E$} &I&I&Y&I&Z&X&Y&Z&Y&I&I&Z&Y&X&Z \\
&I&Y&I&I&Y&I&Z&X&Y&Z&I&I&Y&Z&Y \\
&I&Z&Y&I&I&X&Z&X&X&X&I&Z&X&I&I \\
&I&I&X&I&Y&Z&X&Y&X&I&I&Y&X&Z&Y \\
&I&I&I&I&I&I&I&I&I&I&Z&I&I&I&I \\
&I&I&I&I&I&I&I&I&I&I&Y&I&I&I&I \\
\hline \multirow{2}{*}{${\cal{S}}_G$}&I&Z&Z&Z&X&I&Y&I&Y&I&I&Z&Z&Z&I \\
&I&Y&Y&Y&Z&I&X&I&X&I&I&Y&Y&Y&I \\ \hline
\multirow{2}{*}{${\cal{S}}_I$}&X&X&Z&I&X&Z&Y&Y&Z&X&I&Z&X&X&I \\
&Z&Z&Y&I&Z&Y&X&X&Y&Z&I&Y&Z&Z&I\\
\hline\hline
\end{tabular}
\end{center}
\caption{Stabilizer generators of the $[[15,9,3;c=3,r=1]]$ EAOQECC
derived from the EAQECC given by Table~\ref{EAQECC2}. The size of
$\cS_E$ and $\cS_G$ is equal to $2^{2c}$ and $2^{2r}$,
respectively.} \label{EAOQECC2}
\end{table}

\section*{6.5 \hspace{2pt} Discussion}
\addcontentsline{toc}{section}{6.5 \hspace{0.15cm} Discussion}

We have shown a very general quantum error correction scheme that
combines two extensions of standard stabilizer codes. This scheme
includes the advantages of both entanglement-assisted and operator
quantum error correction.

In addition to presenting the formal theory of EAOQECCs, we have
given several examples of code construction. The methods of
constructing OQECCs from standard QECCs can be applied directly to
the construction of EAOQECCs from EAQECCs.  We can also construct
EAOQECCs directly from classical linear codes.

We also show that, by exploring the structure of the symplectic
subgroup, we can construct versatile classes of EAOQECCs with
varying powers of passive versus active error correction.  Starting
with good classical codes, this entanglement-assisted operator
formalism can be used to construct quantum codes tailored to the
needs of
particular applications.  

\chapter*{Chapter 7: \hspace{1pt} Quantum quasi-cyclic low-density parity-check codes}
\label{cp_VII}%
\addcontentsline{toc}{chapter}{Chapter 7:\hspace{0.15cm} Quantum
quasi-cyclic low-density parity-check codes}

\section*{7.1 \hspace{2pt} Classical low-density parity-check codes}
\addcontentsline{toc}{section}{7.1 \hspace{0.15cm} Classical
low-density parity-check codes}

Given a binary parity check matrix $H$, its \emph{density} is
defined to be the ratio of the number of ``1'' entries to the total
number of entries in $H$. When the density is less than
$\frac{1}{2}$, we call such code ``low-density parity-check (LDPC)
code''. LDPC codes were first proposed by Gallager \cite{RG63thesis}
in the early 1960s, and were rediscovered
\cite{MN96,davey98low,mackay99good} in the 90s. It has been shown
that these codes can achieve a remarkable performance that is very
close to the Shannon limit. Sometimes, they perform even better
\cite{ Mackay98turbo} than their main competitors, the Turbo codes.
These two families of codes are called modern codes.

A LDPC code is \emph{regular}, if its parity check matrix $H$ has
fixed weight for columns and rows; otherwise, it is
\emph{irregular}. A $(J,L)$-regular LDPC code is defined to be the
null space of a Boolean parity check matrix $H$ with the following
properties: (1) each column consists of $J$ ``ones'' (each column
has weight $J$); (2) each row consists of $L$ ``ones'' (each row has
weight $L$); (3) both $J$ and $L$ are small compared to the length
of the code $n$ and the number of rows in $H$.

We define a \emph{cycle} in $H$ to be of length $2s$ if there is an
ordered list of $2s$ matrix elements such that: (1) all $2s$
elements of $H$ are equal to 1; (2) successive elements in the list
are obtained by alternately changing the row or column only (i.e.,
two consecutive elements will have either the same row and different
columns, or the same column and different rows); (3) the positions
of all the $2s$ matrix elements are distinct, except the first and
last ones. We call the cycle of the shortest length the \emph{girth}
of the code.

Several methods of constructing good families of regular LDPC codes
have been proposed \cite{mackay99good,KLF01,Fossorier04}. However,
probably the easiest method is based on circulant permutation
matrices \cite{Fossorier04}, which was inspired by Gallager's
original LDPC construction. In the following, we will first review
several relevant properties of binary circulant matrices, and then
show the construction of this type of classical LDPC codes using
circulant matrices.

\subsection*{7.1.1 \hspace{2pt} Properties of binary circulant matrices}
\addcontentsline{toc}{subsection}{7.1.1 \hspace{0.15cm} Properties
of binary circulant matrices}

Let $M$ be an $r \times r$ circulant matrix over $\bbF_2$. We can
uniquely associate with $M$ a polynomial $M(X)$ with coefficients
given by entries of the first row of $M$. If
$\bc=(c_0,c_1,\cdots,c_{r-1})$ is the first row of the circulant
matrix $M$, then
\begin{equation}
\label{polyM} M(X)=c_0+c_1 X+c_2 X^2+\cdots+c_{r-1}X^{r-1}.
\end{equation}
Adding or multiplying two circulant matrices is equivalent to adding
or multiplying their associated polynomials modulo $X^r-1$. We now
give some useful properties of these matrices and polynomials.

\begin{Prop}
\label{iso} The set of binary circulant matrices of size $r \times
r$ forms a ring isomorphic to the ring of polynomials of degree less
than $r$: $\bbF_2[X]/\langle X^r-1\rangle$.
\end{Prop}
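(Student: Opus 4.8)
The plan is to exhibit an explicit ring isomorphism between $\bbF_2[X]/\langle X^r-1\rangle$ and the set of $r\times r$ circulant matrices over $\bbF_2$, with the class of the indeterminate $X$ corresponding to the cyclic shift matrix. First I would introduce the \emph{shift matrix} $P$, the circulant whose first row is $(0,1,0,\dots,0)$, so that $P_{ij}=1$ if and only if $j\equiv i+1 \pmod r$. Two elementary facts are needed: $P^r = I_{r\times r}$, and the powers $I,P,P^2,\dots,P^{r-1}$ are linearly independent over $\bbF_2$ (the matrix $P^k$ has its nonzero entries exactly on the $k$-th cyclic diagonal, so distinct powers have disjoint supports). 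The crucial structural observation is then that $M$ is circulant with first row $\bc=(c_0,\dots,c_{r-1})$ if and only if $M=\sum_{k=0}^{r-1} c_k P^k$; this is just the statement that the $k$-th row of a circulant is its first row shifted cyclically by $k$, which is precisely the effect of multiplying by $P^k$. In particular the set of circulants is exactly $\{\, f(P) : f\in\bbF_2[X],\ \deg f<r \,\}$, and it is therefore closed under addition and multiplication and is a ring.

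Next I would define the map
$$
\phi : \bbF_2[X]/\langle X^r-1\rangle \longrightarrow \{\text{$r\times r$ circulant matrices over }\bbF_2\}, \qquad \phi\big(\,\overline{f}\,\big) = f(P).
$$
It is well defined because $P^r-I=0$, so $f(P)$ depends only on the residue of $f$ modulo $X^r-1$, and it is a ring homomorphism because ``evaluation at $P$'', $\bbF_2[X]\to\bbF_2^{r\times r}$, is a ring homomorphism (it sends $1\mapsto I$ and respects sums and products) that factors through the quotient. Surjectivity of $\phi$ is exactly the structural observation of the previous paragraph. For injectivity, suppose $\phi(\overline f)=0$ with $\deg f<r$, say $f=\sum_{k=0}^{r-1} c_k X^k$; then $\sum_k c_k P^k=0$, and the linear independence of $I,P,\dots,P^{r-1}$ forces every $c_k=0$, i.e. $\overline f=0$. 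Hence $\phi$ is a bijective ring homomorphism, and tracing through the definitions shows its inverse sends a circulant with first row $\bc$ to the class of the polynomial $M(X)$ in (\ref{polyM}), which is the correspondence asserted in the statement.

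The only step carrying any content is checking that a product of circulants is again circulant and corresponds to polynomial multiplication modulo $X^r-1$; but by writing every circulant in the form $M(P)$ this collapses to the single identity $P^r=I$, so I do not anticipate a real obstacle — once the shift matrix is brought in, the remainder is routine bookkeeping with homomorphisms and linear independence.
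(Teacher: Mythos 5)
Your proof is correct, and it establishes precisely the correspondence $M \leftrightarrow M(X)$ that the paper sets up in the paragraph preceding the proposition; the paper itself states this proposition without proof, simply asserting that addition and multiplication of circulants match addition and multiplication of the associated polynomials modulo $X^r-1$. Your shift-matrix argument (writing every circulant as $\sum_k c_k P^k$ with $P^r=I$ and using the disjoint supports of $I,P,\dots,P^{r-1}$ for injectivity) is the standard and complete way to justify exactly that claim, so nothing is missing.
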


\begin{lemma}
\label{rank} Let $M(X)$ be the polynomial associated with the
$r\times r$ binary circulant matrix $M$. If $\gcd(M(X),X^r-1)=K(X)$,
and the degree of K(X) is $k$, then the rank of $M$ is $r-k$.
\end{lemma}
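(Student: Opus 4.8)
Let $M(X)$ be the polynomial associated with the $r\times r$ binary circulant matrix $M$. If $\gcd(M(X),X^r-1)=K(X)$, with $\deg K(X)=k$, then $\rank M = r-k$.

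The plan is to work entirely inside the ring $R = \bbF_2[X]/\langle X^r-1\rangle$, exploiting the isomorphism of Proposition \ref{iso}. I would identify $R$ with $\bbF_2^r$ as an $\bbF_2$-vector space (coefficient vectors), and observe that multiplication by $X$ on $R$ is exactly the cyclic shift; hence for a circulant $M$ with polynomial $M(X)$, the map $v\mapsto Mv$ corresponds to the $R$-linear map $\mu_{M(X)}: f(X)\mapsto M(X)f(X)$ on $R$. Therefore $\rank M = \dim_{\bbF_2} (M(X)\cdot R)$, the dimension of the principal ideal generated by $M(X)$ in $R$.

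The key algebraic fact I would establish is that the ideal $(M(X))$ in $R$ equals the ideal $(K(X))$, where $K(X)=\gcd(M(X),X^r-1)$. One inclusion is immediate since $K\mid M$ in $\bbF_2[X]$, so $M(X)\in (K(X))$. For the reverse, by Bézout in the PID $\bbF_2[X]$ there are polynomials $a(X),b(X)$ with $a(X)M(X)+b(X)(X^r-1)=K(X)$; reducing modulo $X^r-1$ gives $a(X)M(X)\equiv K(X)$ in $R$, so $K(X)\in(M(X))$. Hence $(M(X))=(K(X))$ in $R$.

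It then remains to compute $\dim_{\bbF_2}(K(X))$ in $R$. Here I would use that $K(X)\mid X^r-1$, write $X^r-1 = K(X)Q(X)$ with $\deg Q = r-k$, and show that $(K(X))$, viewed inside $R$, is precisely the set $\{K(X)g(X) \bmod (X^r-1): \deg g < r-k\}$, and that the polynomials $K(X), K(X)X, \ldots, K(X)X^{r-k-1}$ — all of degree $<r$ — are $\bbF_2$-linearly independent and span this ideal. Linear independence follows because a nontrivial dependence would give a nonzero polynomial of degree $<r-k$ annihilating $K(X)$ modulo $X^r-1$, forcing $K(X)\mid X^r-1$ to divide $K(X)\cdot(\text{something of degree}<r-k)$, i.e. $Q(X)$ divides that something — impossible by degree count unless it is zero. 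So $\dim_{\bbF_2}(K(X)) = r-k$, and thus $\rank M = r-k$.

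The main obstacle, modest as it is, is the bookkeeping in the last paragraph: one must be careful that reductions modulo $X^r-1$ do not collapse the claimed basis, which is exactly why the factor $Q(X) = (X^r-1)/K(X)$ enters and why the degree bound $r-k$ is the right one. Alternatively, one could shortcut this by citing the standard cyclic-code fact that the ideal generated by a divisor $K(X)$ of $X^r-1$ in $\bbF_2[X]/\langle X^r-1\rangle$ is a cyclic code of dimension $r-\deg K$; but since the excerpt does not state that, I would prove it directly as above.
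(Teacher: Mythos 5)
Your proof is correct, and it takes a genuinely different route from the paper's. The paper works directly with the rows of $M$: setting $L(X)=(X^r-1)/K(X)$, it uses the identity $L(X)M(X)\equiv 0 \pmod{X^r-1}$ to write down $k$ explicit linear relations among the rows, each of which (because the leading coefficient of $L$ is $1$) expresses one of the last $k$ rows $\br_{r-k},\dots,\br_{r-1}$ as a combination of earlier rows, and then concludes $\rank M = r-k$. As written, that argument only establishes the upper bound $\rank M\le r-k$; the linear independence of the remaining $r-k$ rows is asserted rather than proved. Your ideal-theoretic argument --- identifying the rank of $M$ with the dimension of the principal ideal $(M(X))=(K(X))$ in $\bbF_2[X]/\langle X^r-1\rangle$ via B\'ezout, and then exhibiting the basis $K(X),XK(X),\dots,X^{r-k-1}K(X)$ --- proves both inequalities at once, so it is actually the more complete of the two; it is the standard cyclic-code dimension computation, made explicit as you say. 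The one small convention point is your identification of $v\mapsto Mv$ with multiplication by $M(X)$: depending on how the circulant is indexed this map may instead be multiplication by the reciprocal polynomial, but since row rank equals column rank (the row space of $M$ is literally spanned by $X^iM(X)$, $i=0,\dots,r-1$) and the reciprocal polynomial has a gcd with $X^r-1$ of the same degree, this does not affect the argument.
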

\begin{proof}
Let $L(X)=(X^r-1)/K(X)$, and let $\bb\in(\bbZ_2)^r$ be the
coefficient vector associated with $L(X)$. Since the degree of
$L(X)$ is $r-k$, $b_i=0$ for $i>r-k$. It follows that
\begin{equation}
\label{mod}%
L(X)M(X)=0\ \text{mod}\ (X^r-1).
\end{equation}
If $\br_i$ is the $i$-th row of $M$, then (\ref{mod}) gives the
following $k$ linearly dependent equations:
\begin{equation}
\begin{split}
&b_0\br_0+b_1\br_1+\cdots+ b_{r-k} \br_{r-k} =0  \\
&b_0\br_1+b_1\br_2+\cdots+ b_{r-k} \br_{r-k+1} =0 \\
&\ \ \ \ \ \ \ \vdots \\
&b_0\br_{k-1}+b_1 \br_{k}+ \cdots +b_{r-k}\br_{r-1}=0.
\end{split}
\end{equation}
The set $\{\br_{r-k},\cdots,\br_{r-1}\}$ can therefore be expressed
as linear combinations of $\{\br_0,\cdots,\br_{r-k-1}\}$, and the
rank of $M$ is $r-k$.
\end{proof}

\begin{theorem}
\label{method1} Let $r=p\cdot q$, and let
$\bc=(c_0,c_1,\cdots,c_{r-1})$ be the first row of an $r\times r$
circulant matrix $M$. If $c_{i}$ is $1$ only when $i=0\ mod\ p$,
then rank$(M)=p$.
\end{theorem}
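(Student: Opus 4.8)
The plan is to reduce the claim to Lemma~\ref{rank} by identifying the polynomial attached to $M$ and computing its gcd with $X^r-1$ explicitly. First I would use Proposition~\ref{iso} to pass to polynomials: the hypothesis that $c_i=1$ precisely when $i\equiv 0 \pmod p$ says that the polynomial associated with $M$ is $M(X)=1+X^p+X^{2p}+\cdots+X^{(q-1)p}=\sum_{j=0}^{q-1}X^{jp}$, a monic polynomial of degree $p(q-1)$.

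Next I would exploit the geometric-series factorization over $\bbF_2$: since $X^r-1=X^{pq}-1=(X^p-1)\sum_{j=0}^{q-1}X^{jp}=(X^p-1)M(X)$, the polynomial $M(X)$ divides $X^r-1$. Hence $\gcd\bigl(M(X),X^r-1\bigr)=M(X)$, which has degree $k=p(q-1)=r-p$. Finally I would apply Lemma~\ref{rank} with $K(X)=M(X)$ and $k=r-p$, obtaining $\rank(M)=r-k=r-(r-p)=p$, which is the assertion.

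The only point needing a little care is the reading of the hypothesis: ``$c_i$ is $1$ only when $i\equiv 0\pmod p$'' must be taken to mean ``$c_i=1$ if and only if $i\equiv 0\pmod p$'', since otherwise the rank of $M$ is not pinned down by $p$ and $q$ alone. Granting this, the factorization above is an honest polynomial identity (valid already over $\bbZ$ and reduced mod $2$, where the minus signs are harmless), and nothing else is needed beyond the two circulant-matrix facts already established. So this theorem is essentially immediate once Lemma~\ref{rank} is in hand; I do not anticipate a genuine obstacle.
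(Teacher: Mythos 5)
Your proof is correct and follows essentially the same route as the paper: identify $M(X)=\sum_{j=0}^{q-1}X^{jp}$, observe $M(X)\mid(X^r-1)$ so that $\gcd(M(X),X^r-1)=M(X)$ has degree $r-p$, and conclude via Lemma~\ref{rank}. The only difference is that you make explicit the factorization $X^r-1=(X^p-1)M(X)$, which the paper merely asserts.
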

\begin{proof}
Let $M(X)=\sum_{i=0}^{q-1}X^{pi}$ be the polynomial associated with
$M$, with degree $r-p$. Since $M(X)|(X^r-1)$, the degree of
$K(X)=\gcd(M(X),X^r-1)=M(X)$ is also $r-p$. Therefore, by
lemma~\ref{rank}, the rank of $M$ is $p$.
\end{proof}

\begin{theorem}
\label{method2} Let $r=p\cdot q$, and let
$\bc=(c_0,c_1,\cdots,c_{r-1})$ be the first row of an $r\times r$
circulant matrix $M$. If $c_{i}$ is $1$ only when $i < p$, then
rank$(M)=r-p+1$.
\end{theorem}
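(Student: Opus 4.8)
The plan is to follow exactly the route used for Theorem~\ref{method1}: pass to the polynomial $M(X)$ associated with the circulant matrix $M$, compute $\gcd(M(X), X^r - 1)$, and then read off the rank directly from Lemma~\ref{rank}.

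First I would translate the hypothesis. Since $c_i = 1$ precisely for $0 \le i < p$, the associated polynomial is
$$M(X) = 1 + X + X^2 + \cdots + X^{p-1} = \frac{X^p - 1}{X - 1},$$
a monic polynomial of degree $p - 1$; in particular $(X - 1)\,M(X) = X^p - 1$. The key step is the divisibility $M(X) \mid (X^r - 1)$. Because $r = pq$, we have $X^r - 1 = (X^p)^q - 1$, which is divisible by $X^p - 1$; explicitly $X^r - 1 = (X^p - 1)\bigl(1 + X^p + X^{2p} + \cdots + X^{(q-1)p}\bigr)$. Combining this with $(X-1)\,M(X) = X^p - 1$ shows that $M(X)$ divides $X^r - 1$. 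Hence $K(X) := \gcd(M(X), X^r - 1) = M(X)$ — the gcd is exactly $M(X)$ since $M(X)$ is monic — so its degree is $k = p - 1$, and Lemma~\ref{rank} gives $\rank(M) = r - k = r - p + 1$, which is the claim.

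I do not anticipate any genuine difficulty here: once the factorization $X^r - 1 = (X - 1)\,M(X)\,\bigl(1 + X^p + \cdots + X^{(q-1)p}\bigr)$ is written down, the rest is immediate, and this is really the ``dual'' computation to Theorem~\ref{method1} (where $M(X) = (X^r-1)/(X^p-1)$ divides $X^r-1$ with degree $r-p$, giving rank $p$). The only minor point worth one sentence is the boundary case $p = 1$, where $M(X) = 1$, $M$ is the identity, and the formula correctly yields $\rank(M) = r$.
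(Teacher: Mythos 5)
Your proof is correct and follows exactly the same route as the paper: identify $M(X)=1+X+\cdots+X^{p-1}$, note that $M(X)\mid X^r-1$ so that $\gcd(M(X),X^r-1)=M(X)$ has degree $p-1$, and apply Lemma~\ref{rank}. The only difference is that you spell out the factorization $X^r-1=(X-1)M(X)\bigl(1+X^p+\cdots+X^{(q-1)p}\bigr)$ justifying the divisibility, which the paper simply asserts.
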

\begin{proof}
In this case, $M(X)=1+X+\cdots X^{p-1}$ has degree $p-1$. Since
$M(X)|X^r-1$, again by lemma~\ref{rank} the rank of $M$ is $r-p+1$.
\end{proof}

\begin{Cor}
\label{kappa} Let $r=p\cdot q$, and let
$\bc=(c_0,c_1,\cdots,c_{r-1})$ be the first row of an $r\times r$
circulant matrix $M$ such that the weight of $\bc$ is $p$. If
$M(X)|(X^r-1)$, then the rank $\kappa$ of $M$ is lower-bounded by
$r-p+1$.
\end{Cor}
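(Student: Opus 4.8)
The plan is to read the rank off Lemma~\ref{rank}. Since $M(X)\mid(X^r-1)$, the greatest common divisor $\gcd(M(X),X^r-1)$ is $M(X)$ itself, so in the notation of that lemma $k=\deg M(X)$. First I would apply Lemma~\ref{rank} to conclude $\kappa=r-\deg M(X)$, which reduces the entire question to estimating $\deg M(X)$. In particular the displayed bound $\kappa\ge r-p+1$ is exactly equivalent to the degree estimate $\deg M(X)\le p-1$, so the whole corollary stands or falls on whether the weight hypothesis forces the degree down to $p-1$.

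Next I would relate the weight of $\bc$ to $\deg M(X)$. Because $X^r-1$ has nonzero constant term and $M(X)$ divides it, we get $M(0)\ne 0$, so $c_0=1$ and the constant term is one of the $p$ nonzero coefficients. The $p$ ones occupy $p$ distinct positions in $\{0,1,\dots,r-1\}$, and the largest such position is precisely $\deg M(X)$; hence $\deg M(X)\ge p-1$, with equality iff the support of $\bc$ is $\{0,1,\dots,p-1\}$. Feeding this into $\kappa=r-\deg M(X)$ yields $\kappa\le r-p+1$.

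The hard part — in fact the decisive obstacle — is that this is the \emph{reverse} of the inequality one wants. The weight-$p$ hypothesis bounds $\deg M(X)$ only from below, so through Lemma~\ref{rank} it produces an \emph{upper} bound $\kappa\le r-p+1$, never a lower one; the stated $\kappa\ge r-p+1$ would instead force $\deg M(X)=p-1$, i.e. the ones of $\bc$ packed into positions $0,\dots,p-1$, which is exactly the configuration of Theorem~\ref{method2} where equality is attained. That the general lower bound fails shows up already at weight $p=2$: taking $r=2q$ and $\bc$ supported on $\{0,q\}$ gives $M(X)=1+X^q$, which divides $X^r-1=(X^q+1)^2$ over $\bbF_2$, so $\kappa=r-q=q<r-1=r-p+1$. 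I would therefore flag the direction of the bound as the point needing reconciliation: the argument from the weight hypothesis alone delivers $\kappa\le r-p+1$, and the corollary becomes correct precisely when read with ``upper-bounded'' in place of ``lower-bounded'', the stated $r-p+1$ being the tight value realized in the Theorem~\ref{method2} case.
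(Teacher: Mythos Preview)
Your analysis is correct, and in fact more careful than the paper's own proof. The paper argues exactly as you do in your first two paragraphs: since the weight of $\bc$ is $p$, the degree of $M(X)$ is at least $p-1$, and then one invokes Lemma~\ref{rank} (via Theorem~\ref{method2}). But the paper then writes ``the rank $\kappa$ is at least $r-p+1$'', repeating the sign error of the statement. As you point out, $\kappa = r - \deg M(X)$ together with $\deg M(X)\ge p-1$ gives only $\kappa \le r-p+1$; your counterexample $M(X)=1+X^{q}$ with $r=2q$, $p=2$ (an instance of the Theorem~\ref{method1} configuration with the roles of $p$ and $q$ swapped) shows the stated lower bound is genuinely false.

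That ``upper-bounded'' is the intended reading is confirmed by how the paper itself uses the result: in the proof of Theorem~\ref{bound} it invokes Corollary~\ref{kappa} to conclude $\kappa_{i,j}\le r-L+1$, i.e.\ precisely the upper bound you derive. So your diagnosis and proposed fix are right; the approach is the same as the paper's, but you have the inequality pointing the correct way and a counterexample to back it up.
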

\begin{proof}
Since the weight of $\bc$ is $p$, the lowest possible degree of
$M(X)$ is $p-1$. Then by the method of Theorem~\ref{method2}, the
rank $\kappa$ is at least $r-p+1$.
\end{proof}

\subsection*{7.1.2 \hspace{2pt} Classical quasi-cyclic LDPC codes}
\addcontentsline{toc}{subsection}{7.1.2 \hspace{0.15cm} Classical
quasi-cyclic LDPC codes}

\begin{Def}
A binary linear code $C(H)$ of length $n=r\cdot L$ is called a
quasi-cyclic (QC) code with period $r$ if any codeword which is
cyclically right-shifted by $r$ positions is again a codeword. Such
a code can be represented by a parity-check matrix $H$ consisting of
$r\times r$ blocks, each of which is an (in general different)
$r\times r$ circulant matrix.
\end{Def}

By the isomorphism mentioned in Prop.~\ref{iso}, we can associate
with each quasi-cyclic parity-check matrix $H\in\bbF_2^{Jr\times
Lr}$ a $J\times L$ polynomial parity-check matrix
$\bH(X)=[h_{j,l}(X)]_{j\in[J],l\in[L]}$ where $h_{j,l}(X)$ is the
polynomial, as defined in Eq.~(\ref{polyM}), representing the
$r\times r$ circulant submatrix of $H$, and the notation
$[J]:=\{1,2,\cdots,J\}$.

Generally, there are two ways of constructing  $(J,L)$-regular
QC-LDPC by using circulant matrices \cite{SV04}:
\begin{Def}
We say that a QC-LDPC code is Type-I if it is given by a polynomial
parity-check matrix $\bH(X)$ with all monomials. We say that a
QC-LDPC code is Type-II if it is given by a polynomial parity-check
matrix $\bH(X)$ with either binomials, monomials, or zero.
\end{Def}

\subsubsection*{7.1.2.1 \hspace{2pt} Type-I QC-LDPC}
\addcontentsline{toc}{subsubsection}{8.4.1.2 \hspace{0.15cm} Type-I
QC-LDPC}

To give an example, let $r=16$, $J=3$, and $L=8$.  The following
polynomial parity check matrix
\begin{equation}
\label{typeI}
\bH(X)=\left[\begin{array}{cccccccc}%
X & X & X & X & X & X & X & X \\
X^2 & X^5 & X^{3} & X^{5} & X^2 & X^5 & X^3 & X^5 \\
X^2 & X^3 & X^4 & X^5 & X^6 & X^7 & X^8 & X^9 \end{array}\right]
\end{equation}
gives a Type-I $(3,8)$-regular QC-LDPC code of length
$n=16\cdot8=128$. Later on, we will also express $\bH(X)$ by its
{\it exponent matrix} $H_E$. For example, the exponent matrix of
(\ref{typeI}) is
\begin{equation}
H_E=\left[\begin{array}{cccccccc}%
1 & 1 & 1 & 1 & 1 & 1 & 1 & 1 \\
2 & 5 & 3 & 5 & 2 & 5 & 3 & 5 \\
2 & 3 & 4 & 5 & 6 & 7 & 8 & 9\end{array}\right].
\end{equation}
The difference of arbitrary two rows of the exponent matrix $H_E$ is
defined as
\begin{equation}
\label{diff}%
\bd_{ij}=\bc_i-\bc_j=\left((c_{i,k}-c_{j,k}) \text{mod}\
r\right)_{k\in[L]},
\end{equation}
where $\bc_i$ is the $i$-th row of $H_E$ and $r$ is the size of the
circulant matrix. We then have
\begin{eqnarray*}
\bd_{21} &=& (1,4,2,4,1,4,2,4)\\
\bd_{31} &=& (1,2,3,4,5,6,7,8) \\
\bd_{32} &=& (0,14,1,0,4,2,5,4).
\end{eqnarray*}
We call an integer sequence $\bd=(d_0,d_1,\cdots,d_{L-1})$ {\it
multiplicity even} if each entry appears an even number of times.
For example, $\bd_{21}$ is multiplicity even, but $\bd_{32}$ is not,
since only $0$ and $4$ appear an even number of times. We call $\bd$
{\it multiplicity free} if no entry is repeated; for example,
$\bd_{31}$.

A simple necessary condition for Type-I $(J,L)$-regular QC-LDPC
codes to give girth $g\geq 6$ is given in \cite{Fossorier04}.
However, a stronger result (both sufficient and necessary condition)
is shown in \cite{HI07QLDPC}. We state these theorems  from
\cite{HI07QLDPC} without proof.
\begin{theorem}
\label{type-I1} A Type-I QC-LDPC code $C(H_E)$ is dual-containing if
and only if $\bc_i-\bc_j$ is multiplicity even for all $i$ and $j$,
where $\bc_i$ is the $i$-th row of the exponent matrix $H_E$.
\end{theorem}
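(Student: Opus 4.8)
The plan is to reduce the dual-containing property to the single matrix identity $HH^{T}=0$ over $\bbF_{2}$, and then translate this, block by block, into a statement about polynomials using the ring isomorphism of Proposition~\ref{iso}.

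First I would recall that a binary linear code $C(H)$ is dual-containing exactly when $\mathrm{rowspace}(H)\subseteq\mathrm{rowspace}(H)^{\perp}$, i.e.\ when every row of $H$ is orthogonal to every row of $H$, which is the same as $HH^{T}=0$. Writing the quasi-cyclic $H$ as a $J\times L$ array of $r\times r$ circulant blocks $M_{j,l}$, the $(i,j)$ block of $HH^{T}$ is $\sum_{l=1}^{L}M_{i,l}M_{j,l}^{T}$. The key elementary fact here is that the transpose of a circulant matrix with associated polynomial $m(X)$ is again circulant, with associated polynomial $m(X^{-1})\bmod(X^{r}-1)$; this is read off directly from the first rows. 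For a Type-I code $h_{j,l}(X)=X^{c_{j,l}}$ is a monomial, so $M_{j,l}^{T}$ corresponds to $X^{-c_{j,l}}$, and hence, under the isomorphism of Proposition~\ref{iso}, the $(i,j)$ block of $HH^{T}$ corresponds to
$$
\sum_{l=1}^{L}X^{c_{i,l}}\,X^{-c_{j,l}}=\sum_{l=1}^{L}X^{\,c_{i,l}-c_{j,l}}\pmod{X^{r}-1}.
$$

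Next I would observe that, over $\bbF_{2}$, a sum of monomials $\sum_{l}X^{e_{l}}$ vanishes modulo $X^{r}-1$ if and only if, after reducing each exponent $e_{l}$ modulo $r$, every residue in $\{0,1,\dots,r-1\}$ occurs an even number of times in the multiset $\{\,e_{l}\bmod r\,\}_{l}$: reducing mod $X^{r}-1$ collapses exponents mod $r$, and then the coefficient of each $X^{t}$ is just the parity of the number of $l$ with residue $t$. Applying this to the exponent multiset $\{\,c_{i,l}-c_{j,l}\bmod r\,\}_{l\in[L]}$, which is precisely the difference vector $\bd_{ij}=\bc_{i}-\bc_{j}$ of Eq.~(\ref{diff}), the $(i,j)$ block of $HH^{T}$ is zero if and only if $\bc_{i}-\bc_{j}$ is multiplicity even. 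Since $HH^{T}=0$ means all of its blocks vanish (the diagonal blocks $i=j$ giving the all-zero difference vector, whose $L$ entries have even multiplicity precisely when $L$ is even, a condition that is in any case forced by the off-diagonal blocks as soon as $J\geq2$), we conclude that $C(H_{E})$ is dual-containing if and only if $\bc_{i}-\bc_{j}$ is multiplicity even for all $i$ and $j$.

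The step I expect to need the most care is the bookkeeping around transposing a circulant block and the exponent reduction modulo $r$: one must keep straight that $M^{T}\leftrightarrow m(X^{-1})$ rather than $m(X)$, and that the paper's notion of \emph{multiplicity even} refers to the residues $c_{i,l}-c_{j,l}\bmod r$ from Eq.~(\ref{diff}), which is exactly the collapse of exponents induced by working in $\bbF_{2}[X]/\langle X^{r}-1\rangle$. Everything else is a direct application of Proposition~\ref{iso} and the definition of dual-containing.
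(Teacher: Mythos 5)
Your proof is correct, and it is essentially the argument the paper itself uses: the paper states Theorem~\ref{type-I1} without proof (citing \cite{HI07QLDPC}), but its proof of the Type-II generalization, Theorem~\ref{type-II-dual}, is exactly your computation --- reduce dual-containment to $HH^{T}=0$, pass to the polynomial blocks $\hat{h}_{i,j}(X)=\sum_{l}h_{i,l}(X)h^{t}_{l,j}(X)$ with $X^{k}\mapsto X^{r-k}$ for the transpose, and read off vanishing mod $X^{r}-1$ as evenness of the multiplicities of the residues $c_{i,l}-c_{j,l}\bmod r$. Your added remark that the off-diagonal conditions force $L$ even (so the diagonal blocks vanish automatically when $J\geq 2$) is a correct and worthwhile detail the paper does not spell out.
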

\begin{theorem}
\label{type-I3} A necessary and sufficient condition for a Type-I
QC-LDPC code $C(H_E)$ to have girth $g\geq 6$ is $\bc_{i}-\bc_j$ to
be multiplicity free for all $i$ and $j$.
\end{theorem}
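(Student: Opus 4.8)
The plan is to translate the girth condition into an elementary statement about the exponent matrix $H_E=[a_{j,l}]$ (entries read in $\bbZ_r$, where $r$ is the circulant size). Recall first that in the bipartite Tanner graph of a parity-check matrix the shortest possible cycle has length $4$, so $g\ge 6$ is the same as the \emph{absence of $4$-cycles}, i.e.\ there are no two distinct rows and two distinct columns of $H$ whose four common entries are all $1$. Since the code is Type-I, every $r\times r$ block of $H$ is a monomial circulant, hence a cyclic permutation matrix: the block in block-row $i$, block-column $l$ is $P^{a_{i,l}}$, with $P$ the basic cyclic shift. I would therefore index an ordinary row of $H$ by a pair (block-row $i$, offset $\sigma\in\bbZ_r$) and an ordinary column by a pair (block-column $l$, offset $\tau\in\bbZ_r$); with this bookkeeping $H$ has a $1$ in position $\big((i,\sigma),(l,\tau)\big)$ precisely when $\tau\equiv\sigma+a_{i,l}\pmod r$.

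Next I would show, by a short case analysis on the ``$1$'' equations, that every $4$-cycle has just one possible shape. If the two chosen columns lay in one block-column, or the two chosen rows in one block-row, then two of the four equations would force equal offsets, contradicting distinctness; in particular no $4$-cycle lies inside a single block, and any $4$-cycle uses distinct block-rows $i\ne j$ and distinct block-columns $l_1\ne l_2$. For such a configuration, let the row from block-row $i$ have offset $\sigma$ and the row from block-row $j$ have offset $\sigma'$. The column of block-column $l_t$ incident to $(i,\sigma)$ is then forced to have offset $\sigma+a_{i,l_t}$, and requiring it to be incident also to $(j,\sigma')$ gives $\sigma+a_{i,l_t}\equiv\sigma'+a_{j,l_t}\pmod r$ for $t=1,2$. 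The four vertices so produced are automatically distinct (the two row-vertices differ in block-row, the two column-vertices in block-column), so a $4$-cycle through $(i,j,l_1,l_2)$ exists iff this pair of congruences has a solution $(\sigma,\sigma')$, which holds iff
\[
 a_{j,l_1}-a_{i,l_1}\ \equiv\ a_{j,l_2}-a_{i,l_2}\pmod r .
\]

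Finally I would read off the equivalence. The vector $\bc_i-\bc_j$ has $l$-th entry $(a_{i,l}-a_{j,l})\bmod r$, so it is multiplicity free exactly when $a_{i,l_1}-a_{j,l_1}\not\equiv a_{i,l_2}-a_{j,l_2}\pmod r$ for all $l_1\ne l_2$ --- which, negating both sides, is precisely the failure of the displayed congruence. Hence ``$\bc_i-\bc_j$ is multiplicity free for every $i\ne j$'' is equivalent to ``$H$ has no $4$-cycle,'' i.e.\ to $g\ge 6$; the degenerate case $i=j$ need not be imposed, since the case analysis already rules out $4$-cycles confined to a single block-row. The argument runs in close parallel to that for Theorem~\ref{type-I1}: there the governing quantity is the mod-$2$ count of each value occurring in $\bc_i-\bc_j$ (which controls dual-containment of the CSS-type code), whereas here it is merely whether some value is repeated at all.

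None of the steps is deep; the only part demanding genuine care is the coordinate bookkeeping that ties blocks, offsets, rows and columns together, and the verification in every sub-case that the four candidate cycle vertices really are distinct. Once that is in place, each potential $4$-cycle collapses to a single linear congruence over $\bbZ_r$, and the passage to multiplicity-freeness is immediate.
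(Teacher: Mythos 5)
Your argument is correct. Note that the paper itself gives no proof of this theorem --- it is explicitly stated ``without proof,'' citing the reference for Theorem~\ref{type-I3} --- so there is nothing to compare against except the one-line justification the paper offers for the Type-II analogue (Theorem~\ref{type-II-free}), which merely asserts that multiplicity-freeness of $\bc_i-\bc_j$ is equivalent to the absence of $4$-cycles between layers $i$ and $j$. Your write-up supplies exactly the missing details: the row/column bookkeeping $\tau\equiv\sigma+a_{i,l}\pmod r$ for monomial circulant blocks, the elimination of $4$-cycles within a single block-row or block-column (which is what licenses ignoring the degenerate case $i=j$, where $\bc_i-\bc_i=\b0$ is never multiplicity free --- a wrinkle the theorem statement glosses over and you correctly resolve), and the reduction of each candidate $4$-cycle on $(i,j,l_1,l_2)$ to the single congruence $a_{j,l_1}-a_{i,l_1}\equiv a_{j,l_2}-a_{i,l_2}\pmod r$. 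Both directions of the equivalence are handled, so the proof stands as a complete substitute for the omitted one.
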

\begin{theorem}
\label{type-I2} There is no dual-containing Type-I QC-LDPC having
girth $g\geq 6$.
\end{theorem}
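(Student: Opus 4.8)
The plan is to derive Theorem \ref{type-I2} by directly combining the two ``iff'' characterizations already imported from \cite{HI07QLDPC}: Theorem \ref{type-I1}, which says a Type-I QC-LDPC code $C(H_E)$ is dual-containing exactly when every row difference $\bc_i-\bc_j$ of the exponent matrix is multiplicity even, and Theorem \ref{type-I3}, which says $C(H_E)$ has girth $g\ge 6$ exactly when every such $\bc_i-\bc_j$ is multiplicity free. So I would argue by contradiction: assume there is a Type-I QC-LDPC code that is both dual-containing and has girth $g\ge 6$.

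The key steps, in order, are: (1) apply Theorem \ref{type-I1} to get that $\bd_{ij}=\bc_i-\bc_j$ is multiplicity even for all row indices $i,j$; (2) apply Theorem \ref{type-I3} to get that the same $\bd_{ij}$ is multiplicity free for all $i\neq j$; (3) fix one pair $i\neq j$ and examine the integer sequence $\bd_{ij}=(d_0,\dots,d_{L-1})\in\{0,\dots,r-1\}^L$. ``Multiplicity free'' says every value of $\{0,\dots,r-1\}$ occurs at most once in $\bd_{ij}$; ``multiplicity even'' says every value occurs an even number of times; together these force every value to occur exactly zero times, i.e. the sequence $\bd_{ij}$ is empty, i.e. $L=0$. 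Since a QC-LDPC code of length $n=rL$ has $L\ge 1$, this is a contradiction, so no such code exists.

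The one point that needs care — and the only real obstacle — is guaranteeing that a genuine pair $i\neq j$ exists to feed into Theorem \ref{type-I3}; this is the degenerate regime $J=1$ (a single row-block of circulants). I would dispose of it by recalling that a $(J,L)$-regular LDPC code has $J\ge 2$ by definition (both $J$ and $L$ are small but nontrivial), so at least one pair $i\neq j$ is available; alternatively one can check the $J=1$ case by hand, where $C(H_E)$ is essentially trivial and its dual (having odd row weight $L$ in the interesting subcase) is not contained in it, so the theorem holds there as well. Apart from this bookkeeping the argument is immediate once Theorems \ref{type-I1} and \ref{type-I3} are granted, so I would keep the actual write-up to a few lines.
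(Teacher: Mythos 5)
Your proposal is correct and follows essentially the same route as the paper: the paper states Theorem \ref{type-I2} without proof (citing \cite{HI07QLDPC}), but its proof of the identical Type-II statement is exactly your argument — combine Theorem \ref{type-I1} (dual-containing iff multiplicity even) with Theorem \ref{type-I3} (girth $\geq 6$ iff multiplicity free) and observe the two conditions are incompatible for a nonempty difference sequence. Your extra care about the degenerate $J=1$ case is reasonable bookkeeping that the paper omits, but it does not change the substance of the argument.
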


\subsubsection*{7.1.2.2 \hspace{2pt} Type-II QC-LDPC}
\addcontentsline{toc}{subsubsection}{8.4.2.2 \hspace{0.15cm} Type-II
QC-LDPC}

Take $r=16$, $J=3$, and $L=4$. The following is an example of a
Type-II (3,4)-regular QC-LDPC code:
\begin{equation}
\label{typeII}
\bH(X) = \left[\begin{array}{cccc} X+X^4 & 0 & X^7+X^{10} & 0 \\ X^5 & X^6 & X^{11} & X^{12} \\
0 & X^2+X^{9} & 0 & X^7+X^{13} \end{array}\right].
\end{equation}
The exponent matrix of (\ref{typeII}) is
\begin{equation}
H_E = \left[\begin{array}{cccc} (1,4) & \infty & (7,10) & \infty \\
5 & 6 & 11 & 12 \\
\infty & (2,9) & \infty & (7,13) \end{array}\right].
\end{equation}
Here we denote $X^{\infty}=0$.

The difference of two arbitrary rows of $H_E$ is defined similarly
to (\ref{diff}) with the following additional rules: (1) if for some
entry $c_{i,k}$ is $\infty$, then the difference of $c_{i,k}$ and
other arbitrary term is again $\infty$; (2) if the entries $c_{i,k}$
and $c_{j,k}$ are both binomial, then the difference of $c_{i,k}$
and $c_{j,k}$ contains four terms. In this example, we have
\begin{eqnarray*}
\bd_{21} &=& \left((4,1),\infty,(4,1),\infty \right) \\
\bd_{31} &=& \left(\infty,\infty,\infty,\infty \right) \\
\bd_{32} &=& \left(\infty,(12,3),\infty,(11,1)\right) \\
\bd_{11} &=& \left((0,3,13,0),\infty,(0,3,13,0),\infty\right) \\
\bd_{22} &=& \left(0,0,0,0 \right) \\
\bd_{33} &=& \left(\infty,(0,9,7,0),\infty,(0,10,6,0)\right).
\end{eqnarray*}
The definition of {\bf multiplicity even} and {\bf multiplicity free
} is the same except that we do not take $\infty$ into account. For
example, $\bd_{32}$ is multiplicity free, since there is no pair
with the same entry except $\infty$. Unlike Type-I QC-LDPC codes
whose $\bd_{ii}$ is always the zero vector, $\bd_{ii}$ of Type-II
QC-LDPC codes can have non-zero entries. Therefore it is possible to
have cycles of length 4 in a single layer if $\bd_{ii}$ is not
multiplicity free. Each layer is said to be a set of rows of size
$r$ in the original parity check matrix $H$ that corresponds to the
row of $H_E$. For example, $\bd_{11}$ is multiplicity even,
therefore the first layer of this Type-II regular QC-LDPC parity
check matrix contains 4-cycles.

In the following, we will generalize theorems
\ref{type-I1}-\ref{type-I3} given in the previous section to include
the Type-II QC-LDPC case.
\begin{theorem}
\label{type-II-dual} $C(H_E)$ is a dual-containing Type-II regular
QC-LDPC code if and only if $\bc_i-\bc_j$ is multiplicity even for
all $i$ and $j$.
\end{theorem}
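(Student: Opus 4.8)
The plan is to translate the phrase ``dual-containing'' into a statement about the product $HH^{T}$ over $\bbF_2$, and then evaluate that product block by block using the circulant-polynomial dictionary of Proposition~\ref{iso}. First I would recall that $C(H)$ is dual-containing exactly when $\text{rowspace}(H)=C(H)^{\perp}\subseteq C(H)=\ker H$, which is equivalent to $HH^{T}=0$. Writing $H$ in its block form with $r\times r$ circulant blocks $H_{j,l}$, $j\in[J]$, $l\in[L]$, the $(j,j')$ block of $HH^{T}$ is $\sum_{l=1}^{L}H_{j,l}H_{j',l}^{T}$, which is again circulant. Under the isomorphism of Proposition~\ref{iso}, $H_{j,l}$ corresponds to $h_{j,l}(X)$, and the transpose $H_{j',l}^{T}$ corresponds to the reciprocal polynomial $h_{j',l}(X^{-1})\bmod(X^{r}-1)$ (transposing a circulant matrix reverses the cyclic order of its first row, i.e.\ sends $X\mapsto X^{-1}=X^{r-1}$). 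Hence $HH^{T}=0$ if and only if
\[
\sum_{l=1}^{L} h_{j,l}(X)\,h_{j',l}(X^{-1}) \equiv 0 \pmod{X^{r}-1}
\]
in $\bbF_2[X]$ for every pair $j,j'\in[J]$.

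Next I would expand each product. Since each $h_{j,l}$ is $0$, a monomial $X^{c_{j,l}}$, or a binomial $X^{a}+X^{b}$, the product $h_{j,l}(X)\,h_{j',l}(X^{-1})$ is a sum of at most four monomials $X^{u-v}\bmod(X^{r}-1)$, as $X^{u}$ ranges over the terms of $h_{j,l}$ and $X^{v}$ over the terms of $h_{j',l}$ (and vanishes if either block is zero). Collected over all $l$, the multiset of exponents, taken mod $r$, is precisely the difference vector $\bd_{jj'}=\bc_j-\bc_{j'}$, with exactly the conventions adopted before the theorem: an $\infty$ entry contributes nothing, and a column in which both entries are binomials contributes the four pairwise differences. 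Because the coefficient field is $\bbF_2$, this sum of monomials is the zero element of $\bbF_2[X]/\langle X^{r}-1\rangle$ if and only if every residue class in $\bbZ_r$ occurs with \emph{even} multiplicity among these exponents, i.e.\ if and only if $\bd_{jj'}$ is multiplicity even. The diagonal case $j=j'$ is handled by the same computation: for a binomial entry $X^{a}+X^{b}$ the self-product contributes $1+X^{a-b}+X^{b-a}+1$, whose two constant terms already cancel in pairs over $\bbF_2$, so $\bd_{jj}$ carries the two zeros and the two terms $\pm(a-b)$ prescribed by the convention — consistent with the worked examples (e.g.\ $\bd_{11}$ above).

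Combining these observations, $HH^{T}=0$ holds exactly when $\bc_i-\bc_j$ is multiplicity even for all $i,j$, which is the assertion. I do not expect a genuine obstacle; the one point needing care is the bookkeeping of the transpose — equivalently, the sign convention $\bc_i-\bc_j$ versus $\bc_j-\bc_i$, and the degenerate column types ($\infty$ entries, and columns mixing a monomial with a binomial, which contribute two differences) — but none of this affects the criterion, since multiplicity-evenness is invariant under negating all exponents mod $r$. Regularity of the code is not used anywhere in the argument; it only fixes the class of parity-check matrices under consideration.
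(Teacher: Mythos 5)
Your proof is correct and follows essentially the same route as the paper's: both reduce dual-containment to $HH^{T}=0$, pass to the polynomial block matrix $\hat{\bH}(X)=\bH(X)\bH(X)^{T}$ via the circulant isomorphism with the transpose acting as $X\mapsto X^{r-k}$ (i.e.\ $X^{-1}$), and identify the exponent multiset of each block entry with the difference vector $\bc_i-\bc_j$, so that vanishing over $\bbF_2$ is exactly multiplicity-evenness. Your write-up simply makes explicit several steps the paper leaves implicit (the equivalence with $HH^{T}=0$, the reciprocal-polynomial form of the transpose, and the handling of the degenerate column types).
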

\begin{proof}
Let $\bH(X)=[h_{j,l}(X)]_{j\in[J],l\in[L]}$ be the polynomial parity
check matrix associated with a Type-II $(J,L)$-regular QC-LDPC
parity check matrix $H$. Denote the transpose of $\bH(X)$ by
$\bH(X)^T=[h^t_{l,j}(X)]_{l\in[L],j\in[J]}$, and we have
\begin{equation}
\begin{split}
h^{t}_{l,j}(X)=\begin{cases} 0  & \text{if} \ h_{j,l}(X)=0 \\
X^{r-k} & \text{if} \ h_{j,l}(X)=X^k  \\
X^{r-k_1}+X^{r-k_2} & \text{if} \ h_{j,l}(X)=X^{k_1}+X^{k_2}
\end{cases}.
\end{split}
\end{equation}
Let $\hat{\bH}(X)=\bH(X)\bH(X)^T$, and let the $(i,j)$-th component
of $\hat{\bH}(X)$ be $\hat{h}_{i,j}(X)$. Then
\begin{equation}
\label{Hhat}%
\hat{h}_{i,j}(X)=\sum_{l\in[L]} h_{i,l}(X)h^t_{l,j}(X).
\end{equation}
The condition that $\bc_i-\bc_j$ is multiplicity even implies that
$\hat{h}_{i,j}(X)=0$ modulo $X^r-1$, and vice versa.
\end{proof}

\begin{theorem}
\label{type-II-free} A necessary and sufficient condition for a
Type-II regular QC-LDPC code $C(H_E)$ to have girth $g\geq 6$ is
that $\bc_{i}-\bc_j$ be multiplicity free for all $i$ and $j$.
\end{theorem}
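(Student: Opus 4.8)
The plan is to count the $4$-cycles of $H$ through the \emph{integer} matrix $HH^{T}$ --- the same object whose mod-$2$ reduction drove the proof of Theorem~\ref{type-II-dual} --- now keeping track of coefficients over $\mathbb{Z}$. Since every cycle in $H$ has even length, $g\geq 6$ is equivalent to the absence of a $4$-cycle, i.e. to the statement that no two distinct rows of $H$ both carry a $1$ in two common columns. Writing $(HH^{T})_{\rho\rho'}$ for the number of columns in which rows $\rho$ and $\rho'$ both have a $1$, this is exactly the condition $(HH^{T})_{\rho\rho'}\leq 1$ for all pairs of distinct rows $\rho\neq\rho'$.

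First I would unfold the block-circulant structure. Index a row of $H$ by $(i,a)$ with $i\in[J]$ the layer and $a\in\bbZ_r$ the offset within it, and a column likewise by $(l,b)$. With $h_{j,l}(X)$ and the transpose polynomials $h^{t}_{l,j}(X)$ from the proof of Theorem~\ref{type-II-dual}, the $(i,j)$ block of $HH^{T}$ is circulant with polynomial $\hat{h}_{i,j}(X)=\sum_{l\in[L]}h_{i,l}(X)h^{t}_{l,j}(X)$, computed over $\mathbb{Z}[X]/\langle X^{r}-1\rangle$. A short expansion shows that the coefficient of $X^{m}$ in $\hat{h}_{i,j}(X)$ equals the multiplicity $N_{ij}(m)$ of the value $m$ in the difference $\bc_i-\bc_j$ defined earlier (block-column $l$ contributing the values $p-q\bmod r$ for $p\in\mathrm{supp}(h_{i,l})$, $q\in\mathrm{supp}(h_{j,l})$, up to four of them when both blocks are binomial). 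Hence ``$\bc_i-\bc_j$ is multiplicity free'' means precisely that every off-diagonal entry of the $(i,j)$ circulant block of $HH^{T}$ is $\leq 1$; the only entries forced to be larger are the diagonal ones $N_{ii}(0)=\sum_{l}|\mathrm{supp}(h_{i,l})|=L$, produced by the trivial pairings $p-p=0$, and these correspond to a row meeting itself, never to a $4$-cycle, so they are rightly discarded in the multiplicity-free test.

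With this dictionary in hand the equivalence follows by a short case split on whether the two rows of a putative $4$-cycle lie in the same layer. If they lie in layers $i\neq j$, going around a $4$-cycle through columns $(l,b),(l',b')$ forces the common offset $a'-a$ to equal $p-q\equiv p'-q'\pmod r$ with $p\in\mathrm{supp}(h_{i,l})$ etc., i.e. a value repeated in $\bc_i-\bc_j$; conversely any repeated value yields such offsets, and the distinctness of the two rows (automatic, $i\neq j$) and the two columns is immediate. If the two rows lie in the same layer $i$, the identical computation forces a repeated \emph{nonzero} value among the entries of $\bc_i-\bc_i$, and conversely --- here one must also allow the two columns to fall in a single binomial block $X^{k_1}+X^{k_2}$, which happens exactly when $2(k_1-k_2)\equiv 0\pmod r$ and shows up as the nonzero value $k_1-k_2$ occurring twice inside one entry of $\bc_i-\bc_i$. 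Combining the two cases: $H$ has a $4$-cycle iff some $\bc_i-\bc_j$ is not multiplicity free, which is the claim.

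I expect the real work to be bookkeeping the degenerate configurations rather than any new idea. One must verify that a repeated \emph{nonzero} value always comes with two genuinely distinct rows \emph{and} two genuinely distinct columns --- so the repetition is not an artifact of a column or a row colliding with itself --- that the forced multiplicity-$L$ contribution of $0$ to $\bc_i-\bc_i$ is correctly excluded (this is the point that fails silently for Type-I, where $\bc_i-\bc_i$ is the all-zero vector), and that binomial blocks, the feature separating Type-II from Type-I, are handled in every sub-case: monomial--monomial, monomial--binomial, binomial--binomial, and the single-binomial-block cycle above. Once the $HH^{T}$ translation is set up, each of these reduces to a finite check.
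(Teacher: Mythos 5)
Your proof is correct and rests on the same underlying idea as the paper's, whose entire proof is the one-sentence assertion that multiplicity-freeness of $\bc_i-\bc_j$ is equivalent to the absence of $4$-cycles between layers $i$ and $j$; you supply the actual argument by identifying the multiplicities of values in $\bc_i-\bc_j$ with the integer entries of the $(i,j)$ circulant block of $HH^T$ and reading off $4$-cycles from entries $\geq 2$. Worth noting in your favor: your observation that the forced multiplicity-$L$ occurrence of $0$ in $\bc_i-\bc_i$ must be excluded from the test is a genuine subtlety that the statement and proof in the paper silently gloss over (read literally, $\bd_{ii}$ is never multiplicity free once $L\geq 2$, e.g.\ $\bd_{22}=(0,0,0,0)$ in the paper's own example even though that layer has no $4$-cycles), and your same-layer case with a single binomial block satisfying $2(k_1-k_2)\equiv 0 \pmod r$ covers a degenerate configuration the paper does not address at all.
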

\begin{proof}
The condition that $\bc_{i}-\bc_j$ is multiplicity free for all $i$
and $j$ guarantees that there is no 4-cycle between layer $i$ and
layer $j$, and vice versa.
\end{proof}

\begin{theorem}
There is no dual-containing QC-LDPC having girth $g\geq 6$.
\end{theorem}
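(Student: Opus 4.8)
The plan is to read the statement directly off Theorems~\ref{type-II-dual} and~\ref{type-II-free}, which between them characterise exactly which exponent matrices yield dual-containing codes and which yield codes of girth at least $6$. Note first that a Type-I QC-LDPC code is the special case of a Type-II one in which every block of $\bH(X)$ is a monomial, so it is enough to prove the assertion for Type-II regular QC-LDPC codes.

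So I would argue by contradiction: suppose $C(H_E)$ is a dual-containing Type-II regular QC-LDPC code with girth $g \geq 6$. By Theorem~\ref{type-II-dual} the difference sequence $\bc_i - \bc_j$ is multiplicity even for all $i$ and $j$, and by Theorem~\ref{type-II-free} it is multiplicity free for all $i$ and $j$; recall that in both notions the symbol $\infty$ is disregarded. The elementary observation that does all the work is that a sequence which is simultaneously multiplicity even and multiplicity free can contain no finite entry at all: any value that appeared would have to appear at most once (multiplicity free) and yet an even number of times (multiplicity even), and the only even integer that is at most $1$ is $0$. Hence $\bc_i - \bc_j$ consists entirely of $\infty$'s for every pair $(i,j)$.

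I would then specialise this to the diagonal. The sequence $\bd_{ii} = \bc_i - \bc_i$ must be all-$\infty$. But the Type-II differencing rules show that any column $l$ in which $h_{i,l}(X)$ is a monomial $X^k$ contributes the finite value $k-k = 0$ to $\bd_{ii}$, and any column in which $h_{i,l}(X)$ is a binomial $X^{k_1} + X^{k_2}$ contributes the four terms $0,\ k_1 - k_2,\ k_2 - k_1,\ 0$ (again including a finite $0$); only a zero block contributes $\infty$. So $\bd_{ii}$ being all-$\infty$ forces the entire $i$-th block-row of $\bH(X)$ to vanish, and since $i$ is arbitrary we conclude $\bH(X) \equiv 0$. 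That means $H$ carries no nonzero parity check, contradicting that $C(H_E)$ is a genuine QC-LDPC code, which finishes the proof.

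The argument has no genuinely hard step; it is a short corollary of the two preceding theorems together with the ``multiplicity even and multiplicity free implies empty'' remark. The one place calling for care is the diagonal bookkeeping: unlike in the Type-I case, $\bd_{ii}$ need not be the zero vector for a Type-II code, so one has to verify explicitly that ``$\bd_{ii}$ all-$\infty$'' is equivalent to ``row $i$ of $\bH(X)$ is identically zero'' — this is the step that converts the abstract ``multiplicity even and free'' conclusion into the concrete triviality of $H$.
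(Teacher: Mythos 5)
Your proof follows the same route as the paper's: the theorem is deduced directly from Theorems~\ref{type-II-dual} and~\ref{type-II-free}. In fact your version is slightly more careful than the paper's one-line argument, which silently assumes that ``multiplicity even'' and ``multiplicity free'' are outright incompatible; your observation that a sequence satisfying both must consist entirely of $\infty$'s, together with the diagonal bookkeeping showing that $\bd_{ii}$ being all-$\infty$ forces row $i$ of $\bH(X)$ to vanish (so a nontrivial code cannot arise), closes exactly the small gap the paper leaves implicit.
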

\begin{proof}
This proof follows directly from theorem \ref{type-II-dual} and
theorem \ref{type-II-free}. If the Type-II regular QC-LDPC code is
dual-containing, then by theorem \ref{type-II-dual}, $\bc_i-\bc_j$
must be multiplicity even for all $i$ and $j$. However, theorem
\ref{type-II-free} says that this QC-LDPC must contain cycles of
length 4.
\end{proof}

\subsection*{7.1.3 \hspace{2pt} Iterative decoding algorithm}
\addcontentsline{toc}{subsection}{7.1.3 \hspace{0.15cm} Iterative
decoding algorithm}

There are various methods for decoding classical LDPC codes
\cite{KLF01}. Among them, \emph{sum-product algorithm} (SPA)
decoding \cite{mackay99good} provides the best trade-off between
error-correction performance and decoding complexity. Before leaving
this section, we will review this SPA decoding procedure for
classical LDPC codes. It turns out that the same SPA decoding
algorithm can be used in the quantum case to decode the error
syndromes effectively.

Let $\bs,\br\in(\bbZ_2)^n$ be the encoded signal and the received
signal, respectively, such that
\begin{eqnarray}%
\inner{H}{\bs}&=&\b0^T, \\
\br&=&\bs+\bn,
\end{eqnarray}
where $\bn\in(\bbZ_2)^n$ is the noise vector introduced by the
binary symmetric channel, and $H$ is the parity check matrix. The
decoder's task is to infer $\bs$ based on the received signal $\br$
and the knowledge of the noise $\bn$. The \emph{optimal decoder},
also known as the maximally likelihood decoder, returns the encoded
signal $\bs$ that maximizes the \emph{posterior} probability
\begin{equation}%
\label{eq_ml}%
P(\bs|\br)=\frac{P(\br|\bs)P(\bs)}{P(\br)}.
\end{equation}
It is known that this optimal decoding is an NP-complete problem
\cite{BMT78}.

If we assume that the \emph{prior} probability of $\bs$ is uniform,
and the noise $\bn$ is independent of $\bs$, then it follows that
estimating the encoded signal $\bs$ is the same as estimating the
noise $\bn$. This is because once $\bn$ is known, then the encoded
signal is
$$\bs=\br+\bn.$$
We can further reduce the decoding problem to the task of finding
the most probable noise vector $\bn$ based on the error syndrome
vector $\bz$ since
\begin{equation}%
\label{eq_er}%
\bz^T=\inner{H}{\bn}=\inner{H}{\br}.
\end{equation}

Next, we will formally introduce the sum-product algorithm, also
known as a ``belief propagation algorithm'' \cite{Pearl88}. Assume
the parity check matrix $H$ is of size $m \times n$. The decoding
problem is to find a noise vector $\bn$ (given that $\bn$ is
independent of $\bs$) satisfying
$$\inner{H}{\bn}=\bz^\T.$$
The elements $\{n_i\}$, $i=1,2,\cdots,n$, are referred as
\emph{bits}, while the elements $\{z_j\}$, $j=1,2,\cdots,m$, are
referred as \emph{checks}. Together $\{n_i\}$ and $\{z_j\}$ form a
\emph{belief network}, and the network of checks and bits are a
\emph{bipartite graph}: bits only connect to checks and vice versa.

The algorithm presented below follows closely from
\cite{mackay99good}. The goal is to compute the marginal posterior
probability $P(n_i|\bz,H)$ for each $i$. Denote the set of bits that
participate in check $j$ by $\fN(j)=\{i: H_{ji}=1\}$. Denote the set
of checks in which bit $i$ participates by $\fM(i)=\{j: H_{ji}=1\}$.
Denote a set $\fN(j)$ with bit $i$ excluded by $\fN(j)\backslash i$.
Define the quantity $q_{ji}^x$ to be the probability that bit $i$ of
$\bn$ has the value $x\in\{0,1\}$, given the probability obtained
via checks other than check $j$, $\{r_{j'i}^x:j'\in\fM(i)\backslash
j\}$. Define the quantity $r_{ji}^x$ to be the probability of check
$j$ being satisfied if bit $i$ of $\bn$ is considered fixed at the
value $x$ and the other bits have a separable distribution given by
the probabilities $\{q_{ji'}:i'\in\fN(j)\backslash i\}$. These two
quantities $q_{ij}$ and $r_{ij}$ associated with each nonzero
element of $H$ are iteratively updated, and would produce the exact
marginal posterior probabilities of all the bits after a fixed
number of iterations if the bipartite graph defined by the matrix
$H$ contained no cycle \cite{Pearl88}. When cycles exist, the
algorithm produces inaccurate probabilities. However, the correct
marginal probabilities are not necessary as long as the decoding is
correct.

\textbf{Initialization.} Denote the prior probability that bit
$n_i=0$ by $p_i^0$, and $p_i^1=1-p_i^0$. Set $p_i^1=f$, where $f$ is
the crossover probability of binary symmetric channel. The variables
$q_{ji}^0$ and $q_{ji}^1$ are initialized to the value $p^0_i$ and
$p^1_i$ when $H_{ji}=1$.

\textbf{Horizontal step.}%
The procedure in the horizontal step of the algorithm is to run
through the checks $j$ and compute for each $i\in\yN(j)$ two
probabilities $r_{ji}^0$ and $r_{ji}^1$, where
\begin{eqnarray}
r_{ji}^0 &=& \sum_{n_{i'}:i'\in\yN(j)\backslash i} \left[
P\left(z_j|n_i=0,\{n_{i'}:i'\in\yN(j)\backslash i\}\right)
\prod_{i'\in\yN(j)\backslash i} q_{ji'}^{n_{i'}}\right], \\
r_{ji}^1 &=& \sum_{n_{i'}:i'\in\yN(j)\backslash i} \left[
P\left(z_j|n_i=1,\{n_{i'}:i'\in\yN(j)\backslash i\}\right)
\prod_{i'\in\yN(j)\backslash i} q_{ji'}^{n_{i'}}\right].
\end{eqnarray}
The quantity $r_{ji}^0$ is the probability of the observed value of
$z_j$ when $n_i$ is assumed to be $0$, given that the other bits
$\{n_{i'}: i' \in \yN(j)\backslash i\}$ have a separable
distribution given by the probabilities $\{q_{ji'}^0,q_{ji'}^1\}$.
The quantity $r^1_{ji}$ is defined similarly except $n_i$ is assumed
to be $1$.

\textbf{Vertical step.}%
The procedure in the vertical step of the algorithm is to take the
computed values of $r_{ji}^0$ and $r_{ji}^1$ and update the values
of the probabilities $q_{ji}^0$ and $q_{ji}^1$ for each j.
\begin{eqnarray}
q_{ji}^0 &=& \alpha_{ji} p^0_i \prod_{j'\in\yM(i)\backslash j}
r_{j'i}^0, \\
q_{ji}^1 &=& \alpha_{ji} p^1_i \prod_{j'\in\yM(i)\backslash j}
r_{j'i}^1,
\end{eqnarray}
where $\alpha_{ji}$ is chosen such that $q_{ji}^0+q_{ji}^1=1$.

\textbf{Decoding}%
The \emph{pseudoposterior} probabilities $q^0_i$ and $q^1_i$ are
calculated after each iteration of the horizontal and vertical
steps, where
\begin{eqnarray}%
q_{i}^0 &=& \alpha_{i} p^0_i \prod_{j\in\yM(i)}
r_{ji}^0, \\
q_{i}^1 &=& \alpha_{i} p^1_i \prod_{j\in\yM(i)} r_{ji}^1.
\end{eqnarray}%
These quantities are used to create a tentative decoding
$\hat{\bn}$. If $q_i^1>0.5$, $\hat{n}_i$ is set to 1. If $\hat{\bn}$
satisfies $\inner{H}{\hat{\bn}}=\bz^\T$, the decoding algorithm
stops. Otherwise, the algorithm repeats from the horizontal step. If
the number of iterations reaches some preset maximum number without
successful decoding, we declare a failure.

It has been shown that the performance of iterative decoding very
much depends on the cycles of shortest length \cite{Tanner81}---in
particular, cycles of length 4. These shortest cycles make
successive decoding iterations highly correlated, and severely limit
the decoding performance. Therefore, to use SPA decoding, it is
important to design codes without short cycles, especially cycles of
length 4.

The sum-product decoding algorithm can be directly applied to the
quantum codes constructed using the (generalized) CSS construction.
This is because the $Z$ errors and $X$ errors of a CSS-type quantum
code can be decoded separately. Therefore, decoding the quantum
errors is equivalent to using the SPA separately for each classical
code in the CSS construction (though this would throw away some
information about the correlations between $X$ errors and $Y$
errors).

\section*{7.2 \hspace{2pt} Quantum low-density parity-check codes}
\addcontentsline{toc}{section}{7.2 \hspace{0.15cm} Quantum
low-density parity-check codes}

The quantum versions of low-density parity-check codes
\cite{HI07QLDPC,MMM04QLDPC,COT05QLDPC,PC08QLDPC} are far less
studied than their classical counterparts. The main obstacle comes
from the dual-containing constraint of the classical codes that are
used to construct the corresponding quantum codes. While this
constraint was not too difficult to satisfy for relatively small
codes, it is a substantial barrier to the use of highly efficient
LDPC codes. However, with the entanglement-assisted formalism, such
constrains can be removed, and constructing quantum LDPC codes from
classical LDPC codes becomes transparent.

The second obstacle to constructing quantum LDPC codes comes from
the bad performance of the efficient decoding algorithm. Though the
SPA can be directly used to decode the quantum errors, the
performance of SPA decoding was severely limited by the many
4-cycles in the standard quantum LDPC codes. We show in this section
that using the entanglement-assisted formalism, we can completely
eliminate all the 4-cycles in the quantum LDPC codes. We will focus
on the quantum LDPC codes constructed from classical quasi-cyclic
LDPC codes, and demonstrate their performance using numerical
methods.

\subsection*{7.2.1 \hspace{2pt} Quantum quasi-cyclic LDPC codes}
\addcontentsline{toc}{subsection}{7.2.1 \hspace{0.15cm} Quantum
quasi-cyclic LDPC codes}

\label{III} It has been shown that any classical linear code can be
used to construct a corresponding entanglement-assisted quantum
error-correcting code.

In the following, we will consider conditions that will give us
$(J,L)$-regular QC-LDPC codes $C(H)$ with girth $g\geq 6$ and with
the rank of $H H^T$ as small as possible. In general, $\hat{\bH}(X)$
represents a square Hermitian matrix $\hat{H}$ with size $Jr\times
Jr$ that contains $J^2$ circulant $r\times r$ matrices represented
by $\hat{h}_{i,j}(X)$ as defined in (\ref{Hhat}). Next, we provide
two examples to illustrate two different ways of minimizing the rank
of the square Hermitian matrix represented by $\hat{\bH}(X)$.

The first method is to make the matrix $\hat{H}=HH^T$ become a
circulant matrix with a small rank. This can be achieved by choosing
$\bH(X)$ such that
\[
\hat{h}_{i,j}(X)=\hat{h}_{i+1,j+1}(X),
\]
for $i,j=0,1,\cdots,J-2.$ The rank $\kappa$ of $\hat{H}$ can then be
read off by lemma~\ref{rank}. If $\gcd(\hat{\bH}(X),X^{Jr}-1)=K(X)$,
and the degree of $K(X)=k$, then $\kappa=Jr-k$. Let's look at an
example of this type using a classical Type-I QC-LDPC code. Take
$r=16$, $J=3$, and $L=8$. The following polynomial parity check
matrix $\bH(X)$ gives the corresponding quantum QC-LDPC code with
length 128:
\begin{equation}
\label{ex1}
\bH(X)=\left[\begin{array}{cccccccc} X & X & X & X & X & X & X & X \\
X & X^2 & X^3 & X^4 & X^5 & X^6 & X^7 & X^8 \\
X & X^3 & X^5 & X^7 & X^9 & X^{11} & X^{13} & X^{15}
\end{array}\right].
\end{equation}
Then
\begin{equation}
\hat{h}_{i,j}(X)=\begin{cases}0, &\text{$i=j$},  \\
\sum_{k=0}^{7}X^k,  & i=j+1 \\ \sum_{k=0}^{7}X^{2k}, & i=j+2
\end{cases}
\end{equation}
It can be easily verified that $\hat{\bH}(X)$ represents a circulant
matrix, and the polynomial associated with $\hat{H}$ is
\[
\hat{\bH}(X) = X^{16}\left(\sum_{k=0}^{7}X^k\right) +
X^{32}\left(\sum_{k=0}^{7}X^{2k}\right).
\]
The degree of $\gcd(\hat{\bH}(X),X^{48}-1)=30$, therefore by
lemma~\ref{rank}, the number of ebits that were needed to construct
the corresponding quantum code is only 18. Actually, (\ref{ex1})
gives us a $[[128,48,6;18]]$ EAQECC, and we will refer to this
example as ``ex1'' later in section 7.3.

The second method is to minimize the rank of each circulant matrix
inside $\hat{H}$. Let the rank of the circulant matrix represented
by $\hat{h}_{i,j}(X)$ be $\kappa_{i,j}$. Let the rank of $\hat{H}$
be $\kappa$. Then
\begin{equation}
\label{u_rank}%
\kappa \leq \sum_{i=1}^{J} \max_{j\in[J]}{\kappa_{i,j}}.
\end{equation}
This upper bound is not tight for Type-I $(J,L)$-regular QC-LDPC
codes when $L$ is odd. This is because $\kappa_{i,i}=r$ for every
$i$. When $L$ is even, we have $\kappa_{i,i}=0$ for every $i$. We
can obtain a tighter upper bound for $\kappa$ by carefully choosing
the exponents of $\bH(X)$ such that the degree of
$\gcd(\hat{h}_{i,j}(X),X^r-1)$ is as large as possible for every $i$
and $j$.
\begin{theorem}
\label{bound} Given a Type-I $(J,L)$-regular QC-LDPC code with
$\bH(X)$, if $L$ is even and $\gcd(\hat{h}_{i,j}(X),X^r-1)>1$ for
$i\neq j$, then the rank $\kappa$ is upper bounded by $J(r-L+1)$.
\end{theorem}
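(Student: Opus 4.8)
The plan is to bound $\kappa=\rank(\hat H)$ by feeding the block-row estimate $(\ref{u_rank})$, namely $\kappa\le\sum_{i=1}^{J}\max_{j\in[J]}\kappa_{i,j}$, with two inputs: the diagonal circulant blocks of $\hat H=HH^{T}$ contribute nothing when $L$ is even, and each off-diagonal block contributes at most $r-L+1$.

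First I would dispose of the diagonal blocks. For a Type-I parity-check matrix every $h_{i,l}(X)=X^{c_{i,l}}$ is a monomial, so by the transpose rule $h^{t}_{l,i}(X)=X^{r-c_{i,l}}$ and hence $h_{i,l}(X)h^{t}_{l,i}(X)\equiv X^{r}\equiv 1\pmod{X^{r}-1}$. Summing the $L$ terms of $(\ref{Hhat})$ gives $\hat h_{i,i}(X)\equiv L\equiv 0$ because $L$ is even, so the block $\hat H_{i,i}$ is the zero matrix and $\kappa_{i,i}=0$. Consequently $\max_{j\in[J]}\kappa_{i,j}=\max_{j\neq i}\kappa_{i,j}$ for each $i$.

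Next comes the off-diagonal estimate, which is the heart of the argument. For $i\neq j$ the polynomial $\hat h_{i,j}(X)=\sum_{l\in[L]}X^{(c_{i,l}-c_{j,l})\bmod r}$ is a sum of $L$ monomials; under the girth $g\ge 6$ design (Theorem \ref{type-I3}: $\bc_i-\bc_j$ multiplicity free) there is no cancellation, so $\hat h_{i,j}(X)$ has weight exactly $L$ and therefore degree at least $L-1$. The hypothesis that $\gcd(\hat h_{i,j}(X),X^{r}-1)>1$ — read in the intended sense that the exponents have been chosen so that $\hat h_{i,j}(X)$ itself divides $X^{r}-1$ — lets me apply Lemma \ref{rank} with $\gcd(\hat h_{i,j}(X),X^{r}-1)=\hat h_{i,j}(X)$, giving $\kappa_{i,j}=r-\deg\hat h_{i,j}(X)\le r-(L-1)=r-L+1$. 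Corollary \ref{kappa} shows this is sharp, the extremal circulant being the one with polynomial $1+X+\cdots+X^{L-1}$ as in Theorem \ref{method2}. Assembling the pieces, $(\ref{u_rank})$ together with $\kappa_{i,i}=0$ and $\kappa_{i,j}\le r-L+1$ yields $\kappa\le\sum_{i=1}^{J}(r-L+1)=J(r-L+1)$.

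The main obstacle is precisely this middle step: a merely nontrivial gcd only forces $\kappa_{i,j}\le r-1$ and hence $\kappa\le J(r-1)$, so to reach $r-L+1$ one genuinely needs both the weight-$L$ structure of $\hat h_{i,j}(X)$ and the stronger divisibility $\deg\gcd(\hat h_{i,j}(X),X^{r}-1)\ge L-1$. I would therefore spend most of the effort verifying that the "carefully chosen exponents" of $\bH(X)$ can indeed be arranged so that every off-diagonal $\hat h_{i,j}(X)$ divides $X^{r}-1$, and checking the borderline situation where $L$ does not divide $r$, in which the length-$L$ all-ones polynomial need not divide $X^{r}-1$ and one must instead exhibit a suitable weight-$L$ divisor of $X^{r}-1$ of degree $\ge L-1$.
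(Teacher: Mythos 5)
Your proposal follows essentially the same route as the paper: the paper likewise feeds the block-row bound (\ref{u_rank}) with $\kappa_{i,i}=0$ for even $L$ and bounds each off-diagonal block by applying Corollary \ref{kappa} to the weight-$L$ circulant $\hat h_{i,j}$, giving $\kappa_{i,j}\leq r-L+1$. Your worry about the hypothesis is well-founded but does not separate you from the paper --- its one-line proof silently invokes Corollary \ref{kappa}, which needs $\hat h_{i,j}(X)\mid X^r-1$ and no cancellation among the $L$ monomials, not merely $\gcd(\hat h_{i,j}(X),X^r-1)>1$, so the stronger reading you adopt is exactly the one the paper itself uses.
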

\begin{proof}
Let $\hat{h}_{i,j}$ be the circulant matrix associated with the
polynomial $\hat{h}_{i,j}(X)$, then the weight of the coefficient
vector of $\hat{h}_{i,j}$ is $L$. By Corollary~\ref{kappa},
$\kappa_{i,j} \leq r-L+1$. Therefore
\[
\kappa\leq \sum_{i=1}^{J} \max_{j \in[J]} \kappa_{i,j} \leq
J(r-L+1).
\]
\end{proof}

Our second example comes from a classical Type-II QC-LDPC code.
Again take $r=16$, $J=3$, and $L=8$. The following polynomial parity
check matrix $\bH(X)$ gives the corresponding quantum QC-LDPC code
with length 128:
\begin{equation}
\label{ex2}%
\bH(X) = \left[\begin{array}{cccccccc} X+X^2 & 0 & X+X^4 & 0 & X+X^6 & 0 & X+X^8 & 0 \\
X^5 & X^5 & X^6 & X^6 & X^7 & X^7 & X^8 & X^8 \\
0 & X+X^2 & 0 & X+X^4 & 0 & X+X^6 & 0 & X+X^8 \end{array}\right].
\end{equation}
Then
\begin{equation}
\hat{h}_{i,j}(X) = \begin{cases}0, &(i,j)=(2,2),(1,3),\text{or} (3,1) \\
\sum_{k=0}^{7}X^{1+2k}, &(i,j)=(1,1),(3,3) \\
\sum_{k=0}^{7}X^k, &(i,j)=(2,1),(2,3)
\end{cases}
\end{equation}
In this example, each layer of the matrix $\hat{\bH}(X)$ has rank
less than $9$. Actually, (\ref{ex2}) gives a $[[128,48,6;18]]$
quantum QC-LDPC code, and we will refer to this example as ``ex2''
in section 7.3.

\section*{7.3 \hspace{2pt} Performance}%
\label{IV}%
\addcontentsline{toc}{section}{7.3 \hspace{0.15cm} Performance}

In this section, we compare the performance of the QLDPC codes given
in Sec.~7.2 to conventional (dual-containing) QLDPC codes that have
been derived in the existing literature.  The easiest way of
constructing a QLDPC is the following technique, proposed by MacKay
et al. in \cite{MMM04QLDPC}.  Take an $n/2 \times n/2$ cyclic matrix
$C$ with row weight $L/2$, and define
\[
H_0=[C,C^T].
\]
Then we delete some rows from $H_0$ to obtain a matrix $H$ with $m$
rows. It is easy to verify that $H$ is dual-containing. Therefore by
the CSS construction, we can obtain conventional QLDPC codes of
length $n$. The advantage of this construction is that the choice of
$n,m$, and $L$ is completely flexible; however, the column weight
$J$ is not fixed. We picked $n=128$, $m=48$, and $L=8$, and called
this quantum LDPC code ``ex-MacKay.''

The second example of constructing a conventional QLDPC is described
in the following theorem \cite{HI07QLDPC}:
\begin{theorem}
\label{HIcont}%
Let $P$ be an integer which is greater than 2 and $\sigma$ an
element of $\bbZ_P^*:=\{z:z^{-1} \text{exists}\}$ with
$ord(\sigma)\neq |\bbZ_P^*|$, where
$ord(\sigma):=\min\{m>0|\sigma^m=1\}$ and $|X|$ means the
cardinality of a set $X$. If we pick any $\tau\in\bbZ_P^*=
\{1,\sigma,\sigma^2,\cdots\}$, define
\begin{eqnarray*}
c_{j,l}&:=& \begin{cases} \sigma^{-j+l} & 0\leq l < L/2
\\-\tau\sigma^{j-1+l} & L/2\leq l < L \end{cases} \\
d_{k,l}&:=& \begin{cases} \tau\sigma^{-k-1+l} & 0\leq l<L/2 \\
-\sigma^{k+l} &L/2\leq l < L \end{cases},
\end{eqnarray*}
and define the exponent matrix $H_C$ and $H_D$ as
\[
H_C=[c_{j,l}]_{j\in[J],l\in[L]},\ \ H_D=[d_{k,l}]_{k\in[K],l\in[L]},
\]
where $L/2=ord(\sigma)$ and $1\leq J,K \leq L/2$, then $H_C$ and
$H_D$ can be used to construct quantum QC-LDPC codes with girth at
least 6.
\end{theorem}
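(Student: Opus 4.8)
The plan is to verify two things about the exponent matrices $H_C$ and $H_D$: (i) each defines a Type-I, all-monomial, $(J,L)$-regular (resp.\ $(K,L)$-regular) QC-LDPC code of length $n=PL$ whose Tanner graph has girth $\geq 6$, and (ii) the two binary parity-check matrices are orthogonal, $H_C H_D^{\mathrm T}=0$ over $\bbF_2$, so that the CSS construction applied to $C(H_C)$ and $C(H_D)$ produces a \emph{conventional} (non-entanglement-assisted) quantum code whose $Z$-type and $X$-type Tanner graphs are exactly those of $H_C$ and $H_D$. Throughout take the circulant size $r=P$, write $G=\langle\sigma\rangle\subseteq\bbZ_P^*$, and use $|G|=\mathrm{ord}(\sigma)=L/2$, so $\sigma^{L/2}=1$ and $G=\{1,\sigma,\dots,\sigma^{L/2-1}\}$; the hypothesis $\mathrm{ord}(\sigma)\neq|\bbZ_P^*|$ says $G$ is a \emph{proper} subgroup, which is precisely what lets us choose $\tau\in\bbZ_P^*\setminus G$ (this is the intent of the statement's ``pick any $\tau$''). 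All entries of $H_C,H_D$ are $\pm$ powers of $\sigma$, hence nonzero in $\bbZ_P$, so these are genuine all-monomial exponent matrices, and the codes are low-density because each binary row has weight $L$ while each column has weight $J$ (resp.\ $K$).

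For the girth I invoke Theorem~\ref{type-I3}: $C(H_C)$ has girth $\geq 6$ iff $\bc_i-\bc_j$ is multiplicity free for all $i\neq j$. Fix $i\neq j$ in $\{1,\dots,J\}$; since $J\leq L/2=|G|$ we have $i\not\equiv j\pmod{L/2}$, so $a:=\sigma^{-i}-\sigma^{-j}\neq 0$. Computing $\bc_i-\bc_j$ block by block and collapsing exponents with $\sigma^{L/2}=1$: on the first $L/2$ coordinates one gets $\{\sigma^{\ell}a:0\le\ell<L/2\}=aG$, and on the last $L/2$ coordinates, after pulling out the common unit and using $\sigma^{i}-\sigma^{j}=-\sigma^{i+j}a$, one gets $\tau a G$. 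Hence, as a multiset, $\bc_i-\bc_j = aG\sqcup\tau a G$: two cosets of $G$, disjoint because $\tau\notin G$, each consisting of $L/2$ distinct elements --- so $\bc_i-\bc_j$ is multiplicity free. The same computation (with the two blocks interchanged) applies to $H_D$, so both codes have girth $\geq 6$.

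For the CSS compatibility, recall from the proof of Theorem~\ref{type-II-dual} that transposition of a circulant corresponds to replacing the associated monomial $X^{d}$ by $X^{-d}\bmod(X^P-1)$; thus $H_C H_D^{\mathrm T}=0$ over $\bbF_2$ is equivalent to requiring that for all block indices $i,k$ the multiset $\{\,c_{i,\ell}-d_{k,\ell}:0\le\ell<L\,\}$ be multiplicity even in $\bbZ_P$. Repeating the coset bookkeeping, the first block contributes the coset of $\sigma^{-i}-\tau\sigma^{-k-1}$ and the second the coset of $\sigma^{k}-\tau\sigma^{i-1}$; factoring each ($\sigma^{-i}-\tau\sigma^{-k-1}= -\sigma^{-k-1}(\tau-\sigma^{k+1-i})$ and $\sigma^{k}-\tau\sigma^{i-1}= -\sigma^{i-1}(\tau-\sigma^{k+1-i})$, with $\sigma^{-k-1},\sigma^{i-1}\in G$) shows both equal the single coset of $\tau-\sigma^{k+1-i}$, which is nonzero since $\tau\notin G$. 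So every value in that coset occurs exactly twice --- multiplicity even --- and $H_C H_D^{\mathrm T}=0$. The CSS construction then yields a quantum QC-LDPC code with $Z$-checks $H_C$ and $X$-checks $H_D$, and by the previous paragraph each Tanner graph has girth $\geq 6$.

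The only real obstacle is bookkeeping: keeping exponents reduced modulo $P$ while freely using $\sigma^{L/2}=1$, tracking the signs (all entries are $\pm\sigma^{\bullet}$, and $-1$ need not lie in $G$, though it cancels out of the coset identities), and correctly turning ``transpose of a circulant'' into exponent negation. A secondary point to state cleanly is why $i\equiv j\pmod{L/2}$ cannot occur --- this is exactly the constraint $J,K\le L/2$ --- and why $\tau$ must be taken outside $\langle\sigma\rangle$: if $\tau\in G$ then every difference becomes multiplicity even, so $C(H_C)$ would be dual-containing (Theorem~\ref{type-I1}) and hence have $4$-cycles.
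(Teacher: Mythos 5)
First, be aware that the thesis does not actually prove this theorem: it is quoted from \cite{HI07QLDPC} and immediately instantiated with $(J,L,P,\sigma,\tau)=(3,8,15,2,3)$, so there is no in-paper argument to compare yours against. Your skeleton is the natural one (girth of each classical code via Theorem~\ref{type-I3}, plus $H_C H_D^{\T}=0$ over $\bbF_2$ for CSS compatibility), and your orthogonality half is sound; indeed it is more robust than you claim, since the two half-rows of $\bc_i-\bd_k$ contribute literally the same multiset $-(\tau-\sigma^{k+1-i})\langle\sigma\rangle$, so their union is multiplicity even with no invertibility needed.

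The girth half has a genuine gap. Writing $G=\langle\sigma\rangle$ and $a=\sigma^{-i}-\sigma^{-j}$, you assert that $\bc_i-\bc_j=aG\sqcup \tau aG$ is multiplicity free because $a\neq 0$, each of $aG$ and $\tau aG$ has $L/2$ distinct elements, and they are disjoint since $\tau\notin G$. All three conclusions silently cancel $a$, i.e.\ they assume $a$ is a unit of $\bbZ_P$. But $a$ is a \emph{difference} of units, and for composite $P$ (which the hypothesis ``$P$ an integer greater than $2$'' permits, and which the paper's own instance $P=15$ uses) such a difference can be a zero divisor, so $ag=ag'$ or $ag=\tau ag'$ can occur with $g\neq g'$ and $\tau\notin G$. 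This is not hypothetical: in the displayed example ($P=15$, $\sigma=2$, $\tau=3$), take rows $1$ and $3$ of $H_C$ in Eq.~(\ref{HcHd}). Then $a=\sigma^{-1}-\sigma^{-3}=8-2=6$, and $\tau a=18=3=6\cdot 8=a\sigma^{3}$, so $\tau aG=aG$ and $\bc_1-\bc_3=(12,9,3,6,12,9,3,6)$ is multiplicity \emph{even}, not multiplicity free; by Theorem~\ref{type-I3} this $H_C$ contains $4$-cycles (one sits in block rows $1,3$ and block columns $1,5$, where both differences equal $12$). So the step fails, and in fact no proof of the statement as transcribed can succeed: one needs an additional hypothesis --- e.g.\ $P$ prime, or an explicit requirement that products such as $(\sigma^{m}-1)(\sigma^{m'}-1)$ and $(\sigma^{m}-1)(\sigma^{m'}-\tau)$ never vanish modulo $P$ --- before the coset argument you sketch becomes valid. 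Your final remark about why $\tau$ must lie outside $\langle\sigma\rangle$ is correct as far as it goes, but the same zero-divisor phenomenon shows that $\tau\notin\langle\sigma\rangle$ is necessary yet not sufficient.
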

Here, we pick the set of parameters $(J,L,P,\sigma,\tau)$ to be
$(3,8,15,2,3)$. The exponent matrices $H_C$ and $H_D$ described in
theorem \ref{HIcont} are
\begin{eqnarray}
\label{HcHd}
H_C &=& \left[\begin{array}{cccccccc}%
1 & 2 & 4 & 8 & 6 & 12 & 9 & 3 \\
8 & 1 & 2 & 4 & 12 & 9 & 3 & 6 \\
4 & 8 & 1 & 2 & 9 & 3& 6 & 12
\end{array}\right] \\
H_D &=& \left[\begin{array}{cccccccc}%
9 & 3 & 6 & 12 & 14 & 13 & 11 & 7 \\
12& 9 & 3 & 6 & 13 & 11 & 7 & 14 \\
6 & 12 & 9 & 3 & 11 & 7 & 14 & 13
\end{array}\right],
\end{eqnarray}
and by the CSS construction, it will give a $[[120,38,4]]$ quantum
QC-LDPC code. We will call this code ``ex-HI''.

\begin{figure}[htbp]
    \includegraphics[width=0.9\textwidth]{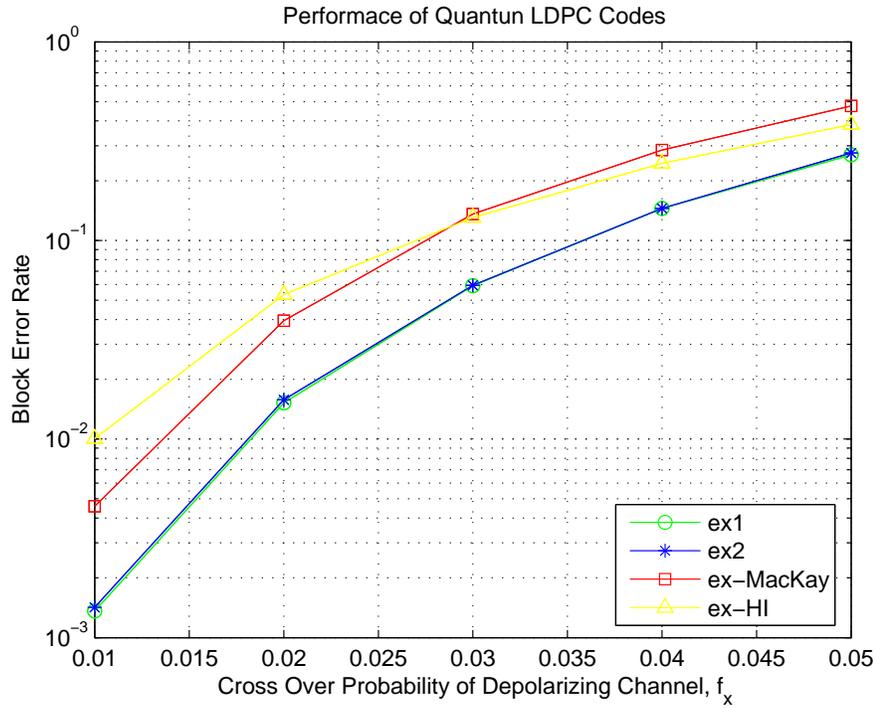}
  \caption{Performance of QLDPC with SPA decoding, and 100-iteration}
  \label{fig}
\end{figure}
%

We compare the performance of our examples in section 7.2.1 with
these two dual-containing quantum LDPC codes in figure \ref{fig}. In
the simulation, we assume the depolarizing channel and use of
sum-product decoding algorithm. The performances of ex1 and ex2 do
not differ much. This is not surprising, since these two codes have
similar parameters. The reason that the performance of ex-MacKay is
worse than our two examples is because there are so many 4-cycles in
ex-MacKay. These cycles impair the decoding performance of
sum-product algorithm. Our entanglement-assisted quantum QC-LDPC
codes also outperform the quantum QC-LDPC code of ex-HI, since the
classical QC-LDPC codes used to construct our examples have better
distance properties than the classical QC-LDPC of ex-HI. This
simulation result is also consistent with our result in
\cite{BDH06}: better classical codes give better quantum codes. Even
though the parameters are not exactly the same, our codes have
higher rate than the code rate of ex-HI.

It is not difficult to verify that the girth of ex1 is 6, and the
girth of ex2 is 8. We numerically investigated the performance of
these two examples with various numbers of iterations. According to
our simulation results, the performance of ex1 and ex2 is almost the
same. The result agrees with the classical result in
\cite{Fossorier04} showing that the increase of girth from 6 to 8 is
not of great help. The result is quite interesting since it implies
that we do not need to worry about constructing QLDPC with higher
girth.

\section*{7.4 \hspace{2pt} Conclusions}\label{V}%
\addcontentsline{toc}{section}{7.4 \hspace{0.15cm} Conclusions}

There are two advantages of Type-II QC-LDPCs over Type-I QC-LDPCs.
First, according to \cite{SV04} certain configurations of Type-II
QC-LDPC codes have larger minimum distance than Type-I QC-LDPC.
Therefore, we can construct better quantum QC-LDPCs from classical
Type-II QC-LDPC codes. Second, it seems likely that Type-II QC-LDPCs
will have more flexibility in constructing quantum QC-LDPC codes
with small amount of pre-shared entanglement, because of the ability
to insert zero submatrices. However, further investigation of this
issue is required.

By using the entanglement-assisted error correction formalism, it is
possible to construct EAQECCs from any classical linear code.  We
have shown how to do this for two classes of quasi-cyclic LDPC codes
(Type-I and Type-II), and proven a number of theorems that make it
possible to bound how much entanglement is required to send a code
block for codes of these types. Using these results, we have been
able to easily construct examples of quantum QC-LDPC codes that
require only a relatively small amount of initial shared
entanglement, and that perform better than previously constructed
dual-containing QLDPCs. Since in general the performance of quantum
codes follows directly from the performance of the classical codes
used to construct them, and the evidence of our examples suggests
that the iterative decoders can also be made to work effectively on
the quantum versions of these codes, this should make possible the
construction of large-scale efficient quantum codes.


\bibliographystyle{plain}
\bibliography{../../Ref}

\end{document}